\documentclass[runningheads]{llncs}

\usepackage[T1]{fontenc}

\usepackage{amssymb}
\usepackage{amsmath}

\usepackage{amsfonts}

\usepackage{amsthm}
\usepackage{mathtools}

\usepackage{stmaryrd}
\usepackage{xcolor, soul}

\usepackage{graphicx}
\usepackage{bbding}
\usepackage{pifont}
\usepackage{paralist}
\usepackage{enumitem}
\usepackage{lineno}
\usepackage{wrapfig}
\usepackage{xspace}
\usepackage[shortcuts]{extdash}
\usepackage{anyfontsize}
\usepackage{multicol}
\usepackage{blindtext}

\usepackage[yyyymmdd,hhmmss]{datetime}
\usepackage{hyperref}
\usepackage{cleveref} 
\usepackage{thmtools} 
\usepackage{thm-restate}
\DeclareMathAlphabet{\mymathbb}{U}{bbold}{m}{n}

\usepackage{tikz}
\usetikzlibrary{arrows,positioning,shapes,decorations,automata,backgrounds,petri,fit,calc,shapes.multipart,decorations.text,calc,arrows.meta}
\usepackage{relsize}

\newcounter{Reqs}
\Roman{Reqs}





\counterwithin{figure}{section}

\counterwithin{theorem}{section}
\counterwithin{lemma}{section}
\counterwithin{corollary}{section}
\counterwithin{proposition}{section}
\counterwithin{exercise}{section}
\counterwithin{definition}{section}
\counterwithin{conjecture}{section}
\counterwithin{example}{section}
\counterwithin{remark}{section}
\counterwithin{note}{section}
\counterwithin{figure}{section}

\AtBeginDocument{%

}

\newcommand{\fdpa}{\textsc{d}DPA}


\tikzset{fontscale/.style = {font=\relsize{#1}}}

\newcommand{\commentout}[1]{}

\newcommand{\aggregator}[1]{\ensuremath{\textsf{#1}\xspace}}
\renewcommand{\sup}{\aggregator{Sup}}
\renewcommand{\inf}{\aggregator{Inf}}
\renewcommand{\limsup}{\aggregator{LimSup}}
\renewcommand{\liminf}{\aggregator{LimInf}}
\newcommand{\limsupavg}{\aggregator{LimSupAvg}}
\newcommand{\liminfavg}{\aggregator{LimInfAvg}}
\newcommand{\limavg}{\aggregator{LimAvg}}

\newcommand{\val}{\aggregator{Val}}
\newcommand{\valrbst}{\aggregator{Rbst}}

\newcommand{\true}{\textsc{t}}
\newcommand{\false}{\textsc{f}}
\newcommand{\inft}{\textsl{inft}}
\newcommand{\noval}{\textrm{-}}

\newcommand{\ltlG}{\ensuremath{\textbf{G}}}

\newcommand{\score}{\ensuremath{\textsl{score}}}

\newcommand{\avgscore}{\ensuremath{\textsl{avgscore}}}

\newcommand{\lcolor}{\ensuremath{\textsl{color}}}

\newcommand{\icolors}{\ensuremath{\textsl{colors}}}

\newcommand{\any}{any}

\newcommand{\arop}{\hookrightarrow}

\newcommand{\gtrbst}{\mathrel{\rhd}}

\newcommand{\sema}[1]{\ensuremath{\llbracket#1\rrbracket}}
\newcommand{\aut}[1]{\mathcal{#1}}

\newcommand{\truerobustness}{\text{\tiny{\FourStar}}}

\newcommand{\term}[1]{\textcolor{brown}{\textsl{#1}}}
\newcommand{\mathterm}[1]{\textcolor{brown}{{#1}}}

\newcommand{\col}[1]{\ensuremath{\textsl{#1}}}

\newcommand{\alphabetfont}[1]{\textsc{#1}}
\newcommand{\green}{\alphabetfont{g}}	
\newcommand{\red}{\alphabetfont{r}}		
\newcommand{\yellow}{\alphabetfont{y}}
\newcommand{\black}{\alphabetfont{b}}
\newcommand{\white}{\alphabetfont{w}}

\newcommand{\Sigmasys}{\widehat\Sigma}
\newcommand{\Sigmaprop}{\Sigma}

\newcommand{\lassovalue}{\ensuremath{\mu_{\aut{A}}^{\circlearrowright}}}
\newcommand{\prefixvalue}{\ensuremath{\mu_{\aut{A}}^{\leadsto}}}
\newcommand{\bestrecom}{\ensuremath{\Sigma_{\aut{A}}^\leadsto}}
\newcommand{\finerecom}{\ensuremath{\Sigma_{\aut{A}}^\circlearrowright}}
\newcommand{\alg}[1]{\ensuremath{\mathbf{#1}}}
\newcommand{\simple}{\textsl{simple}}
\newcommand{\trail}{\textsl{trail}}

\newcommand{\signal}[1]{\texttt{#1}}
\newcommand{\sigqone}{\signal{QUE1}}
\newcommand{\sigqtwo}{\signal{QUE2}}

\tikzset{pyellow/.style={preaction={ 
draw,yellow,-, 
double=yellow,
line width=1.1pt,
}}}
\tikzset{pgreen/.style={preaction={ 
draw,green!60,-, 
line width=1.1pt,
double=green!60,
}}}
\tikzset{pred/.style={preaction={ 
draw,magenta!60,-, 
double=magenta!60,
line width=1.1pt,
}}}

\tikzset{pblack/.style={preaction={ 
draw,gray!80,-, 
double=gray!80,
line width=1.1pt,
}}}

\tikzset{pwhite/.style={preaction={ 
draw,gray!20,-, 
double=gray!20,
line width=1.1pt,
}}}

\tikzset{pblue/.style={preaction={ 
draw,blue!25,-, 
double=blue!25,
line width=1.3pt,
}}}

\tikzset{porange/.style={preaction={ 
draw,orange!50,-, 
double=orange!50,
line width=1.3pt,
}}}

\tikzset{pbrown/.style={preaction={ 
draw,brown!50,-, 
double=brown!50,
line width=1.3pt,
}}}

\tikzset{pteal/.style={preaction={ 
draw,teal!50,-, 
double=teal!50,
line width=1.3pt,
}}}

\tikzset{
    side by side/.style 2 args={
    line width=5pt, -,
    #2, 
    postaction={
        clip,postaction={draw,#1}
        }
    }
}

\tikzset{
ministate/.style={align=center,
        state,inner sep=2pt, minimum size=3pt},
}

\tikzset{
smallstate/.style={align=center,
        state,inner sep=5pt, minimum size=5pt},
}

\tikzset{
nostate/.style={align=center,
        label,inner sep=2pt, minimum size=3pt},
}

\begin{document}

\title{Runtime Consultants}

\titlerunning{Runtime Consultants}

\newcommand{\orcidicon}[1]{\href{https://orcid.org/#1}{\includegraphics[width=10pt]{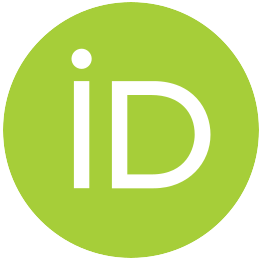}}}
\author{Dana Fisman\orcidicon{0000-0002-6015-4170} \and
Elina Sudit\thanks{ 
Supported by ISF grant 2507/21 and Frankel Center for Computer Science, BGU.}\orcidicon{0009-0009-6187-6894}}

\authorrunning{D. Fisman and E. Sudit}
%
\institute{Ben-Gurion University\\
\email{dana@bgu.ac.il}  \quad
\email{elinasu@post.bgu.ac.il}}
\maketitle              

\begin{abstract}
In this paper we introduce the notion of a \emph{runtime consultant}. A runtime consultant is defined with respect to some value function on infinite words. Similar to a runtime monitor, it runs in parallel to an execution of the system and provides inputs at every step of the run. While a runtime monitor alerts when a violation occurs, the idea behind a consultant is to be \emph{pro-active} and provide recommendations for which action to take next in order to avoid violation (or obtain a maximal value for quantitative objectives). It is assumed that a \emph{runtime-controller} can take these recommendations into consideration.
The runtime consultant does not assume that its recommendations are always followed. Instead, it adjusts to the actions actually taken (similar to a vehicle navigation system). We show how to compute a runtime consultant for common value functions used in verification, and that almost all have a runtime consultant that works in constant time. 
We also develop consultants for $\omega$-regular properties, under both their classical Boolean semantics and their recently proposed quantitative interpretation.
\end{abstract}

\section{Introduction}\label{sec:intro}
In \emph{runtime verification}~\cite{lee1999runtime,BarringerGHS04,FalconeHR13,Maler16,BartocciDDFMNS18,HavelundP23} 
the system under test is monitored   against its formal specification, so that if and when it can be deduced that the current execution of the system violates the specification, an alert can be issued and measures to prevent further escalation can be taken. In many cases, feedback that the specification has been violated might  be too late in the sense that preventing further escalation is meaningless as the bad thing has already happened. For this reason, a recent thread of research in runtime verification concerns predictive analysis, where one tries to provide indication for possible failures as early as possible~\cite{ZhangMPL11,OmerP23,ZhaoHFDL24}. 
Taking this a step forward, in this work we introduce the notion of a \emph{runtime consultant}.

The idea of a \emph{runtime consultant} is to generalize the notion of a runtime monitor so that instead of alerting when a problem occurred, it provides a recommendation of which actions are preferable to take next in order to satisfy the property (and so avoid violation before it occurs). In this setting it is assumed that a \emph{runtime controller} has some control over the next action and can take the consultant's recommendation into account. The consultant, like a GPS navigation system, does not assume its recommendations are always followed. Instead, it adjusts along the way and simply recommends the best actions from the current point of the run.

One motivation for a runtime consultant (RC) is to have an entity that can assist in systems involving human decision-making by offering proactive, real-time recommendations toward achieving qualitative or quantitative goals. In clinical settings, for instance during surgery or emergency care, an RC might suggest immediate actions such as adjusting drug dosage, initiating a procedure, or switching 
protocols in response to changes in vital signs—helping ensure guideline adherence, patient stability, and risk minimization. In pilot-assistance 
systems or semi-autonomous vehicles, the RC can recommend maneuvers like decelerating, changing lanes, or rerouting, guided by qualitative goals (e.g., avoiding collisions, complying with airspace rules) and quantitative ones (e.g., minimizing fuel use or turbulence). 
In all cases, the RC continuously adapts to the operator’s actual decisions, much like a navigation system recalibrates after a missed turn. This paper lays the theoretical foundations for a runtime consultant, considering common value functions used in formal verification.

The definition of a runtime consultant assumes some value function by which some runs are determined better than others. The value function can be qualitative (Boolean) or quantitative (with many and even infinitely many values). 
Based on the value function and the current state of the execution, the runtime consultant gives recommendation on which action to take next in order to achieve the best value in the long run (given the history of the execution so far).

In this work we formally define the notion of a runtime consultant  for a given value function. We assume higher values are preferable. Two types of runtime consultants are proposed: the $\leadsto$-RC that gives recommendations for obtaining the highest achievable value in the overall system and the $\circlearrowright$-RC that gives recommendations for obtaining the highest value achievable on an execution that loops back to  the current state. We say that a runtime consultant is \emph{strong} if it makes no arbitrary choices in its recommendation; that is, it recommends \emph{all} actions from the current state that can ultimately result in the highest achievable value. Otherwise we term it \emph{weak}.

We prove some general properties on arbitrary RCs as well as Boolean RCs (i.e. RCs for a Boolean value function). 
We examine the widely used limit operators \sup, \inf, \limsup, \liminf\ and \limavg\ 
as value functions; and show how to calculate  runtime consultants for them. 
With regard to $\circlearrowright$-RC we show that for $\sup$ and $\limsup$ it is impossible to construct a strong $\circlearrowright$-RC in the general case, but for all other limit operators it is possible. With regard to $\leadsto$-RC only for $\inf$ it is possible to construct a strong RC. 
The construction of these RCs can be done in polynomial time for all operators except for $\limavg$ in the case of $\circlearrowright$-RC, for which we show the problem is coNP-complete.

The construction of the RC is done at the preprocessing step. The interesting parameter is the runtime of the constructed RC. To keep up with the monitored run, it is desirable that the running time would be constant. 
Regarding the runtime of the RCs for the above operators, the picture is as follows. They are constant in all cases but the ones for $\sup$ and $\inf$ in which they are logarithmic in the size of the maximal weight in the graph, which is also quite reasonable. 

We then turn to discuss RCs for $\omega$-regular properties, given by some parity automaton. 
We show that in the qualitative setting strong RCs may not exist in the general case; the {preprocessing} time is polynomial; and the running time is constant as desired. 

However, strong runtime consultants can be obtained for $\omega$-regular properties if we adopt a quantitative rather than qualitative value function. The quantitative interpretation of $\omega$-regular properties introduced in~\cite{FismanS25} defines a value function, $\valrbst_L$, which assigns each lasso word $w$ a robustness score $\valrbst_L(w)$, reflecting how robust $w$ is with respect to the property $L$. The range of $\valrbst_L(w)$ is infinite, distinguishing accepted words from rejected ones, while also capturing nuances based on both the periodic and transient parts of the word (with greater weight on the periodic part). We show that strong RCs can be constructed for this quantitative robustness function, with a coNP-complete preprocessing step. Once constructed, the RC operates in constant time, as required.

\subsubsection*{Related Work}
The concept of a runtime consultant  bears some resemblance to runtime enforcement mechanisms such as \emph{enforcement monitors} (EMs)~\cite{Schneider00,FalconeMFR11,PinisettyPTJFM16} \cite{LanotteMM20,HubletLBKT24}, but introduces several key differences. 
An EM is designed to handle untrusted programs by actively modifying their outputs—through suppression, termination, or insertion—to ensure compliance with a specification. In contrast, an RC acts passively: it does not interfere with the system’s actions but issues recommendations for the next step, which the system may or may not follow.

Technically, an EM observes a sequence of outputs $\sigma_1,...,\sigma_i$ produced by the system and eventually emits a modified sequence $\sigma'_1,...,\sigma'_j$ that satisfies the specification 
while minimally deviating from the original sequence. 
An RC, by contrast, on observing the same prefix $\sigma_1,...,\sigma_i$ outputs a set of recommended next actions $\Sigma_{next}$, from which it is recommended to choose the next action $\sigma_{i+1}$ in order to achieve correct or optimal behavior.

While EMs were originally conceived for enforcing security properties in sequential programs, they are less suitable for reactive systems, where interventions like buffering, suppression, or delaying actions are often infeasible. Shield synthesis was introduced to address this gap by adapting enforcement to reactive settings, enforcing safety properties without delays or skipped steps~\cite{BloemKKW15}.

However, both EMs and shield synthesis are fundamentally limited to safety properties~\cite{FalconeFM08,LigattiR10,DolzhenkoLR15}. While predictive EM can handle some liveness properties~\cite{PinisettyPTJFM16}, it requires partial knowledge of the system’s language—effectively treating the system as a gray-box. 
In contrast, runtime consultants are applicable to both safety and liveness properties, and operate in a fully black-box setting: they make no assumptions about the system’s structure or behavior. Moreover, RCs naturally generalize beyond Boolean specifications, supporting quantitative value functions—a capability that, to our knowledge, has not been addressed by prior runtime enforcement approaches.

For space restrictions some proofs are deferred to the appendix. 

\section{Preliminaries}\label{sec:prelim}
The set of all infinite words over $\Sigma$ is denoted $\Sigma^\omega$. A word $w{\in}\Sigma^\omega$ of the form $w{=}u(v)^\omega$ for some $u{\in}\Sigma^*$ and $v{\in}\Sigma^+$ is termed a \term{lasso word}.
We say that $u$ is the \term{spoke} of the lasso word and $v$ is its \term{period}. 
For the lasso word $u(v)^\omega$ we denote the pair $(u,v)$ for $u{\in}\Sigma^*$ and $v{\in}\Sigma^+$ as the \term{representation} of the lasso word. 

$\mathbb{B}$ abbreviates $\{\true,\false\}$. 
For $i\leq j$, we write $[i..j]$ for the set $\{i,i{+}1,...,j\}$.
For a word $v=\sigma_1\sigma_2...\sigma_m$, the infix of $v$ that starts at $\sigma_i$ and ends at $\sigma_j$ inclusive is termed $v[i..j]$. Similarly, the prefix (resp. suffix) of $v$ that ends (resp. starts) in $\sigma_i$ inclusive is termed $v[..i]$ (resp. $v[i..]$). 
The notation $x\preceq y$ (resp. $x\prec y$) denotes that $x$ is a prefix (resp. proper prefix) of $y$.

Since we are interested in both quantitative and qualitative (Boolean) value functions we introduce a computational model that generalizes both.
A \term{quantitative automaton} (abbr. \term{quatomaton}) $\aut{A}$ is a tuple $(\Sigma,Q,q_0,\delta,\kappa,\alpha)$ where $\Sigma$ is an alphabet, $Q$ a finite set of states, $q_0\in Q$ the initial state, 
and $\alpha:(\mathbb{T}_{\delta}\times\mathbb{T}_{\kappa})^\omega\to\mathbb{T}$ is the objective. 
A run of $\aut{A}$ on a word $w=\sigma_1\sigma_2\ldots\in\Sigma^\omega$ is 
a sequence  $\rho=q_0\xrightarrow[t_1,d_1]{\sigma_1}q_1\xrightarrow[t_2,d_2]{\sigma_2}q_2\cdots$ 
such that $\delta(q_{i-1},\sigma_i)=(q_i,t_i)$ and $\kappa(q_{i})=d_i$. The value of the run $\rho$ is $\alpha((t_1,d_1)(t_2,d_2)\cdots)$.
The value that $\aut{A}$ gives $w$, denoted $\mathterm{\sema{\aut{A}}(w)}$, is the value of the unique run of $\aut{A}$ on $w$. 
We say that $\aut{A}$ \mathterm{implements} $\val:\Sigma^\omega\to\mathbb{T}$ if for every $w\in\Sigma^\omega$ we have $\sema{\aut{A}}(w)=\val(w)$.
For a finite word $u\in\Sigma^*$ we use $\aut{A}(u)$ to denote the state that $\aut{A}$ arrives at on reading $u$.

A \term{labeled weighted graph} is a quantitative automaton where $\mathbb{T}_{\kappa}$ is a singleton and thus plays no role. 
A \term{parity automaton} is a quantitative automaton where $\mathbb{T}_{\kappa}$ is a finite set of integers, $\mathbb{T}_{\delta}$ is a singleton (hence plays no role), $\mathbb{T}=\mathbb{B}$
and $\alpha((\noval,d_1),(\noval,d_2),\ldots)$ is $\true$ iff $\min\inft(d_1,d_2,\ldots)$ is even,
where $\mathterm{\inft(d_1,d_2,\ldots)}$ is the set of items occurring infinitely often in the sequence, namely $ \{ d \in \mathbb{T}_{\kappa} ~|~ \forall i\!\in\!\mathbb{N}.\ \exists j\!>\!i.\ d_j=d\}$.
We use DPA for deterministic parity automata. Note that DPA can express all $\omega$-regular languages~\cite{PerrinPinBook}.

We associate with $\aut{A}$ its automaton graph $G{=}(V,E)$ where $V{=}Q$ and 
$E=\{(v,v')~|~\allowbreak\delta(v,\sigma){=}(v',t)$ for some $\sigma {\in} \Sigma, t{\in}\mathbb{T}_\delta\}$. 
A subset $C$ of vertices (i.e., states) is a \term{strongly connected component} (SCC) if there exists a non-empty path between any two vertices in $C$. 
An SCC is a \term{maximal strongly connected component} (MSCC) if there is no SCC $C'$ such that $C{\subsetneq} C'$. Every run of an automaton $\aut{A}$ on a word $w$ eventually stays within a single MSCC (i.e. visits no state outside of this MSCC). We term this MSCC the \term{final} MSCC of $w$ wrt $L$. A (possibly non-simple) path (resp. cycle) $\rho{=}e_1e_2{\ldots} e_m$ for edges $e_i{\in} E$ is termed a \term{trail} (resp. \term{cyclic trail}) if no edge occurs more than once (i.e., $e_i{\neq} e_j$ for $i,j{\in}[1..m]$, $i{\neq} j$).  

The following value functions on infinite words are defined wrt a weighted labeled graph. 
The \term{supremum} (resp. \term{infimum}), abbreviated $\sup$ (resp. $\inf$), value of an infinite word $w$ is
$\sup: \Sigma^\omega \to \mathbb{R}$  for $\mathrm{sup}_{i\in\mathbb{N}}\{\theta(w,i)\}$ (resp. $\inf: \Sigma^\omega \to \mathbb{R}$  for $\mathrm{inf}_{i\in\mathbb{N}}\{\theta(w,i)\}$) where $\theta(w,i)$ is the weight of the corresponding graph edge that the run visits by reading the $i$-th letter of $w$.
The \term{limit supremum} (resp. \term{limit infimum}), abbreviated $\limsup$ (resp. $\liminf$), value of an infinite word $w$ 
is defined as $ \lim_{n \to\infty} \mathrm{sup}_{i\geq n}\{\theta(w,i)\}$ (resp. $\lim_{n \to\infty} \mathrm{inf}_{i\geq n}\{\theta(w,i)\}$). 
Limit average value functions on infinite words, $\liminfavg$ and $\limsupavg$ for short, are defined as $\limsup_{n\to\infty}\frac{1}{n}\sum_{i=1}^{n} \theta(w,i)$ and 
$\liminf_{n\to\infty}\frac{1}{n}\sum_{i=1}^{n} \theta(w,i)$. Since we consider lasso words and on lasso words  $\liminfavg$ and $\limsupavg$ coincide, we do not distinguish between them and simply use $\limavg$.

\section{Definition of a consultant}\label{sec:def-consultant}
As mentioned in the introduction, a runtime consultant can be thought of as extending the idea of a runtime monitor --- instead of alerting when something bad has happened which might be too late, a runtime consultant should be proactive and  recommend what action to take next, in order to avoid violation, and in the long run satisfy the property.
Similar to a runtime monitor for a given property $L$, a runtime consultant bases its decision taking into account the history of the run so far. Clearly, a runtime consultant should recommend taking actions that avoid violating the property, but given a value function on $L$, it can do more than that and try to direct the run towards  not only satisfying $L$, but doing it better, where \emph{better} means obtaining a larger value according to a given value function $\mu$. 
In the next sections we discuss runtime consultants for value functions for common limit operators.  
We further discuss the runtime consultant for a new value function that captures the notion of \emph{robustness}~\cite{FismanS25}, so larger will coincide with being more robust. 
Since we would like to apply the notion of a runtime consultant to both quantitative and qualitative properties, we define them wrt a quantitative automaton that generalizes both notions. 

Since modeling all system detail is often infeasible, we assume
the runtime consultant has only a partial view of the system's behavior. Formally, we assume the alphabet of the property or value function is $\Sigmaprop$, 
the alphabet of the system is $\Sigmasys$ and 
there exists a mapping $h:\Sigmasys^*\to\Sigmaprop^*$ translating 
system executions into traces relevant to the property. 
E.g. it could be that $\Sigmasys=\Sigmaprop\cup\Sigma'$ (or $\Sigmasys=\Sigmaprop\times\Sigma'$) and $h$ projects onto letters in $\Sigmaprop$ (or to the $\Sigmaprop$ component of the letter, resp.).

Let $\aut{A}$ be a quatomaton implementing a value function $\mu_{\aut{A}}:\Sigma^\omega\to\mathbb{T}$ where $\mathbb{T}$ is some totally ordered domain. While $\mu$ assigns values to infinite words, at runtime only finite words are observed. We thus need to derive from $\aut{A}$ a value to associate with a finite word, a prefix of an infinite word.
We provide two value functions for finite words, denoted $\prefixvalue(u)$ and $\lassovalue(u)$. Intuitively, for a finite word $u$ the value function \term{$\prefixvalue(u)$} returns the maximal value any infinite extension $w$ of $u$ can take. The value function \term{$\lassovalue(u)$} restricts attention to infinite words obtained by looping on the state $\aut{A}$ arrives at when reading $u$, namely $\aut{A}(u)$.

\begin{definition}[$\prefixvalue,\lassovalue$]\label{def:prefix-value-function-for-graphs}
Let $u\in\Sigma^*$ and $\aut{A}$ a quatomaton for 
${\mu_{\aut{A}}:\Sigma^\omega\to\mathbb{T}}$. 
\begin{itemize}[nosep]
\item 
{$\mathterm{\prefixvalue(u)}=\mathrm{sup}\ \{\mu_\aut{A}(uw)~|~w{\in}\Sigma^\omega\}$} is the \term{\any-extension value} of $u$.
 \item 
{$\mathterm{\lassovalue(u)}=\mathrm{sup}\ \{\mu_\aut{A}(uv^\omega)~|~v{\in}\Sigma^+\text{, } \aut{A}(uv){=}\aut{A}(u) \}$} is the \term{cyclic-extension value}.
\end{itemize}    
\end{definition}

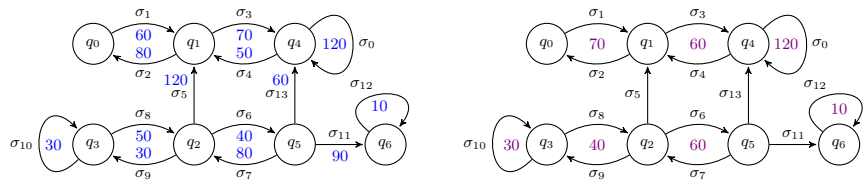
\begin{figure}[t]
\begin{center}
\scalebox{0.67}{
\begin{tikzpicture}[->,>=stealth',shorten >=1pt,auto,node distance=2cm,semithick,initial text=, initial above]

\node[state]    (q0)      {$q_0$};
\node[state]    (q1)  [right of=q0]   {$q_1$};
\node[state]    (q2)  [below of=q1,node distance=2cm]   {$q_2$};
\node[state]    (q3)  [below of=q0,node distance=2cm]   {$q_3$};
\node[state]    (q4)  [right of=q1]   {$q_4$};
\node[state]    (q5)  [right of=q2]   {$q_5$};
\node[state]    (q6)  [right of=q5,node distance=1.8cm]   {$q_6$};

\path (q4) edge [out=45, in=315, loop, looseness=7, rotate=0] 
           node {$\sigma_0$} 
           node [blue,left]{$120$} 
      (q4);
\path (q0) edge [bend left] 
           node {$\sigma_1$}
           node [blue,below]{$60$}  
      (q1);
\path (q4) edge [bend left] 
           node [below]{$\sigma_4$}
           node [blue,above]{$50$}  
      (q1);
\path (q1) edge [bend left] 
           node {$\sigma_3$}
           node [blue,below]{$70$}   
      (q4);      
\path (q1) edge [bend left] 
           node [below]{$\sigma_2$} 
           node [blue,above]{$80$} 
      (q0);
\path (q2) edge [bend left] 
           node {$\sigma_9$} 
           node [blue,above]{$30$} 
      (q3);
\path (q3) edge [bend left] 
           node [above]{$\sigma_8$} 
           node [blue,below]{$50$}  
      (q2);
\path (q2) edge  []
           node {$\sigma_5$} 
           node [blue, near end]{$120$} 
      (q1);
\path (q5) edge  []
           node []{$\sigma_{13}$}
           node [blue, near end]{$60$} 
      (q4);
\path (q3) edge [out=45, in=315, loop, looseness=10, rotate=180] 
           node {$\sigma_{10}$}
           node [blue,right]{$30$} 
      (q3);  
\path (q5) edge [bend left] 
           node {$\sigma_7$} 
           node [blue,above]{$80$} 
      (q2);
\path (q2) edge [bend left] 
           node [above]{$\sigma_6$} 
           node [blue,below]{$40$}  
      (q5);      
\path (q5) edge [] 
           node [above] {$\sigma_{11}$} 
           node [blue, below]{$90$} 
      (q6);
\path (q6) edge [out=45, in=315, loop, looseness=8, rotate=110] 
           node []{$\sigma_{12}$}
           node [blue,below]{$10$} 
      (q6);


\node[state]    (r0)   [right of=q4, node distance=5cm]   {$q_0$};
\node[state]    (r1)  [right of=r0]   {$q_1$};
\node[state]    (r2)  [below of=r1,node distance=2cm]   {$q_2$};
\node[state]    (r3)  [below of=r0,node distance=2cm]   {$q_3$};
\node[state]    (r4)  [right of=r1]   {$q_4$};
\node[state]    (r5)  [right of=r2]   {$q_5$};
\node[state]    (r6)  [right of=r5,node distance=1.8cm]   {$q_6$};
\node[label] (c12) [right of=r0, node distance=1cm] {\textcolor{violet}{$70$}};
\node[label] (c14) [right of=r1, node distance=1cm] {\textcolor{violet}{$60$}};
\node[label] (c32) [right of=r3, node distance=1cm] {\textcolor{violet}{$40$}};
\node[label] (c25) [right of=r2, node distance=1cm] {\textcolor{violet}{$60$}};
\node[label] (c6) [above of=r6, node distance=7mm] {\textcolor{violet}{$10$}};
\node[label] (c4) [right of=r4, node distance=7.5mm] {\textcolor{violet}{$120$}};
\node[label] (c3) [left of=r3, node distance=7.0mm] {\textcolor{violet}{$30$}};

\path (r4) edge [out=45, in=315, loop, looseness=7, rotate=0] 
           node {$\sigma_0$} 
      (r4);
\path (r0) edge [bend left] 
           node {$\sigma_1$}
      (r1);
\path (r4) edge [bend left] 
           node [below]{$\sigma_4$}
      (r1);
\path (r1) edge [bend left] 
           node {$\sigma_3$}
      (r4);      
\path (r1) edge [bend left] 
           node [below]{$\sigma_2$} 
      (r0);
\path (r2) edge [bend left] 
           node {$\sigma_9$} 
      (r3);
\path (r3) edge [bend left] 
           node [above]{$\sigma_8$} 
      (r2);
\path (r2) edge  []
           node {$\sigma_5$} 
      (r1);
\path (r5) edge  []
           node []{$\sigma_{13}$}
      (r4);
\path (r3) edge [out=45, in=315, loop, looseness=10, rotate=180] 
           node {$\sigma_{10}$}
      (r3);  
\path (r5) edge [bend left] 
           node {$\sigma_7$} 
      (r2);
\path (r2) edge [bend left] 
           node [above]{$\sigma_6$} 
      (r5);      
\path (r5) edge [] 
           node [above] {$\sigma_{11}$} 
      (r6);
\path (r6) edge [out=45, in=315, loop, looseness=8, rotate=110] 
           node []{$\sigma_{12}$}
      (r6);

\end{tikzpicture}}
\end{center}
\vspace{-8mm}
\caption{
The {quatomaton} for \autoref{ex:quantitative} and its simple-loops view.
}\label{fig:quantitative}
\end{figure}

\begin{example}\label{ex:quantitative}
    Consider the quatomaton $\aut{A}$ in \autoref{fig:quantitative}(left) whose  objective we call \emph{\limavg\ with the shortest path to period} --- formally, $\aut{A}$ gives a lasso word $u(v)^\omega$ the value corresponding to the average of the edges' weights on the (not necessary simple) loop $v$, minus the length of $u$. 
    For example, the values of the words $\sigma_8\sigma_5(\sigma_3\sigma_4)^\omega$ and $\sigma_4(\sigma_2\sigma_1)^\omega$ are 
    $60{-}2$ and $70{-}1$ resp., while the value of the word $\sigma_8\sigma_5(\sigma_3\sigma_4\sigma_2\sigma_1)^\omega$ is $65{-}2$. It is thus convenient to consider the graph on the right which simply visualizes the values of simple loops.

    Let us see what do $\prefixvalue(\cdot)$ and $\lassovalue(\cdot)$ return for some prefixes $u$. Note first that $\prefixvalue(u){=}120{-}x$ for any $u$ that does not have $\sigma_{11}$ or $\sigma_{12}$ as an infix, where $x$ is the length of the shortest path to $q_4$. Otherwise, $\prefixvalue(u){=}10{-}y$ where $y$ is the length of the shortest path to $q_6$ since a word that traversed $\sigma_{11}$ or $\sigma_{12}$ can only reach the final cycle $\sigma_{12}$.
    Considering $\lassovalue(\cdot)$, we observe that for any $u$ that reaches vertex $q_4$, we have $\lassovalue(u){=}120{-}x$ as $120$ is the highest value for a cycle from $q_4$. 
    For words $u$ ending in vertex $q_1$ we get $\lassovalue(u){=}80{-}z$ since the highest-value cycle starting at vertex $q_1$ is $(\sigma_3\sigma_0\sigma_4)$ and its averaged value is $80$. 
\end{example}

It is not hard to see that as we read a prefix of an infinite word, with every step the value $\prefixvalue$ can either remain unchanged or decrease (but it may not increase). The situation is different for cyclic-extensions. Here the value can increase along a prefix of an infinite word.

\begin{remark}[Monotonicity of $\prefixvalue$]\label{obs:prd-leadsto-circ-ext}
Let $\mu_\aut{A}$ be as above, and $u,u'\in\Sigma^*$.
\begin{enumerate}[nosep]
    \item \label{monoton-leadsto}
    $\prefixvalue$ is monotonically non-increasing: If $u\preceq u'$ then $\prefixvalue(u)\geq \prefixvalue(u')$.
\item  $\lassovalue$ is \underline{not} monotonic: 
Given $u\prec u'$, both $\lassovalue(u)< \lassovalue(u')$ and $\lassovalue(u)\geq \lassovalue(u')$ are possible.
\item \label{best-better-fine} $\prefixvalue(u)\geq\lassovalue(u)$ for all $u\in\Sigma^*$.
\end{enumerate}
\end{remark}

The runtime consultant should recommend which letters it is preferable to take next. We thus provide the following definitions. 

\begin{definition}[$\bestrecom(u)$]\label{def:preferred-any-extension}
Let $\mu_\aut{A}$ and $u$ be as above.
For $t\in\mathbb{T}$, a word $w\in\Sigma^\omega$ is a \term{$t$-any-extension} of $u$ if $\mu_{\aut{A}}(uw)=t$. 
If, in addition, $t=\prefixvalue(u)$ we say that $w$ is a \term{preferred any-extension} of $u$ and the set of first letters of such extension is $\mathterm{\bestrecom(u)}=\{ \sigma{\in}\Sigma ~|~ \exists w\in\Sigma^\omega \textrm { s.t. } \sigma w \textrm{ is a preferred any-extension of } u \}$.
\end{definition}

\begin{definition}[$\finerecom(u)$]\label{def:preferred-cyclic-extension}
We say that $v\in\Sigma^+$ is a cycle on $u$ if $\aut{A}(uv)=\aut{A}(u)$. 
If $v$ is a cycle on $u$ and $\mu_\aut{A}(u(v)^\omega)=t$ we say that $v$ is a \term{$t$-cyclic-extension} of $u$.
If for every cycle $v'$ of $u$ such that $v'\prec v$ we have that $v'$ is a $t'$-cyclic-extension for $t'\geq t$ we say that $v$ is a \term{$t$-preferred cyclic-extension}.\footnote{That is, if $v$ starts with a cycle $v'$ then $v'$ cannot have a value $t'$ lower than $t$.}
If, in addition, $t=\lassovalue(u)$ we say that $v$ is a \term{preferred cyclic-extension} and the set of first letters of such extension is $\mathterm{\finerecom(u)}{=}\{ \sigma{\in}\Sigma~|~\exists v{\in}\Sigma^* \textrm { s.t. } \sigma v \textrm{ is a preferred cyclic extension of } u \}$.
\end{definition}

In \autoref{ex:quantitative} the preferred any-extension of $\sigma_7$ is any word that starts with $\sigma_5$ or $\sigma_6$ and has suffix $(\sigma_0)^\omega$ since a word with such suffix has the highest achievable value. Thus $\bestrecom(\sigma_7)=\{\sigma_5,\sigma_6\}$ as otherwise the letter $\sigma_9$ is chosen, and the path to the best loop (namely, $\sigma_0$) becomes longer. 
The preferred cyclic-extension of $\sigma_7$ is $(\sigma_6\sigma_7)$, thus, $\finerecom(\sigma_7)=\{\sigma_6\}$. This is since the only cycles that include the vertex $q_2$ are $(\sigma_6\sigma_7)$, $(\sigma_8\sigma_9)$, $(\sigma_8\sigma_9\sigma_{10})$ and their combinations, and among them the cycle $(\sigma_6\sigma_7)$ has the highest value. Note that the edge $\sigma_5$ is not considered for $\finerecom(\sigma_7)$ since it takes the run to another MSCC from which vertex $q_2$ is unreachable, so no cycle can be closed.

\begin{definition}[A runtime consultant (RC)]
Let $\aut{A}$ be quatomaton, and $\arop\in\{\leadsto,\circlearrowright\}$.
A \term{$\arop$-runtime consultant for $\aut{A}$} is a procedure that given $u\in\Sigma^*$ returns a set $\Sigma'\subseteq\Sigma^{\arop}_{\aut{A}}(u)$, such that $\Sigma'$ can be empty only if $\Sigma^{\arop}_{\aut{A}}(u)$ is.
If, also, $\Sigma'=\Sigma^{\arop}_{\aut{A}}(u)$ we say that the consultant is \term{strong}. Otherwise we say it is \term{weak}.
We refer to $\Sigma^{\leadsto}_{\aut{A}}(u)$ and $\Sigma^{\circlearrowright}_{\aut{A}}(u)$ as the $\leadsto$- and $\circlearrowright$-\term{recommendations}, resp. 
\end{definition}

Ideally, an RC should work in constant time in order to keep up with the observed run and provide recommendations that can be taken in a timely manner.

\begin{example}[$\leadsto$- and $\circlearrowright$-recommendations]\label{ex:any_vs_cyclic}
  Let $\mu$ be the value function discussed in \autoref{ex:quantitative} and implemented by the quatomaton in \autoref{fig:quantitative}. Recall that the values $\mu$ assigns a lasso word depends mainly on the average weight of the loop, and the values of simple loops are given in \autoref{fig:quantitative} on the right.  

    \autoref{fig:any_vs_cyclic} represents the runtime recommendations given by the RC.
    In the left figure we annotate in blue the $\leadsto$-recommendations from each state. In the middle the $\circlearrowright$-recommendations from each state are annotated orange.
    The rightmost figure shows the paths obtained by following the $\leadsto$- and $\circlearrowright$-recommendations from the initial state, in blue and orange, respectively.

    The quatomaton  has three MSCCs. The $\leadsto$-recommendations suggest reaching as fastest as possible the best loop $(\sigma_0)$ from each state of the automaton that $\sigma_0$ is reachable from. These are  all states except the one that is reached by $\sigma_{11}$. From the latter state the best reachable cycle is $\sigma_{12}$. 
    The $\circlearrowright$-RC never recommends moving to another MSCC, because its definition requires returning to the state the run is currently at. 
    Accordingly, an edge that is not a part of any MSCC will not be recommended. Therefore, the loop $(\sigma_0)$ cannot be reached by these recommendations as long as the run is in a {bottom} MSCC. 
    
    Considering the initial state, the runs obtained by following $\leadsto$-recommen- dation are different from the run obtained following the $\circlearrowright$-recommendation.
    The $\leadsto$-recommendation reaches the MSCC with the highest-value loop and the resulting word is either $\sigma_8\sigma_5\sigma_3(\sigma_0)^\omega$ or $\sigma_8\sigma_6\sigma_{13}(\sigma_0)^\omega$. Note that there are two edges by which the run can move from the bottom MSCC to the top one and both have same path length to the best loop, so both edges are recommended by the $\leadsto$-RC. The $\circlearrowright$-recommendation stays in the initial MSCC and in every step compares the cycles that include the current state. The resulting word is $\sigma_8(\sigma_6\sigma_7)^\omega$.
\end{example}

\begin{figure}[t]
\begin{center}
\scalebox{0.8}{
\begin{tikzpicture}[->,>=stealth',shorten >=1pt,auto,node distance=1.2cm,semithick,initial text=, initial above]

\node[ministate]    (q0)    {};
\node[ministate]    (q1)  [right of=q0]  {};
\node[ministate]    (q2)  [below of=q1,node distance=1.2cm]  {};
\node[ministate]    (q3)  [below of=q0,node distance=1.2cm]  {};
\node[ministate]    (q4)  [right of=q1]  {};
\node[ministate]    (q5)  [right of=q2]  {};
\node[ministate]    (q6)  [right of=q5, node distance=0.75cm]  {};

\path (q4) edge [pblue, out=45, in=315, loop, looseness=16, rotate=0] 
           node {$\sigma_0$} 
           node [violet,left]{\tiny{$120$}} 
      (q4);
\path (q0) edge [pblue, bend left] 
           node {$\sigma_1$} 
           node [violet,below]{\tiny{$70$}} 
      (q1);
\path (q4) edge [bend left] 
           node [below, near end]{$\sigma_4$} 
      (q1);
\path (q1) edge [pblue, bend left] 
           node {$\sigma_3$} 
           node [violet,below]{\tiny{$60$}} 
      (q4);      
\path (q1) edge [bend left] 
           node [below, near end]{$\sigma_2$} 
      (q0);
\path (q2) edge [bend left] 
           node {$\sigma_9$} 
      (q3);
\path (q3) edge [pblue, bend left] 
           node [above, near start]{$\sigma_8$} 
           node [violet,below]{\tiny{$40$}} 
      (q2);
\path (q2) edge  [pblue]
           node []{$\sigma_5$} 
      (q1);
\path (q5) edge  [pblue]
           node []{$\sigma_{13}$} 
      (q4);
\path (q3) edge [out=45, in=315, loop, looseness=20, rotate=180] 
           node {$\sigma_{10}$}
           node [violet,right]{\tiny{$30$}} 
      (q3);  
\path (q5) edge [bend left] 
           node {$\sigma_7$} 
      (q2);
\path (q2) edge [pblue, bend left] 
           node [above, near start]{$\sigma_6$} 
           node [violet,below]{\tiny{$60$}} 
      (q5);      
\path (q5) edge [] 
           node [below, near start] {$\sigma_{11}$} 
      (q6);
\path (q6) edge [pblue, out=45, in=315, loop, looseness=14, rotate=110] 
            node [ above]{$\sigma_{12}$}
           node [violet,below]{\tiny{$10$}} 
      (q6);


\node[ministate]    (r0)  [right of=q4, node distance=2.7cm]  {};
\node[ministate]    (r1)  [right of=r0]  {};
\node[ministate]    (r2)  [below of=r1]  {};
\node[ministate]    (r3)  [below of=r0]  {};
\node[ministate]    (r4)  [right of=r1]  {};
\node[ministate]    (r5)  [right of=r2]  {};
\node[ministate]    (r6)  [right of=r5, node distance=0.75cm]  {};

\path (r4) edge [porange, out=45, in=315, loop, looseness=16, rotate=0] 
           node {$\sigma_0$} 
           node [violet,left]{\tiny{$120$}} 
      (r4);
\path (r0) edge [porange, bend left] 
           node {$\sigma_1$} 
           node [violet,below]{\tiny{$70$}} 
      (r1);
\path (r4) edge [bend left] 
           node [below, near end]{$\sigma_4$} 
      (r1);
\path (r1) edge [porange, bend left] 
           node {$\sigma_3$} 
           node [violet,below]{\tiny{$60$}} 
      (r4);      
\path (r1) edge [bend left] 
           node [below, near end]{$\sigma_2$} 
      (r0);
\path (r2) edge [bend left] 
           node {$\sigma_9$} 
      (r3);
\path (r3) edge [porange, bend left] 
           node [above, near start]{$\sigma_8$} 
           node [violet,below]{\tiny{$40$}} 
      (r2);
\path (r2) edge  []
           node []{$\sigma_5$} 
      (r1);
\path (r5) edge  []
           node {$\sigma_{13}$} 
      (r4);
\path (r3) edge [out=45, in=315, loop, looseness=20, rotate=180] 
           node [left]{$\sigma_{10}$}
           node [violet,right]{\tiny{$30$}} 
      (r3);  
\path (r5) edge [porange, bend left] 
           node {$\sigma_7$} 
      (r2);
\path (r2) edge [porange, bend left] 
           node [above, near start]{$\sigma_6$} 
           node [violet,below]{\tiny{$60$}} 
      (r5);      
\path (r5) edge [] 
           node [below,near start] {$\sigma_{11}$} 
      (r6);
\path (r6) edge [porange, out=45, in=315, loop, looseness=14, rotate=110] 
            node [ above]{$\sigma_{12}$}
           node [violet,below]{\tiny{$10$}} 
      (r6);


\node[ministate]    (p0)  [right of=r4, node distance=2.7cm]  {};
\node[ministate]    (p1)  [right of=p0]  {};
\node[ministate]    (p2)  [below of=p1]  {};
\node[ministate, initial]    (p3)  [below of=p0]  {};
\node[ministate]    (p4)  [right of=p1]  {};
\node[ministate]    (p5)  [right of=p2]  {};
\node[ministate]    (p6)  [right of=p5, node distance=0.75cm]  {};

\path (p4) edge [pblue, out=45, in=315, loop, looseness=16, rotate=0] 
           node {$\sigma_0$} 
           node [violet,left]{\tiny{$120$}} 
      (p4);
\path (p0) edge [bend left] 
           node {$\sigma_1$} 
           node [violet,below]{\tiny{$70$}} 
      (p1);
\path (p4) edge [bend left] 
           node [near end, below]{$\sigma_4$} 
      (p1);
\path (p1) edge [pblue, bend left] 
           node {$\sigma_3$} 
           node [violet,below]{\tiny{$60$}} 
      (p4);      
\path (p1) edge [bend left] 
           node [below, near end]{$\sigma_2$} 
      (p0);
\path (p2) edge [bend left] 
           node {$\sigma_9$} 
      (p3);
\path (p3) edge [-,side by side={blue!25}{orange!50},bend left]  
           node {} 
      (p2);
\path (p3) edge [bend left] 
           node [above, near start]{$\sigma_8$} 
           node [violet,below]{\tiny{$40$}} 
      (p2);
\path (p2) edge  [pblue]
           node []{$\sigma_5$} 
      (p1);
\path (p5) edge  [pblue]
           node [] {$\sigma_{13}$} 
      (p4);
\path (p3) edge [out=45, in=315, loop, looseness=20, rotate=180] 
           node {$\sigma_{10}$}
           node [violet,right]{\tiny{$30$}} 
      (p3);  
\path (p5) edge [porange, bend left] 
           node {$\sigma_7$} 
      (p2);
\path (p2) edge [-,side by side={blue!25}{orange!50},bend left]  
           node {} 
      (p5);
\path (p2) edge [bend left] 
           node [above, near start]{$\sigma_6$} 
           node [violet,below]{\tiny{$60$}} 
      (p5);    
\path (p5) edge [] 
           node [below,near start] {$\sigma_{11}$} 
      (p6);
\path (p6) edge [out=45, in=315, loop, looseness=14, rotate=110] 
            node [ above]{$\sigma_{12}$}
           node [violet,below]{\tiny{$10$}} 
      (p6);
\end{tikzpicture}}
\end{center}
\vspace{-6mm}
\caption{
The $\leadsto$- and $\circlearrowright$-recommendations (in blue and orange resp.)
for $\mu$ of \autoref{ex:any_vs_cyclic}.}
\label{fig:any_vs_cyclic}
\end{figure}

\section{Properties of runtime consultants}
Recall that an RC does not assume its recommendations are always followed. Below we articulate the guarantees that can be provided on \emph{partially} obeying the $\leadsto$- and $\circlearrowright$-recommendations. 
We discuss first the qualitative case, namely when the value function is Boolean, and then turn to the quantitative case.

\subsection{Properties of  RCs for Boolean value function}

Before providing a proposition 
we exemplify what happens in the Boolean case when the $\leadsto$- and $\circlearrowright$-recommendations are followed.

\begin{example}[A qualitative value function]\label{ex:qualitative}
Consider an automatic call distributor (ACD) in a business' phone system that routes calls to support staff consisting of two employees, whose functionality is 
implemented in \autoref{fig:call-distributor} by 
the DPA $\aut{A}$. Recall that a DPA accepts if the minimal rank visited infinitely often is even, thus accepting words are those that loop on the state ranked $0$ (henceforth, the $0$-state). 

The $\leadsto$-recommendation 
 for a prefix $u$ that is $\varepsilon$ (the empty word), \texttt{INIT\_SYS} or \texttt{INIT\_SYS}$\cdot$\texttt{INIT\_DB}, is $\{$\texttt{INIT\_SYS}$\}$,  $\{$\texttt{INIT\_DB}$\}$ and  $\{$\texttt{CNCT}$\}$ resp. (since otherwise the  sink state is reached, meaning a violation occurred and the word is rejected). The $\circlearrowright$-recommendation is $\emptyset$ (since there is no option to close a cycle in the automaton on the state reached by reading this prefix). 

For prefix that has not reached the sink state, $\bestrecom(u)=\Sigma$ (since an accepted word can be obtained regardless of the next read letter). 
For $u$ that reaches the $0$-state we have $\finerecom(u)=\Sigma$ as well (since any cycle starting there is accepted). 
For prefix $u$ ending with \texttt{CALL} we get 
$\finerecom(u)=\{\sigqone,\sigqtwo\}$ as these letters are part of cycles that include the $0$-state. 
For prefix $u$ ending with $\sigqone$ (resp. $\sigqtwo$) the $\circlearrowright$-recommendation is $\{$\texttt{ANS1}$\}$ ($\{$\texttt{ANS2}$\}$ resp.) as this letter must be taken in order to obtain a cycle including the $0$-state, as per \autoref{def:preferred-cyclic-extension}.
\end{example}

\begin{figure}[t]
\begin{minipage}{0.7\textwidth}
\centering
\scalebox{0.55}{
\begin{tikzpicture}[->,>=stealth',shorten >=1pt,auto,node distance=2.5cm,semithick,initial text=, initial left]

\node[state,initial]    (q0)    {$1$};
\node[state]    (q1)  [right of=q0]  {1};
\node[state]    (q2)  [right of=q1]  {1};
\node[state]    (q3)  [right of=q2]  {0};
\node[state]    (q4)  [right of=q3]  {1};
\node[state]    (q5)  [below right of=q4, node distance=2.0cm]  {1};
\node[state]    (q6)  [above right of=q4, node distance=2.0cm]  {1};
\node[state]    (q7)  [above of=q1, node distance=2.0cm]  {1};

\path (q0) edge [] 
           node {\texttt{INIT\_SYS}} 
      (q1);
\path (q0) edge [bend left=15] 
           node [left, near start]{$\Sigma\setminus$\texttt{INIT\_SYS}} 
      (q7);
\path (q1) edge [] 
           node {\texttt{INIT\_DB}} 
      (q2);
\path (q1) edge [] 
           node [above, midway,fill=white, near start]{$\Sigma\setminus$\texttt{INIT\_DB}} 
      (q7);
\path (q2) edge [] 
           node {\texttt{CNCT}} 
      (q3);
\path (q2) edge [bend right=15] 
           node [right, near start]{$\Sigma\setminus$\texttt{CNCT}} 
      (q7);
\path (q3) edge [] 
           node {\texttt{CALL}} 
      (q4);
\path (q4) edge [] 
           node [left]{$\sigqtwo$} 
      (q5);
\path (q4) edge [] 
           node [left]{$\sigqone$} 
      (q6);
\path (q4) edge [loop right] 
           node [] {$\Sigma\setminus\{\sigqone,\sigqtwo\}$}
      (q4);
\path (q6) edge [bend right] 
           node [left]{\texttt{ANS1}} 
      (q3);
\path (q5) edge [bend left] 
           node {\texttt{ANS2}} 
      (q3);  
\path (q5) edge [loop right] 
           node [] {$\Sigma\setminus$\texttt{ANS2}}
      (q5); 
\path (q6) edge [loop right] 
           node [] {$\Sigma\setminus$\texttt{ANS1}}
      (q6);
\path (q7) edge [loop right] 
           node [] {$\Sigma$}
      (q7);  
\path (q3) edge [loop below] 
           node [left, near end] {$\Sigma\setminus$\texttt{CALL}}
      (q3);  
\end{tikzpicture}}
\caption{The DPA 
for the  
ACD from \autoref{ex:qualitative}.}\label{fig:call-distributor}
\end{minipage}
\begin{minipage}{0.29\textwidth}
\centering
\scalebox{0.8}{
\begin{tikzpicture}
  
   \coordinate (P1topminus) at (2.5,0);
   \coordinate (P1botminus) at (2.5,-0.1);
   \coordinate (P1top) at (2.5,0);
   \coordinate (P1plus) at (2.5,0.1);

   \coordinate (P2topminus) at (3.95,0);
   \coordinate (P2plusminus) at (3.5,0.1);   
   \coordinate (P2top) at (4,0);
   \coordinate (P2plus) at (3.55,0.1);
   \coordinate (P22plus) at (4.4,0.15);
   
   \coordinate (P3top) at (6,.0);
   \coordinate (P4top) at (4.5,1.5);
   \coordinate (P3plus) at (4.75,.0);
   \coordinate (P4plus) at (3.3,1.3);

   \coordinate (Q1topminus) at (2.95,0);
   \coordinate (Q1botminus) at (2.95,-0.1);
   \coordinate (Q1top) at (2.5,0);
   \coordinate (Q1bot) at (2.5,-0.1);

   \coordinate (Q2topminus) at (3.95,0);
   \coordinate (Q2botminus) at (3.95,-0.1);   
   \coordinate (Q2top) at (4,0);
   \coordinate (Q2bot) at (4,-0.05);
   \coordinate (Q22bot) at (4.4,-0.15);
   
   \coordinate (Q3top) at (6,.0);
   \coordinate (Q4top) at (4.5,1.5);

   \coordinate (Q3bot) at (5,0.1);
   \coordinate (Q3botminus) at (4.95,0.1);
   \coordinate (Q4bot) at (6.75,0.95);
   \coordinate (Q5bot) at (6.25,-1);

\draw[black] (P1top)--(P2topminus) node [yshift=.275, fill=white, midway] {$u$ };
\draw[black] (P2top) .. controls (P3top) and (P4top)  .. (P2top) node [above, midway] {$v$};

\draw[gray] (Q1bot) .. controls (Q2bot) and (Q22bot)  .. (Q3botminus) node [below, midway,yshift=2] {$u''$};
\draw[gray] (Q3bot) .. controls (Q4bot) and (Q5bot)  .. (Q3bot) node [right, midway] {$v''$};

\draw[gray] (P1plus)--(P2plusminus) node [above, midway] {$u'$};
\draw[gray] (P2plus) .. controls (P3plus) and (P4plus)  .. (P2plus) node [above, midway] {$v'$};

\end{tikzpicture}}
\caption{Following $\circlearrowright$-RC as discussed in \autoref{subsec:prop-rc-general-value-function}.}\label{fig:cyclic-rc-general-value-function}
\end{minipage}
\end{figure}

\begin{restatable}[RC compliance guarantees: qualitative case]{proposition}{propmulcnstprops}\label{prop:mu-l-cnst-props}
    Let $\aut{A}_L$ 
    an automaton recognizing {a non-empty} 
    $\omega$-regular language $L$. Let $w \in\Sigma^\omega$.
    \begin{enumerate}[nosep]
        \item \label{prop:mu-l-cnst-props-first-item}
    If $w[i{+}1]\notin \Sigma_{\aut{A}_L}^\leadsto(w[..i])$ for some $i\in\mathbb{N}$ then $w\notin L$.
    \item \label{prop:mu-l-cnst-props-second-item}
    If for some $i{\in}\mathbb{N}$ and for  all $j{>}i$ we have $w[j{+}1]\notin\Sigma_{\aut{A}_L}^\circlearrowright(w[..j])$ then $w\notin L$. 
    \item \label{prop:mu-l-cnst-props-third-item}
   If for all $i{\in}\mathbb{N}$ we have that $w[i{+}1]\in\Sigma_{\aut{A}_L}^\leadsto(w[..i])$
    and for some $j{\in}\mathbb{N}$ we have that $w[k{+}1]\in\Sigma_{\aut{A}_L}^\circlearrowright(w[..k])$  for all $k\geq j$ then $w\in L$.
        \end{enumerate}
\end{restatable}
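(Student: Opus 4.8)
My plan is to treat the Boolean value function $\mu_{\aut{A}_L}$ as $\true$ on $L$ and $\false$ off $L$, so that $\prefixvalue(u)=\true$ exactly when $u$ extends to a word in $L$, and $\lassovalue(u)=\true$ exactly when the state $\aut{A}_L(u)$ lies on an accepting cycle. The first two items I would attack by their contrapositive, and the third by a direct argument combining both hypotheses. For the first item, assume $w\in L$ and show $w[i+1]\in\Sigma^{\leadsto}_{\aut{A}_L}(w[..i])$ for every $i$. Since $w\in L$ we have $\mu_{\aut{A}_L}(w)=\true$, and because $w$ is itself an extension of $w[..i]$ this forces $\prefixvalue(w[..i])=\true$ (also immediate from the non-increasing monotonicity of $\prefixvalue$ in \autoref{obs:prd-leadsto-circ-ext}). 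Then the suffix $w[i+1..]$ is a preferred any-extension of $w[..i]$, since $\mu_{\aut{A}_L}(w[..i]\cdot w[i+1..])=\mu_{\aut{A}_L}(w)=\true=\prefixvalue(w[..i])$; hence its first letter $w[i+1]$ lies in $\Sigma^{\leadsto}_{\aut{A}_L}(w[..i])$. This step is routine.

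For the second item I again use the contrapositive: from $w\in L$ I produce infinitely many positions $j$ with $w[j+1]\in\Sigma^{\circlearrowright}_{\aut{A}_L}(w[..j])$. Let $M$ be the final MSCC of $w$, let $p^{*}$ be the minimal rank occurring infinitely often along the run (even, since $w\in L$), and fix a state $q^{*}$ of rank $p^{*}$ visited infinitely often. Choosing a point beyond which the run visits only infinitely-often states, I examine two consecutive visits to $q^{*}$, at positions $j<j'$: the infix $v=w[j+1..j']$ is a cycle on $w[..j]$ whose minimal rank is exactly $p^{*}$ (it contains $q^{*}$ and, past the settling point, nothing smaller), so $w[..j]v^{\omega}\in L$ and $v$ is a $\true$-cyclic-extension. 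Because $j,j'$ are \emph{consecutive} visits to $q^{*}$, $v$ has no proper cyclic prefix on $w[..j]$, so $v$ is vacuously a preferred cyclic-extension and $w[j+1]\in\Sigma^{\circlearrowright}_{\aut{A}_L}(w[..j])$. Since $q^{*}$ recurs infinitely often, this yields infinitely many such $j$, which is exactly the contrapositive.

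The third item is where I expect the real work. From $w[i+1]\in\Sigma^{\leadsto}_{\aut{A}_L}(w[..i])$ for all $i$ I would first show, by induction using $L\neq\emptyset$ for the base case, that $\prefixvalue(w[..i])=\true$ for all $i$ — intuitively, obeying $\leadsto$ never lets the run leave the region from which $L$ is still reachable. From $w[k+1]\in\Sigma^{\circlearrowright}_{\aut{A}_L}(w[..k])$ for all $k\geq j$ I then use that a cyclic-extension must return to the current state, so a $\circlearrowright$-recommendation never uses an edge leaving the current MSCC; hence the run is confined to a single MSCC $M$ from position $j$ on, which is then the final MSCC of $w$. A helpful lemma is that $\lassovalue=\true$ is preserved under obeying $\circlearrowright$: if $\sigma$ is the first letter of an accepting cycle on $q$, then the cyclic rotation of that cycle is an accepting cycle on $\delta(q,\sigma)$, so $\lassovalue$ stays $\true$ once it becomes $\true$. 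It would then remain to conclude that the run actually realizes an accepting loop, i.e.\ that its minimal infinitely-often rank is even, giving $w\in L$.

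The hard part will be precisely this last step, and it is the place where the non-monotonicity of $\lassovalue$ (\autoref{obs:prd-leadsto-circ-ext}) really bites. Obeying $\Sigma^{\circlearrowright}_{\aut{A}_L}$ only guarantees that at each step the chosen letter is the first letter of \emph{some} preferred cyclic-extension of the current state; on its face it does not commit the run to completing one fixed accepting cycle, and greedily re-planning at every step could in principle let the run close a rejecting cycle even though each individual letter starts an accepting one. The crux is therefore to show that persistent obedience \emph{for all} $k\geq j$ — not merely at one step — forces the minimal infinitely-often rank to be even. I expect to handle this with the ``preferred'' / no-worse-cyclic-prefix condition together with a rank-based progress argument on the final MSCC: assuming for contradiction an odd minimal infinitely-often rank witnessed by a recurrent state $q^{\dagger}$, the cycles the run actually closes through $q^{\dagger}$ past the settling point have odd minimal rank, and I would argue this is incompatible with every step starting a no-worse cyclic-extension from a state where $\lassovalue=\true$. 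Reconciling the locally-existential recommendation with the global realization of an accepting loop is the main obstacle, and it is exactly where the precise reading of the preferred-cyclic-extension definition matters most.
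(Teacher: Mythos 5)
Items~\ref{prop:mu-l-cnst-props-first-item} and~\ref{prop:mu-l-cnst-props-second-item} of your proposal are correct, and your contrapositive treatment of item~\ref{prop:mu-l-cnst-props-second-item} --- producing infinitely many obedient positions from consecutive visits to a recurrent state of minimal (even) rank, and using consecutiveness to discharge the proper-cyclic-prefix clause of \autoref{def:preferred-cyclic-extension} vacuously --- is in fact tighter than the paper's direct argument. The problem is item~\ref{prop:mu-l-cnst-props-third-item}: you do not prove it. Everything you actually establish ($\prefixvalue$ stays $\true$ under obeying $\Sigma^{\leadsto}_{\aut{A}_L}$, the run is eventually confined to one MSCC, and $\lassovalue$ is non-decreasing once it is $\true$ by rotating a witnessing cycle) stops short of the conclusion $w\in L$; the step that carries all the weight --- that obeying $\Sigma^{\circlearrowright}_{\aut{A}_L}$ at \emph{every} position from $j$ on forces the minimal rank visited infinitely often to be even --- is exactly the step you defer to a hoped-for ``rank-based progress argument.'' Announcing the obstacle is not the same as overcoming it, so as submitted the proposal has a genuine gap precisely where the paper's own proof does the work (the paper disposes of it with the assertion that ``a cycle of value $\true$ can be achieved and only such cycles are closed from now on,'' so you cannot simply defer to the paper here either).

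The gap is not a formality, because the re-planning scenario you worry about is not obviously excluded by the definitions: the preferred-cyclic-extension condition constrains only earlier returns to the \emph{current} state along the one recommended extension, and says nothing about the cycles the run actually closes on \emph{earlier} states when a fresh recommendation is taken at every step. Concretely, consider three states $q$, $x$, $y$ with ranks $1,1,0$ and edges $q\xrightarrow{a}x$, $x\xrightarrow{b}q$, $x\xrightarrow{c}y$, $y\xrightarrow{c}x$, $q\xrightarrow{d}y$, $y\xrightarrow{d}q$. At $q$ the cycle $acd$ is accepting with no proper cyclic prefix, so $a\in\Sigma^{\circlearrowright}(q)$; at $x$ the cycle $bdc$ is accepting with no proper cyclic prefix, so $b\in\Sigma^{\circlearrowright}(x)$; and both letters also admit accepting any-extensions, so they lie in the respective $\Sigma^{\leadsto}$ sets. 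Yet the run on $(ab)^\omega$ obeys both recommendation sets at every step while closing only odd-rank cycles --- each step starts an accepting cycle on its current state, but no such cycle is ever completed. Any completion of your argument must therefore identify what in the hypotheses (or in a stronger reading of \autoref{def:preferred-cyclic-extension}, e.g.\ requiring persistence along one committed extension, or constraining returns to previously visited states) rules this behaviour out; the outline you give, which reasons only about $\lassovalue$ at the current state and about cycles through the minimal-rank recurrent state separately, does not yet provide that ingredient.
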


\begin{proof}
    \begin{enumerate}
        \item There are only two options for values for a word wrt the language in this setting: $\true$ and $\false$. If there exists an  $i{\in}\mathbb{N}$ such that {$w[i{+}1]\notin\Sigma_{\aut{A}_L}^\leadsto(w[..i])$} 
        then none of the actions that can achieve $\true$ are chosen. Then only value $\false$ can be achieved from now on. The $\prefixvalue$ values are non-increasing, so once the maximum value is $\false$ it will remain $\false$ all along, implying the obtained property is not in $L$.
        \item If there exists an  $i{\in}\mathbb{N}$ such that for all $j{>}i$ we have that {$w[j{+}1]\notin\Sigma_{\aut{A}_L}^\circlearrowright(w[..j])$} 
       then for each step starting from $i$, the next letter is not on a path to close a cycle with value $\true$. Therefore, a cycle with value $\true$ is never achieved, implying the cycles that will be closed have value $\false$ and the word is rejected.\footnote{From some point the word has to close cycles, since the automaton for $L$ has a finite number of states.} 
        \item If for every $i{\in}\mathbb{N}$ the next letter $w[i{+}1]$ is a letter from $\Sigma_{\aut{A}_L}^\leadsto(w[..i])$  
        then value $\true$ can be achieved. If in addition, from some  $j{\in}\mathbb{N}$ onward the next letter is a letter from  
        $\Sigma_{\aut{A}_L}^\circlearrowright(w[..j])$ 
        then a cycle of value $\true$ can be achieved and only such cycles are closed from now on. Hence, a word of value $\true$ is achieved and such words are {accepted}.\qedhere
    \end{enumerate} 
\end{proof}

We exemplify \autoref{prop:mu-l-cnst-props} on \autoref{ex:qualitative}.
Note first that deviation from $\bestrecom(\cdot)$ 
is possible only in the first three steps, since after reading \texttt{CNCT}, for any letter read there exists a continuation that leads to an accepted word.
If such a deviation occurs, the sink-state is reached and the property is violated, 
as per~\autoref{prop:mu-l-cnst-props-first-item}. 

With regard to  \autoref{prop:mu-l-cnst-props-second-item}, consider the state reached by reading \texttt{CALL}. 
If $\finerecom(\cdot)$ 
is ignored indefinitely, that is, neither $\sigqone$ nor $\sigqtwo$ are read, then the run will be stuck at a state ranked $1$ and the word will be rejected.

Last,  if $\bestrecom(\cdot)$  
is always followed then the execution does not reach the sink and the word can be accepted. If from some point onward $\finerecom(\cdot)$ 
is followed then as per \autoref{ex:qualitative}
\texttt{ANS1} or \texttt{ANS2} will be taken, hence
the $0$-state will be visited infinitely often and the word is accepted, as per \autoref{prop:mu-l-cnst-props-third-item}. 

\begin{remark}[Generalizing a runtime monitor]
    Note that \autoref{prop:mu-l-cnst-props}, \autoref{prop:mu-l-cnst-props-first-item} 
    essentially says that the 
    runtime consultant generalizes 
    a runtime monitor: if a letter that is not a $\leadsto$-recommendation of the runtime-consultant is taken, we can report violation occurred, as a traditional monitor would do.
\end{remark}

\begin{remark}[Converting an RC to an EM]
    An RC can be turned into an enforcement monitor as follows: as long as the action is in both the $\leadsto$- 
    and $\circlearrowright$-
    recommendations, it is passed as is. If it is not in the $\leadsto$-recommendation it is corrected to an action in the $\leadsto$-recommendation. At some point it should be decided that from now on all actions follow also the $\circlearrowright$-recommendation and turned into such if they are not.
\end{remark}

\begin{remark}[Limitation of RC for qualitative value function]
    Note that in the qualitative case the RC does not recommend what might intuitively be a better word, since it has a binary classification of words. A word is either accepted or rejected, and there is no distinction between accepted words. Consider a scenario similar to what happens in the ACD (\autoref{ex:qualitative}), only that for some reason answers from the first employee are only possible after another round of waiting. The RC will still recommend either $\sigqone$ or $\sigqtwo$ since both lead to accepting cycle, even though recommending $\sigqtwo$ is likely faster. For this reason we define RC not only on qualitative value functions, but also on quantitative ones. In particular, in \autoref{sec:rc-rbs-L} we discuss a quantitative interpretation of $\omega$-regular properties following~\cite{FismanS25}.
\end{remark}

\subsection{Properties of RCs for a general value function}\label{subsec:prop-rc-general-value-function}

Consider now an RC for a general value function $\mu$. 
As we assert next, 
if from some point onward all choices follow both 
$\leadsto$- and $\circlearrowright$-recommendations,
then a best word from that point is obtained. 
Furthermore, a run that follows the {$\leadsto$-recommendation} all along and from some point follows {the $\circlearrowright$-recommendation}, creates an $\omega$-word with the best possible value of $\mu$ overall.

To understand the idea of the {$\circlearrowright$-recommendation} note that if starting the periodic part after the next letter is better than starting after the current letter,  the value will improve. 
This way for a run following  the  {$\circlearrowright$-recommendations} the value
will never decrease and may improve until settling on a sufficiently good word. Put otherwise, if starting  
from some point $i$ onward  {the $\circlearrowright$-recommendation is} always followed and the word that is formed is $w=u(v)^\omega$,
then for any prefix $u'$ of $w$ of length at least $i$ there is no lasso word $u'(v')^\omega$ that obtains a better value than $u(v)^\omega$. This is true regardless if $u'$ is a prefix or an extension of $u$, as exemplified in \autoref{fig:cyclic-rc-general-value-function}.

\begin{restatable}
[RC compliance guarantees: quantitative case]{proposition}{propmugeneralcnstprops}\label{prop:mu-general-cnst-props} 
    Let $\mu$ be a value function and  $\aut{A}$ be a quatomaton implementing $\mu$. Let 
    $w\in\Sigma^\omega$.
    \begin{enumerate}[nosep]
 \item  
    Let $i,l$ such that $l{\geq} i$, $w[j{+}1]\in\bestrecom(w[..j])$ for all $j {\geq} i$, and $w[k{+}1]\in\finerecom(w[..k])$  for all $k {\geq} l$. Then 
 $\mu(w)\geq\mu(w[..i]w')$ for all $w'{\in}\Sigma^\omega$. 
\item 
    Let $i$ be such that $w[j{+}1]\in\finerecom(w[..j])$ for all $j {\geq} i$.  Let $u\prec w$ s.t. $|u|{\geq} i$ and let $v\in\Sigma^+$ close a cycle on $u$. Then 
$\mu(w)\geq \mu(u(v)^\omega)$.
\label{prop:preferred-fine}

    \end{enumerate}
\end{restatable}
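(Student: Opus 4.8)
The plan is to prove each inequality by first bounding its right-hand side using \autoref{def:prefix-value-function-for-graphs} and then showing that $w$, by obeying the recommendations, actually realises that bound. For the second item, since $v$ closes a cycle on $u$, the lasso word $u(v)^\omega$ is among the words ranged over in the definition of $\lassovalue(u)$, so $\mu(u(v)^\omega)\le\lassovalue(u)$; hence it suffices to show $\mu(w)\ge\lassovalue(u)$. Symmetrically, for the first item every $w[..i]\,w'$ is an any-extension of $w[..i]$, so $\mu(w[..i]\,w')\le\prefixvalue(w[..i])$, and it suffices to show $\mu(w)\ge\prefixvalue(w[..i])$. I would therefore prove the second item first and reuse it in the first.

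The two facts I would isolate are monotonicity statements that follow cleanly from the definitions. First, obeying $\bestrecom$ freezes $\prefixvalue$: if $w[j{+}1]\in\bestrecom(w[..j])$ there is $w''$ with $\mu(w[..j]\,w[j{+}1]\,w'')=\prefixvalue(w[..j])$, so $\prefixvalue(w[..j{+}1])\ge\prefixvalue(w[..j])$, and together with \autoref{obs:prd-leadsto-circ-ext} (monotonicity of $\prefixvalue$) this forces equality; by induction $\prefixvalue(w[..j])=\prefixvalue(w[..i])$ for all $j\ge i$. Second, obeying $\finerecom$ makes $\lassovalue$ non-decreasing, via a rotation argument: if $\sigma=w[k{+}1]\in\finerecom(w[..k])$ and $\sigma v_2$ is a preferred cyclic-extension of $w[..k]$ of value $t=\lassovalue(w[..k])$, then $w[..k](\sigma v_2)^\omega=w[..k{+}1](v_2\sigma)^\omega$ and $v_2\sigma$ is a cycle on $w[..k{+}1]$, whence $\lassovalue(w[..k{+}1])\ge\mu(w[..k{+}1](v_2\sigma)^\omega)=t=\lassovalue(w[..k])$. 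Consequently $\lassovalue(w[..k])\ge\lassovalue(u)$ for every $k\ge m$.

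For the second item I would then argue that following $\finerecom$ drives $w$ onto an optimal loop. Since every recommended letter keeps the run inside the current MSCC and the automaton is finite, $w$ eventually stays in a single MSCC and repeatedly closes cycles on a recurrent state reachable from $\aut{A}(u)$; using the prefix-cycle clause of \autoref{def:preferred-cyclic-extension} one shows that each closed cycle has $\mu$-value at least the current $\lassovalue$, which by the previous paragraph is $\ge\lassovalue(u)$. Because the optimum of $\mu$ over the paths of a finite graph is attained by a lasso word, $w$ settles into repeating such an optimal cycle, giving $\mu(w)\ge\lassovalue(u)\ge\mu(u(v)^\omega)$. The main obstacle is precisely this last link: for a fully general $\mu$ one must rule out that concatenating the locally optimal cycles $w$ selects yields a periodic word whose value dips below each local optimum, and it is here that the prefix-cycle condition, together with attainability of the optimum by lassos over a finite state space, is essential.

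Finally, the first item combines the two monotonicity facts. From $l$ onward $w$ obeys $\finerecom$, so by the second item $\mu(w)\ge\lassovalue(w[..l])$; and from $i$ onward $w$ obeys $\bestrecom$, so $\prefixvalue(w[..j])=p:=\prefixvalue(w[..i])$ for all $j\ge i$. It remains to see that obeying \emph{both} recommendations forces the cyclic value to catch up to the any-value, i.e.\ $\lassovalue(w[..j])=p$ for some $j\ge l$. I would establish this by taking a $p$-valued any-extension of $w[..l]$ to be a lasso word, observing that $\bestrecom$ keeps $w$ on a path toward that optimal loop while $\finerecom$ closes it, and noting that once the run is inside the recurrent component carrying the loop the any- and cyclic-values coincide; then $\mu(w)\ge\lassovalue(w[..j])=p\ge\mu(w[..i]\,w')$ for every $w'$. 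As in the second item, the delicate point is guaranteeing that the best any-extension is realised cyclically when both recommendations are simultaneously obeyed.
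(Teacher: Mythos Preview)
Your proposal follows essentially the same route as the paper: both proofs reduce item~1 to item~2, both establish that obeying $\bestrecom$ freezes $\prefixvalue$ (via \autoref{obs:prd-leadsto-circ-ext}) and that obeying $\finerecom$ makes $\lassovalue$ non-decreasing (your rotation argument is the paper's ``common cycle'' argument made explicit), and both argue that finiteness of the state space forces the non-decreasing sequence $\lassovalue(w[..j])$ to stabilise so that $w$ eventually repeats an optimal cycle. The paper's proof is terser and simply asserts the step you flag as the ``main obstacle'' --- that the periodic word $w$ settles into has $\mu$-value at least the stabilised $\lassovalue$ --- by appealing to the prefix-cycle clause of \autoref{def:preferred-cyclic-extension} and finiteness of simple cycles and trails, without spelling out why concatenating locally optimal cycles cannot lower the value for a general $\mu$; you correctly identify this as the point requiring care, and your treatment is in that sense more honest than the paper's, but the underlying decomposition and key lemmas are the same.
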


\begin{proof}
    \begin{enumerate}
        \item Given $w[..i]$, the highest value that can be achieved is $\mu(w[..i]w')$ for some $w'{\in}\Sigma^\omega$. If $w[j{+}1]{\in}\bestrecom(w[..j])$ 
        for every $j$ starting from index $i$ onward, 
        then for every step $j{\geq} i$ it holds that $\prefixvalue(w[..j])=\prefixvalue(w[..j{-}1])$. By \autoref{obs:prd-leadsto-circ-ext}, $\mu^\leadsto_{\aut{A}}$ is non-increasing so the highest achievable value for each step is the same. 
        Moreover, there exists $l{\geq} i$ such that the word $w$ reaches its highest-value achievable period on the $l$-th letter of $w$. If $w[k{+}1]{\in}\finerecom(w[..k])$ 
        for every $k$ starting from index $l$ onward, 
        then by choosing a letter from $\bestrecom(w[..k])$ the highest value is achievable and, according to \autoref{prop:preferred-fine}, by choosing a letter from $\finerecom(w[..k])$ for each step the maximal value is achieved and $\mu(w)\geq\mu(w[..i]w')$ for all $w'{\in}\Sigma^\omega$.
         \item If $w[j{+}1]{\in}\finerecom(w[..j])$ from index $i$ onward then for every $j{\geq} i$ it holds that $\lassovalue(w[..j]){\geq}\lassovalue (w[..j{-}1])$ because $\aut{A}(w[..j])$ and $\aut{A}(w[..j{-}1])$ have a common cycle. That is, by reading a letter $w[j]{\in}\finerecom(w[..j{-}1])$,  
         the value for this cycle is still an option for $w[..j]$. But $\aut{A}(w[..j])$ can be a part of a better cycle so the possible value for $w[..j]$ can be higher than for $w[..j{-}1]$. That is, at each step the chosen letter is the first letter of a highest-value cycle that could be read starting from current prefix. 
         Since the number of simple cycles {and trails} is finite, 
         eventually the non-decreasing sequence of values $\lassovalue(w[..j])$ reaches its maximum 
         for a prefix of $w$ and then continues to go over this cycle. 
         Then, each cycle that is closed before reaching the cycle described above is at most as good as the chosen one. Also, every cycle that is closed after reaching the cycle described above is at most as good as this cycle, according to the maximality of its cyclic-value. Therefore, for all prefixes $u{\preceq} w$ such that $u{>}|i|$ and for all $v$ s.t.  $\aut{A}(uv){=}\aut{A}(u)$ we have $\mu(w){\geq} \mu(u(v)^\omega)$.
        \qedhere
    \end{enumerate}
\end{proof}

\subsection{A strong RC is not always possible}
Recall that the definition of a runtime consultant includes a weak and a strong version. The strong version returns the set of all letters that following them can lead to a best value for the infinite word, whereas the weak version settles with a subset of these. The motivation for the strong version is that we do not want the RC to make arbitrary choices, rather we would like it to provide the runtime controller with all the options that are equally best, and let it decide. As we show next, there are cases where it is \emph{impossible} to construct a strong RC.

\begin{restatable}[]{proposition}{propnobestrcforreach}\label{prop:no-best-rc-for-reach}
There exist instances of the reachability problem for which there is no strong $\leadsto$-RC.\footnote{A reachability objective aims to reach some target to satisfy the property.}
\end{restatable}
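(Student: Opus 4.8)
The plan is to exhibit a single small reachability instance and argue that \emph{strongness} forces the recommendation to contain a non-committal move. Concretely, I would take the two-state automaton over $\Sigma=\{a,b\}$ with initial state $q_0$ and a single accepting, absorbing target state $T$, transitions $q_0\xrightarrow{a}T$, $q_0\xrightarrow{b}q_0$, and $T$ looping on both letters (realized as a DPA by ranking $T$ even and $q_0$ odd). The objective is to reach $T$, so $L$ is the set of words visiting $T$ and the value function $\mu$ is its indicator. The first step is to compute the $\leadsto$-data at the prefixes $b^k$. Since $T$ is reachable from $q_0$, we have $\prefixvalue(b^k)=\true$, and both letters begin a preferred any-extension: $a$ reaches $T$ immediately, while $b\,a\cdots$ reaches $T$ after one stall. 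Hence $\bestrecom(b^k)=\{a,b\}$ for every $k$, and by \autoref{def:preferred-any-extension} every \emph{strong} $\leadsto$-RC is forced to return $\{a,b\}$ at each such prefix.

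The second step is the contradiction. Consider the run that obeys the strong recommendation by always selecting the recommended letter $b$; it produces $b^\omega$, which never visits $T$, so $\mu(b^\omega)=\false<\prefixvalue(\varepsilon)$. Thus, although each individual letter taken lies in $\bestrecom$, fully obeying the strong recommendation fails to attain the optimal value, and because any strong $\leadsto$-RC must return exactly $\bestrecom(b^k)=\{a,b\}$ at $q_0$, no strong $\leadsto$-RC can drive the run to the target. This is the operative sense in which a strong $\leadsto$-RC does not exist here: the set of all value-optimal first moves is too permissive to steer the run. To make the failure specific to strongness I would also note that the weak $\leadsto$-RC returning only $\{a\}$ \emph{does} attain the optimum when obeyed (as $T$ is absorbing, no cyclic commitment is needed), so the obstruction is exactly the requirement to recommend every optimal move rather than any deficiency of $\bestrecom$ itself.

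Finally I would explain why reachability is fundamentally different from the safety-flavoured case ($\inf$), where a strong $\leadsto$-RC does exist. For $\inf$ the objective constrains every step, so recommending all moves that keep the optimal value achievable and then obeying them forces every edge to stay above the threshold and attains the optimum outright. For a reachability objective (a liveness-style ``eventually'' requirement), keeping the target reachable does not force progress toward it: the letter $b$ preserves $\prefixvalue=\true$ yet makes no progress, so strongness must recommend it while it simultaneously permits an infinite stall. I expect the main obstacle to be stating the claim cleanly --- making precise that the impossibility is about a strong recommendation being unable to guarantee the objective when obeyed, rather than about $\bestrecom$ failing to be well-defined (it always is) --- and then verifying carefully that the stalling letter genuinely lies in $\bestrecom$, which is the step that ties the strong RC's hands.
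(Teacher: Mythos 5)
Your proof is correct and follows essentially the same argument as the paper: exhibit a reachability instance where strongness forces the recommendation to include a non-progressing letter (since the Boolean value ignores distance to the target), so a fully compliant run can stall forever without reaching the target, contradicting the guarantee that following the strong recommendations attains the maximal value. The only difference is cosmetic --- you use a single absorbing target with a self-loop on the initial state, while the paper uses two targets and a ping-pong cycle between two vertices --- but the mechanism is identical.
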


\begin{proof}
    Assume towards contradiction that there exists a strong $\leadsto$-RC for any reachability problem, call it \alg{R}.
    Consider a graph with two reachability targets $r_1$ and $r_2$. Assume current vertex $v$ has a path of length $k_1$ that starts with edge $e_1$ to $r_1$. Also, $v$ has a path of length $k_2$ that starts from edge $e_2$ to $r_2$. Edge $e_1$ (resp. $e_2$) connects $v$ to $v_1$ (resp. $v_2$) and $v_1$ (resp. $v_2$) has an edge back to $v$. Since both $e_1$ and $e_2$ lead to a reachability target, \alg{R} has to recommend both letters corresponding to $e_1$ and $e_2$. Assume wlog $e_1$ is chosen and the run moves to $v_1$. Then, $v_1$ has a path of length $k_1{-}1$ to $r_1$ and a path of length $k_2{+}1$ to $r_2$. Then, the recommendation of \alg{R} from $v_1$ consists of both letters corresponding to the first edges of the paths described above, as they both lead to a reachability target. Thus, an edge that returns to $v$ can be chosen. 
    Note that neither $v$ and $v_1$, nor one of the edges between them is a reachability target. 
    Thus, if the run continues indefinitely by cycling between 
$v$ and $v_1$, without ever reaching $r_1$ or $r_2$—the resulting word violates the reachability objective—contradicting the claim that following 
\alg{R}'s recommendations ensures maximal value.
\end{proof}

Intuitively, the absence of a strong $\leadsto$-RC for reachability stems from the Boolean nature of the reachability value function, that ignores how far a target is. As a result, the recommendations need not guide the system toward a closer target. 
In fact, suggesting all actions that might eventually lead to a target can result in cycles where the run loops indefinitely without ever reaching the target.

In cases where no strong RC is possible, we thus suffice with a weak RC. A weak RC for reachability targets can be implemented by recommending actions that lead to vertices whose distance from the reachability targets is smaller. In other words, replacing the Boolean reachability value function, with a quantitative value function of getting closer to a reachability target. 

As we shall see in the next section, this impossibility  implies various common quantitative operators do not have a strong $\leadsto$-RC.

For $\circlearrowright$-RC we shall see that various value functions reduce to the problem we term the \emph{cycle-reachability} problem. This is the problem of reaching from a current vertex $v$ one of the reachability targets $t$ from which $v$ is reachable. This problem has no strong $\circlearrowright$-RC.

\begin{restatable}[]{proposition}{propnofinercforreach}\label{prop:no-fine-rc-for-reach}
There exist instances of the cycle-reachability problem for which there is no strong $\circlearrowright$-RC.
\end{restatable}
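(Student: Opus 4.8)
The plan is to mirror the impossibility argument for the $\leadsto$-case (\autoref{prop:no-best-rc-for-reach}), but now arranging the unavoidable loop to lie on target\emph{-cycles} rather than on target\emph{-paths}. I would assume towards contradiction that a strong $\circlearrowright$-RC $\alg{R}$ exists for every cycle-reachability instance, and then exhibit a single instance on which following $\alg{R}$ fails to close a target-cycle. The instance has a current vertex $v$, an auxiliary vertex $v_1$, and two targets $t_1,t_2$, wired so that (i) there is a simple cycle $C_1$ on $v$ that visits $t_1$ and whose first edge is $e_1\colon v\to v_1$, and (ii) there is a simple cycle $C_2$ on $v_1$ that visits $t_2$ and whose first edge is the return edge $r\colon v_1\to v$. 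Concretely $C_1=v\xrightarrow{e_1}v_1\to t_1\to v$ and $C_2=v_1\xrightarrow{r}v\to t_2\to v_1$; both targets reach $v$ (directly from $t_1$, and from $t_2$ via $v_1$ and $r$), so this is a legitimate cycle-reachability instance, and assigning distinct letters to the six edges keeps the automaton deterministic.

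The key step is to show that $\alg{R}$, being strong, is forced to recommend exactly the two edges that form the undesired loop. At $v$ the best cyclic value is $\true$, witnessed by $C_1$, so $\lassovalue$ of the current prefix is $\true$; since $e_1$ is the first letter of the $\true$-cyclic-extension $C_1$, and no proper prefix of $C_1$ closes a cycle on $v$ (the prefixes $e_1$ and $e_1\cdot(v_1{\to}t_1)$ end at $v_1$ and $t_1$, not $v$), $C_1$ is a preferred cyclic-extension in the sense of \autoref{def:preferred-cyclic-extension} and hence $e_1\in\finerecom(v)$. Symmetrically, at $v_1$ the value $\true$ is witnessed by $C_2$, whose first edge is $r$, and again no proper prefix of $C_2$ closes a cycle on $v_1$; thus the sub-cycle (footnote) clause of \autoref{def:preferred-cyclic-extension} is vacuous and $r\in\finerecom(v_1)$. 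Because $\alg{R}$ is strong it must return all of $\finerecom$, so $e_1$ is among its recommendations at $v$ and $r$ is among its recommendations at $v_1$.

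Consequently the run that takes $e_1$ at $v$ and $r$ at $v_1$ obeys $\alg{R}$ at every step, yet produces the lasso word $(e_1 r)^\omega$ whose period visits neither $t_1$ nor $t_2$; hence the cycle-reachability objective is not met although a $\true$-valued cycle was available throughout. This contradicts the defining guarantee that following a (strong) RC yields the best achievable value, exactly as in the conclusion of \autoref{prop:no-best-rc-for-reach}. The main obstacle I anticipate is precisely the verification in the second paragraph: I must ensure that the two oscillation edges are genuinely \emph{forced} into $\alg{R}$'s recommendation, i.e. that the sub-cycle clause of \autoref{def:preferred-cyclic-extension} does not disqualify $C_1$ or $C_2$. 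This is what dictates the design of the gadget---routing each target-cycle so that it revisits its start vertex only upon completion---and it is also the feature that separates cycle-reachability (and, as the paper notes, $\sup$ and $\limsup$) from the operators for which a strong $\circlearrowright$-RC does exist.
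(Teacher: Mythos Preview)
Your proposal is correct and follows essentially the same approach as the paper's own proof: exhibit two adjacent vertices such that at each one the edge back to the other is forced into $\finerecom$ (because it begins a target-visiting cycle with no shorter sub-cycle), so a strong RC must recommend the two-edge oscillation, which never hits a target. The paper uses a bidirectional $4$-cycle $v_0,\ldots,v_3$ with a single target edge $(v_0,v_1)$ and oscillates between $v_1$ and $v_2$, whereas you use two target vertices $t_1,t_2$ wired through $v$ and $v_1$; your explicit check of the sub-cycle clause of \autoref{def:preferred-cyclic-extension} is a nice addition that the paper leaves implicit.
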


The proof follows the same idea as in the proof of \autoref{prop:no-best-rc-for-reach} (see \autoref{proof:prop:no-fine-rc-for-reach}).

\section{Runtime consultants for widely used operators}\label{sec:rc-aggregators}

We turn to discuss RCs for limit operators. These operators are often used in verification to model parameters of the examined system. For example, given a weighted labeled graph that represents power usage,
the \emph{supremum} models peak power consumption. In a system that receives requests and generates responses, the \emph{limit average} value function models the average response time~\cite{ChatterjeeDH10}. 
Recall that our aim is to maximize the value wrt the examined value function.

Throughout this section we assume $\aut{A}$ is a weighed labeled graph $(\Sigma,Q,q_0,\delta,\allowbreak \val)$ where $\val$ is the considered operator, and $\theta_{max}$ is the largest weight on the graph.
We give here proof sketches, the complete proofs are in the appendix.

\paragraph{\limsup\ and \sup} 
We start with the \limsup\ and \sup\ operators. 

\begin{restatable}[RC for \limsup]{proposition}{proprcforlimsup}\label{prop:rc-for-limsup}
Let $\arop\in\{\leadsto,\circlearrowright\}$.
    \begin{enumerate}[nosep]
        \item A strong $\arop$-RC for \limsup\ does not always exist.
        \item A weak $\arop$-RC for \limsup\ can be constructed in polynomial time, such that the running time of the constructed weak $\arop$-RC is constant. 
    \end{enumerate}
\end{restatable}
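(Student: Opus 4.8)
The plan is to handle the two claims separately, in each case splitting along $\arop\in\{\leadsto,\circlearrowright\}$, and to base everything on a combinatorial characterization of the two value functions. Since $\limsup(w)$ is the largest weight traversed infinitely often, a run realizes value $t$ only by traversing a weight-$t$ edge infinitely often, hence by cycling on a (cyclic) trail containing it. First I would record the consequences: $\lassovalue(u)=M_{C}$, the maximum weight of an edge internal to the MSCC $C$ containing $\aut{A}(u)$, and $\prefixvalue(u)=\max\{M_{C'}\mid C'\text{ reachable from }\aut{A}(u)\}$. Crucially, both depend only on $\aut{A}(u)$, a fact I will exploit for the constant-time bound; note also that because every infinite run settles in an MSCC that must itself contain a cycle, these quantities are always well defined.

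For part~1 I would reduce from the (cycle-)reachability impossibilities, \autoref{prop:no-best-rc-for-reach} and \autoref{prop:no-fine-rc-for-reach}. Given a hard reachability instance I build a weighted graph by attaching a weight-$M$ self-loop to every target and giving every other edge weight $1<M$. Then $\limsup(w)=M$ exactly when the run eventually loops on a target self-loop (i.e.\ reaches a target) and $\limsup(w)=1$ otherwise; moreover $\prefixvalue(u)=M$ iff a target is reachable from $\aut{A}(u)$, so $\bestrecom(u)$ coincides with the reachability $\leadsto$-recommendation on this gadget. A strong $\leadsto$-RC for $\limsup$ would therefore yield a strong $\leadsto$-RC for reachability, contradicting \autoref{prop:no-best-rc-for-reach}: the witnessing run is the one cycling forever among the low-weight back edges, all of which lie in $\bestrecom$, hence is never steered onto a target self-loop and attains value $1<M$. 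The $\circlearrowright$ case is identical once the high-weight edge is placed inside a single MSCC so it can be reached from and returned to the current state, reducing to \autoref{prop:no-fine-rc-for-reach}.

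For part~2 the preprocessing is an MSCC decomposition: compute the MSCCs and the condensation DAG, compute each $M_C$, and obtain $\prefixvalue$ at every state by a single pass over the condensation DAG in reverse topological order; for $\circlearrowright$ simply set $\lassovalue(q)=M_{C(q)}$. Unlike the strong RC, the weak RC must make progress, so I additionally precompute, for each state, a shortest route that first reaches a best reachable MSCC and then, inside it, traverses a shortest cyclic trail through a maximum-weight edge of that MSCC; the RC recommends precisely the first edge(s) of this route (confined to $C(q)$ in the $\circlearrowright$ case). Every such edge lies in $\bestrecom$ (resp.\ $\finerecom$) --- all internal edges of the best MSCC keep $\prefixvalue$ at its maximum --- so the output is a legal, and non-empty, subset of the recommendation set. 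Following it reaches and then perpetually re-traverses a maximum-weight edge, so by the guarantees of \autoref{prop:mu-general-cnst-props} the attained value equals $\prefixvalue(u)$ (resp.\ $\lassovalue(u)$), the optimum. All these computations are polynomial (linear-time MSCC/DAG work plus per-state BFS for the shortest routes).

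Finally, since every recommendation is indexed by the current state, the constructed RC stores one table mapping each state to its recommended edge-set; at runtime it maintains $\aut{A}(u)$ incrementally (one transition per letter) and returns the relevant entry in constant time. In contrast to the \sup\ and \inf\ operators, no comparison against a history-dependent running extremum is needed, which is exactly why the running time here is constant rather than logarithmic in $\theta_{max}$. I expect the main obstacle to be part~1: the reductions must be made watertight by verifying that the $\limsup$ recommendation sets $\bestrecom$/$\finerecom$ genuinely coincide with the (cycle-)reachability recommendations on the gadget --- in particular that the problematic back edges really do belong to them --- so that the impossibility transfers and is not an artifact circumvented by the encoding.
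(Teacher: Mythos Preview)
Your proposal is correct and follows essentially the same approach as the paper: both reduce part~1 to the impossibility results for (cycle-)reachability (\autoref{prop:no-best-rc-for-reach} and \autoref{prop:no-fine-rc-for-reach}) by observing that the $\limsup$ recommendations are exactly those steering toward a highest-weight edge lying on a cycle, and both handle part~2 via an MSCC decomposition, per-MSCC maximum-weight computation, shortest-path precomputation, and a per-state lookup table. Your write-up is somewhat more explicit than the paper's (you spell out the gadget for part~1 and invoke \autoref{prop:mu-general-cnst-props} for correctness of the weak RC), but the underlying argument is the same; just be sure to cover the corner case where $\aut{A}(u)$ lies in no MSCC, for which $\finerecom(u)=\emptyset$.
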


\begin{proof}[Proof sketch]
 The question of constructing a 
 $\leadsto$-RC and a $\circlearrowright$-RC for $\limsup$ reduces to reachability and cycle-reachability resp. where the reachability targets are cycles with an edge of the highest weight. It follows from \autoref{prop:no-best-rc-for-reach} and \autoref{prop:no-fine-rc-for-reach}  that $\limsup$ has no strong RCs in the general case. To obtain constant running time, during preprocessing we associate with each vertex the set of letters on {its} outgoing edges that can lead to the highest weight from that point. The full proof can be found in \autoref{proof:prop:rc-for-limsup}.
\end{proof}

\begin{restatable}[RC for \sup]{proposition}{proprcforsup}\label{prop:rc-for-sup}
Let $\arop\in\{\leadsto,\circlearrowright\}$.
    \begin{enumerate}[nosep]
        \item A strong $\arop$-RC for \sup\ does not always exist.
        \item A weak $\arop$-RC for \sup\ can be constructed in polynomial time, such that the running time of the constructed weak $\arop$-RC is  $\log(\theta_{max})$. 
    \end{enumerate}
\end{restatable}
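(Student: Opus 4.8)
The plan is to follow the same two-step template as the \limsup\ proof, exploiting the one structural difference between the two operators: under \sup\ the value of a word is the largest weight it \emph{ever} traverses (a single traversal of a heavy edge suffices), whereas \limsup\ only rewards edges seen infinitely often. Consequently, for \sup\ the reachability target is simply ``traverse a heaviest reachable edge'' rather than ``reach a cycle carrying a heaviest edge.'' Concretely, writing $q=\aut{A}(u)$ and letting $m$ denote the largest weight already traversed along $u$, one checks that $\prefixvalue(u)=\max(m,R(q))$, where $R(q)$ is the largest weight reachable from $q$, and $\lassovalue(u)=\max(m,C(q))$, where $C(q)$ is the largest weight on any edge inside the MSCC of $q$ (every such edge lies on a cycle through $q$).

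For part~1, I would reduce reachability and cycle-reachability to \sup. Given a reachability instance, keep the same graph, put weight $1$ on the edges that witness reaching a target and weight $0$ on all others; then $\sup(w)=1$ iff $w$ reaches a target, so $\bestrecom(u)$ coincides with the strong $\leadsto$-reachability recommendation, and a strong $\leadsto$-RC for \sup\ would yield one for reachability, contradicting \autoref{prop:no-best-rc-for-reach}. The analogous weighting inside a single MSCC, together with \autoref{prop:no-fine-rc-for-reach}, rules out a strong $\circlearrowright$-RC. The same $v,v_1$ gadget of those proofs makes the failure explicit: both first edges are recommended, yet one may cycle between $v$ and $v_1$ forever while the heavy edge is never traversed, so the attained value is $0<1$.

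For part~2 (construction), the preprocessing computes, in polynomial time, the MSCC decomposition (linear time) and, for every vertex $q$, the quantities $R(q)$ and $C(q)$ above together with \emph{one} outgoing letter lying on a shortest path toward a witnessing heaviest edge---reachable for $R(q)$, and reachable-within-the-MSCC-while-still-able-to-return-to-$q$ for $C(q)$. These are obtained by a backward search from the heaviest edges (or one pass per distinct weight value), and the $C(\cdot)$ data is restricted to non-trivial MSCCs (on trivial ones $\finerecom$ is empty).

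At runtime the RC maintains the running maximum $m$ of the weights traversed so far. On prefix $u$ with $q=\aut{A}(u)$: for the $\leadsto$-RC, if $m\geq R(q)$ the value is already locked and it returns all enabled letters, otherwise it returns the precomputed toward-$R(q)$ letter; the $\circlearrowright$-RC behaves identically with $C(q)$ in place of $R(q)$. Correctness follows because in the ``locked'' case every continuation attains $\max(m,\cdot)=m$, and otherwise the stored letter strictly progresses toward a heaviest (reachable, resp. in-MSCC) edge, so following it eventually attains $R(q)$ (resp. $C(q)$); this is exactly the weak-RC guarantee of \autoref{prop:mu-general-cnst-props}. Each step performs one update of $m$ and one comparison of two integers bounded by $\theta_{max}$, i.e. of bit-length $\log\theta_{max}$, which accounts for the claimed $\log(\theta_{max})$ running time. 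The main point to get right---and the reason the bound is $\log(\theta_{max})$ rather than the constant time of \limsup---is precisely that \sup\ forces the RC to remember and compare the prefix's running maximum, since a heavy edge already crossed can by itself determine the optimal value; for \limsup\ the prefix maximum is irrelevant, so no such per-step comparison is needed.
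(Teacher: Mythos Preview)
Your approach is essentially the paper's: both parts proceed by (i) reducing the \sup\ consultant problem to (cycle-)reachability of a heaviest edge, invoking \autoref{prop:no-best-rc-for-reach} and \autoref{prop:no-fine-rc-for-reach}, and (ii) precomputing, per vertex, the largest reachable/in-MSCC weight together with a first letter on a shortest path toward it, while at runtime maintaining the running maximum $m$ and comparing it against the precomputed threshold. The $\log(\theta_{max})$ accounting via the bit-length of $m$ is exactly the paper's justification.

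One point to fix: in the ``locked'' case of the $\circlearrowright$-RC ($m\geq C(q)$) you write that the RC ``returns all enabled letters.'' That is not a valid weak $\circlearrowright$-RC. By definition $\finerecom(u)$ contains only letters $\sigma$ for which some $\sigma v$ closes a cycle on $q$, i.e.\ letters whose edge stays inside the MSCC of $q$; a letter that leaves the MSCC is \emph{not} in $\finerecom(u)$, so outputting it violates the required inclusion $\Sigma'\subseteq\finerecom(u)$. The paper handles this by recommending, in that case, only the subset $\Sigma'\subseteq\Sigma$ of letters on edges that remain within the current MSCC. With that correction your argument is complete and matches the paper's.
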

    
\begin{proof}[Proof sketch]
The $\leadsto$-RC problem for \sup\  reduces to reachability problem where the reachability targets are the edges with maximal {weight} that {are} greater than the maximal {weight} seen so far.
Accordingly, a strong RC does not always exist.
 Given the maximal weight in the graph is $\theta_{max}$, the maximal {weight} seen so far, call it $\theta_{cur}$, can be saved in {$\log(\theta_{max})$} bits. 
 To obtain $\log(\theta_{max})$ running time, during preprocessing we associate with each vertex and each weight corresponding to $\theta_{cur}$, the set of letters on outgoing edges that can lead to the highest weight above $\theta_{cur}$ from that point. If this set is empty the RC recommends $\Sigma$.
The $\circlearrowright$-RC for $\sup$ question reduces to the above reachability problem restricted to the current MSCC. 
See \autoref{proof:prop:rc-for-sup} for the full proof.
\end{proof}

\paragraph{\liminf\ and \inf}

\begin{restatable}[RC for \liminf ]{proposition}{proprcforliminf}\label{prop:rc-for-liminf}
    \begin{enumerate}[nosep]
        \item A strong $\leadsto$-RC for \liminf\ does not always exist.
        \item A weak $\leadsto$-RC and a strong $\circlearrowright$-RC 
        for \liminf\ can be constructed in polynomial time, such that the running time of the constructed RC is constant. 
    \end{enumerate}
\end{restatable}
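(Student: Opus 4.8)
The plan is to handle the two arrows separately, reusing the reachability impossibility results for the $\leadsto$-side and exploiting the bottleneck structure of \liminf\ for the $\circlearrowright$-side. First I would record the two relevant quantities explicitly. For a lasso word $u(v)^\omega$ the \liminf\ value is the smallest edge weight occurring on the loop $v$, since exactly the loop edges occur infinitely often. Consequently $\prefixvalue(u)$ equals $B(q)$, the largest bottleneck (largest minimum-edge-weight) over all cycles \emph{reachable} from $q=\aut{A}(u)$, whereas $\lassovalue(u)$ equals the largest bottleneck over cycles \emph{through} $q$. Both depend only on the state $q$ and not on the history $u$, which is ultimately what yields constant running time. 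I would also note that $B$ is non-increasing along edges (consistent with \autoref{obs:prd-leadsto-circ-ext}), since everything reachable from a successor of $q$ is reachable from $q$, and that $\bestrecom(u)=\{\sigma \mid q\xrightarrow{\sigma}q',\ B(q')=B(q)\}$, because the single traversed prefix edge does not affect \liminf.

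For item~(1) I would reduce to \autoref{prop:no-best-rc-for-reach}. Take the reachability counterexample (vertex $v$, edges $e_1,e_2$ to $v_1,v_2$, back-edges $v_1\!\to\!v$, $v_2\!\to\!v$, and disjoint paths to two targets $r_1,r_2$), turn each target into a self-loop of weight $M$, and give every other edge weight $0$. Then the best reachable \liminf\ from each gadget vertex is $M$, so $\bestrecom$ contains both $e_1,e_2$ and both back-edges; a strong $\leadsto$-RC must return all of them, and hence permits the run to cycle $v\!\to\!v_1\!\to\!v\cdots$ indefinitely with value $0<M$, contradicting maximality. This witnesses the nonexistence of a strong $\leadsto$-RC.

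For the weak $\leadsto$-RC I would precompute $B(q)$ for all states by computing, for each distinct weight threshold $t$, the SCCs of the subgraph $G_{\geq t}$ of edges of weight $\geq t$, and taking the largest $t$ for which $q$ reaches a non-trivial SCC; this is polynomial. Restricting to the value-preserving region $\{q'\mid B(q')=B(q)\}$, I would store with each $q$ its shortest distance $d(q)$ to a best cycle and recommend exactly the edges that strictly decrease $d$ (or, once $d(q)=0$, the edges staying inside the relevant $G_{\geq B(q)}$-SCC). This is a nonempty subset of $\bestrecom(u)$, hence a legal weak RC; following it reaches a best cycle in finitely many steps and then loops there, realising value $B(q)=\prefixvalue(u)$. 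As all data are indexed by state, the runtime lookup is constant.

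The heart of the argument is the strong $\circlearrowright$-RC, which I expect to be the main obstacle. Writing $t=\lassovalue(q)$, I would observe that $q$ lies on no cycle of $G_{>t}$ (else its bottleneck would exceed $t$), so every cycle through $q$ inside $G_{\geq t}$ has bottleneck exactly $t$. This lets me characterise $\finerecom(u)$ as precisely the edges $q\xrightarrow{\sigma}q'$ of weight $\geq t$ with $q'$ in the same $G_{\geq t}$-SCC as $q$: any such edge completes to a cycle through $q$ in $G_{\geq t}$ of value $t$, and since every sub-cycle inside $G_{\geq t}$ has bottleneck $\geq t$, the preferred-prefix condition of \autoref{def:preferred-cyclic-extension} is automatic. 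The key structural feature, and the reason a strong RC exists here but not for \sup\ and \limsup, is that every recommended edge is itself safe—weight $\geq t$ and SCC-preserving—so whichever recommended edge the controller actually picks keeps the run following $\finerecom$, and all edges seen from that point on have weight $\geq t$; hence the resulting word satisfies $\liminf\geq t$ regardless of the choices. Returning the \emph{entire} set $\finerecom(u)$ therefore never risks dropping below the achievable optimum (which is exactly what fails when the optimum demands repeatedly taking one designated high edge), and correctness follows from \autoref{prop:preferred-fine}. Computing the $G_{\geq t}$-SCCs and the recommended edge sets per state is polynomial, and the recommendation is again a pure state lookup, giving constant runtime.
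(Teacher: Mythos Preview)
Your proof is correct and follows essentially the same approach as the paper: reduce item~(1) to the reachability impossibility of \autoref{prop:no-best-rc-for-reach}, build the weak $\leadsto$-RC by routing along shortest paths to a best-bottleneck reachable cycle, and build the strong $\circlearrowright$-RC by returning the first edges of best-bottleneck cycles through the current state. Your threshold-subgraph formulation (working with the SCCs of $G_{\geq t}$) is a cleaner way to make the polynomial-time preprocessing explicit, and your verification that every cycle through $q$ inside $G_{\geq t}$ automatically satisfies the preferred-prefix condition of \autoref{def:preferred-cyclic-extension} fills in a detail the paper's proof leaves implicit.
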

\begin{proof}[Proof sketch]
    The $\leadsto$-RC problem for \liminf\  reduces to a reachability problem where the reachability targets are the edges of a cycle with maximal minimum-weight. Accordingly, a strong RC does not always exist.
    To obtain constant running time for the weak $\leadsto$-RC, during preprocessing we associate with each vertex the set of letters on outgoing edges that lead to a cycle with highest minimum-weight from that point. 
    For the strong $\circlearrowright$-RC the set of letters corresponds to outgoing edges that lead to a cycle with highest minimum-weight that includes the current {vertex}. 
    The full proof is available in \autoref{proof:prop:rc-for-liminf}.
\end{proof}

\begin{restatable}[RC for \inf]{proposition}{proprcforinf}\label{prop:rc-for-inf}
Let $\arop\in\{\leadsto,\circlearrowright\}$.
    \begin{enumerate}[nosep]
        \item [] A strong $\arop$-RC for \inf\ can be constructed in polynomial time, such that the running time of the constructed strong $\arop$-RC is $\log(\theta_{max})$. 
    \end{enumerate}
\end{restatable}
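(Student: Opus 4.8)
The plan is to reduce both recommendations to threshold (bottleneck) computations on the weighted graph, exploiting that for \inf\ the value of a word is decided by the single lightest edge it ever traverses. Writing $\theta_{cur}$ for the minimal weight seen along the prefix $u$ read so far and $q=\aut{A}(u)$ for the current state, I would introduce two quantities: $\beta(q)$, the largest $t$ for which there is an infinite path from $q$ using only edges of weight $\geq t$ (equivalently, $q$ reaches a cycle in the subgraph $G_{\geq t}$ of edges of weight $\geq t$), and $\gamma(q)$, the largest $t$ for which $q$ lies on a cycle of $G_{\geq t}$. A Bellman-style identity $\beta(q)=\max_{e:q\to q'}\min(w(e),\beta(q'))$ together with the fact that $\min(\theta_{cur},\cdot)$ is monotone yields $\prefixvalue(u)=\min(\theta_{cur},\beta(q))$ and, analogously, $\lassovalue(u)=\min(\theta_{cur},\gamma(q))$. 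From these I read off the sets: for the $\leadsto$-RC, recommend an edge $e:q\to q'$ exactly when $\min(w(e),\beta(q'))\geq\prefixvalue(u)$; for the $\circlearrowright$-RC, with $s^\ast=\lassovalue(u)$, recommend $e:q\to q'$ exactly when $w(e)\geq s^\ast$ and $q'$ reaches $q$ inside $G_{\geq s^\ast}$. A short check (using that $\min(w(e),\beta(q'))\leq\beta(q)$, resp. that any cycle through $q$ has bottleneck $\leq\gamma(q)$) shows these sets coincide with $\bestrecom(u)$ and $\finerecom(u)$.

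The heart of the argument, and the reason \inf\ escapes the impossibility of \autoref{prop:no-best-rc-for-reach}, is that returning these \emph{full} sets still yields a valid, hence strong, RC. I would first verify that obeying the recommendations never lowers the relevant value: every recommended $\leadsto$-edge has weight $\geq\prefixvalue(u)$, so taking it keeps $\theta_{cur}\geq\prefixvalue(u)$ and keeps $\prefixvalue$ constant (by \autoref{obs:prd-leadsto-circ-ext} it can only stay or drop, and equality of the threshold prevents a drop). Consequently, along any run obeying the $\leadsto$-recommendations from step $i$, \emph{every} future edge has weight $\geq\prefixvalue(w[..i])$ while $\theta_{cur}\geq\prefixvalue(w[..i])$, so the \inf-value of the whole word is exactly $\prefixvalue(w[..i])$, the best attainable, no matter which recommended edges are chosen. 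Unlike reachability, no particular target must actually be hit: merely staying inside the threshold subgraph suffices, so adversarial wandering within the recommendation set cannot hurt. For the $\circlearrowright$ direction I would show $s^\ast$ is non-decreasing along obedient runs (if $q'$ reaches $q$ in $G_{\geq s^\ast}$ and $q$ lies on a cycle there, then $q'$ does too, so $\gamma(q')\geq s^\ast$), and then bound, as in \autoref{prop:mu-general-cnst-props}, $\mu(w)\geq\min(\theta_{cur}^{(p)},\gamma(q_p))\geq\mu(u(v)^\omega)$ for every cycle-closing $v$ on every prefix $u\prec w$ with $|u|\geq i$. This gives a strong RC in both directions.

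For the complexity bounds I would compute $\beta(\cdot)$ and $\gamma(\cdot)$ during preprocessing by scanning the $O(|E|)$ distinct weights from high to low while maintaining $G_{\geq t}$: $\beta(q)\geq t$ iff $q$ reaches a cycle of $G_{\geq t}$ and $\gamma(q)\geq t$ iff $q$ sits in a non-trivial SCC of $G_{\geq t}$, each decided by a single SCC/reachability pass, for an overall polynomial bound; I would also tabulate, for every state $q$ and every distinct value of $\theta_{cur}$, the corresponding recommendation set (for $\circlearrowright$ reusing the precomputed within-threshold reachability). At runtime the consultant stores $\theta_{cur}$ in $\log(\theta_{max})$ bits, updates it with one comparison upon reading each letter, and looks up the tabulated set for the pair $(q,\theta_{cur})$; since the graph is fixed, the only non-constant cost is the weight comparison and storage, yielding running time $\log(\theta_{max})$. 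The main obstacle I anticipate is not the construction but the \emph{correctness of strongness}: carefully arguing that \inf's threshold structure makes \emph{every} choice within the full recommendation set safe, so that, in contrast to \sup, \limsup, and the $\leadsto$-direction of \liminf, there is no hidden reachability obligation an adversary could evade.
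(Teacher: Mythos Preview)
Your proposal is correct and follows essentially the same approach as the paper. Your quantities $\beta(q)$ and $\gamma(q)$ coincide with the stabilised edge/vertex values the paper computes, and your recommendation sets agree with the paper's $\Sigma_{v,\theta}^{max}$/$\Sigma_{v,\theta}^{min}$ tables; the only cosmetic difference is that you use a uniform threshold-sweep over $G_{\geq t}$ for both directions, whereas the paper uses a Bellman-style value iteration for the $\leadsto$ case and the threshold sweep only for $\circlearrowright$. Your explicit argument that returning the \emph{full} sets is safe (because \inf\ is a threshold/safety condition rather than a reachability obligation) is more detailed than the paper's one-line ``this is a safety problem'' intuition, but it is exactly the point the paper is gesturing at.
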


\begin{proof}[Proof sketch]
    Given the maximal weight in the graph is $\theta_{max}$, the minimal value seen so far, call it $\theta_{cur}$, 
    can be saved in {$\log(\theta_{max})$} bits. 
    For the $\leadsto$-RC, to obtain $\log(\theta_{max})$ running time, during preprocessing we associate with each vertex the maximal minimum-value that can be obtained, based on $\theta_{cur}$. Similarly for $\circlearrowright$-RC, but there we compute the maximal minimum-value that can be obtained by a cycle that includes the current {vertex}. The full proof can be found in \autoref{proof:prop:rc-for-inf}.
\end{proof}

\paragraph{\limavg }
\quad 
So far, RC construction for all operators 
was feasible in polynomial time. 
For \limavg\ this is not the case. That is, the $\leadsto$-RC construction can be done in polynomial time, but the construction for $\circlearrowright$-RC is more challenging. Here the problem reduces to the problem of finding a 
 cyclic-trail, where a {path} is a \emph{trail} if no edge in it appears twice. Note that vertices can appear more than once in a trail. For instance, recall the quatomaton for a value function similar to \limavg\ given in \autoref{fig:quantitative}. As discussed in~\autoref{ex:quantitative}, when in state $q_1$ the best cyclic-extension is the non-simple loop, i.e. trail,  $\sigma_3\sigma_0\sigma_4$. 
The following proposition states this reduction.

\begin{restatable}[]{proposition}{propfinetrialtoLtrail}\label{prop:fine-trial-to-Ltrail}
Finding $\circlearrowright$-recommendations for \limavg\ reduces to finding cyclic trails of maximal mean-weight involving the current vertex.
\end{restatable}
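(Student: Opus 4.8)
The plan is to unwind the definitions of $\finerecom$ and $\lassovalue$ until the only remaining computational task is an optimization over cyclic trails through the current state $q=\aut{A}(u)$. First I would recall that $\finerecom(u)$ is the set of first letters of \emph{preferred} cyclic-extensions, i.e.\ cycles $v$ on $u$ achieving $\lassovalue(u)$ that close no strictly-worse sub-cycle as a proper prefix. For \limavg\ the value of such an extension has a purely combinatorial form: reading $v$ from $q$ traces a closed walk back to $q$, and $\mu_{\aut{A}}(uv^\omega)$ is exactly the mean weight of that period (the transient spoke $u$ contributing nothing). Hence $\lassovalue(u)$ is determined entirely by the edge weights along closed walks at $q$, and the whole proposition reduces to identifying the combinatorial objects that realize this optimum.

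The heart of the argument is pinning down the correct granularity of these objects, and this is where I expect the main obstacle to lie. I would argue that the optimum is realized by a cyclic \emph{trail} through $q$—a closed walk repeating no edge but possibly revisiting vertices—rather than by a vertex-simple cycle or by an arbitrary closed walk. Simple cycles are too restrictive: as \autoref{fig:quantitative} illustrates, the best period at $q_1$ is $\sigma_3\sigma_0\sigma_4$, which \emph{must} revisit $q_4$ in order to absorb the high-weight edge $\sigma_0$, so restricting to vertex-simple cycles would miss the optimum. Conversely, for the cyclic-extension value to be attained by a concrete finite period, the relevant extensions are precisely the cyclic trails at $q$: there are finitely many of them, so the maximal mean weight among them is attained, and $\lassovalue(u)$ equals this maximum. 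Making this restriction to trails precise (vertices may repeat, edges may not) and justifying that it captures $\lassovalue(u)$ is the delicate step, and it is exactly what separates this operator from the others—it is what makes the resulting problem strictly harder than the simple-cycle mean-weight problem solvable by Karp's algorithm.

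It then remains to reconcile this with the \emph{preferred} condition and to phrase the equivalence as a reduction. Here I would use that along a maximal-mean cyclic trail $T$, any prefix that already returns to $q$ closes a sub-cycle $T_1$ which is itself a cyclic trail through $q$, and $\mathrm{mean}(T)$ is the length-weighted average of $\mathrm{mean}(T_1)$ and the mean of the remainder; since both parts are cyclic trails through $q$ their means are at most the maximum, so both must equal it. Consequently the shortest prefix of $T$ returning to $q$ is a preferred cyclic-extension, and the first letter of every maximal-mean cyclic trail (taken at its first return to $q$) lies in $\finerecom(u)$, and conversely. This yields the reduction: an oracle returning the maximal-mean cyclic trails through $q$, together with their first edges, computes $\lassovalue(u)$ and $\finerecom(u)$ with only polynomial overhead, while any procedure computing $\finerecom$ exposes the first edges of such trails. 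This equivalence is precisely what sets up the later coNP-completeness of the $\circlearrowright$-RC for \limavg.
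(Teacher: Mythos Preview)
Your approach is essentially the paper's: both argue that simple cycles are too restrictive (vertices must be allowed to repeat, as the $\sigma_3\sigma_0\sigma_4$ example shows) while arbitrary closed walks are too loose (repeating an edge lets one pad with copies of the best cycle, so the optimum is only approached in the limit rather than realized by a genuine return to $q$), leaving cyclic trails as the right granularity. Your treatment is in fact considerably more detailed than the paper's short informal paragraph --- in particular you explicitly handle the \emph{preferred}-extension clause via the convex-combination observation that any proper sub-cycle of a maximal-mean trail must itself have maximal mean, which the paper does not spell out.

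One small technical wrinkle worth tightening: your sentence ``$\lassovalue(u)$ equals this maximum'' is not literally true in general. By definition $\lassovalue(u)$ is a supremum over \emph{all} closed walks at $q$, and this sup can equal the best simple-cycle mean in the MSCC of $q$ even when no closed walk through $q$ attains it (e.g.\ $q$ has a weight-$0$ self-loop, $r$ has a weight-$1$ self-loop, and $q\leftrightarrow r$ edges have weight $0$: the trail maximum through $q$ is $1/3$ but $\lassovalue(u)=1$). The paper sidesteps this by arguing pragmatically that repeating an edge ``will not reflect looping back to the original vertex $v$'', i.e.\ it tacitly takes the trail maximum as the operative target of the $\circlearrowright$-RC rather than the literal sup. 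Your hedge ``for the cyclic-extension value to be attained'' already points in this direction; just be careful not to then assert equality with $\lassovalue(u)$ unconditionally.
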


Intuitively, the reason is that 
{the} $\circlearrowright$-RC looks for the maximal mean-weight {cycle} that includes the current {vertex}, {and} if there is {no} maximal mean-weight {simple} cycle reachable from the current vertex it has to be extended to a trail that contains the path from the current vertex to the cycle and from there back to the current vertex. See full proof in~\autoref{proof:prop:fine-trial-to-Ltrail}.
This maximization problem can be turned into a minimization problem by multiplying the weights by $-1$. 
We show that the respective decision problem is coNP-complete. We define
$$L_{\trail}= \left\{ (G,k,v)~\left|~\begin{array}{l} G=(V,E,\theta) \mbox{ is a weighted graph, } v\in V,\\
    k \in\mathbb{Q}\mbox{ and \textbf{every cyclic trail }} c \mbox{ going  } \\ 
    \mbox{through } v \mbox{ has mean weight} \geq k
    \end{array}\right.
    \right\}.$$

\begin{restatable}[]{proposition}{propLtrailcoNPcomp}\label{prop:Ltrail-coNP-comp}
    $L_{\trail}$ is coNP-complete.
\end{restatable}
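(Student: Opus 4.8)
The plan is to prove the two directions separately: membership in coNP is straightforward, and the bulk of the work is coNP-hardness, which I would obtain by a polynomial reduction from the (directed) Hamiltonian cycle problem to the \emph{complement} of $L_{\trail}$.

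For membership, note that the defining condition of $L_{\trail}$ is universally quantified, so its complement reads ``there exists a cyclic trail $c$ through $v$ with mean weight ${<}\,k$.'' Since a trail uses every edge at most once, any such $c$ has at most $|E|$ edges and is therefore a polynomial\--size certificate. Given a candidate edge sequence one checks in polynomial time that it is a closed trail (no repeated edge, consecutive edges incident appropriately, it closes up and passes through $v$) and that its mean weight is below $k$. Hence the complement is in NP and $L_{\trail}\in\mathrm{coNP}$.

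For hardness the guiding idea is to force the \emph{minimum}-mean cyclic trail through $v$ to be realised precisely by a Hamiltonian cycle, and two obstacles must be handled. First, a trail may repeat vertices, so a ``long'' trail need not visit many distinct vertices; I neutralise this with a standard \emph{bottleneck} gadget: each vertex $u$ of the input digraph $H$ is split into $u_{in},u_{out}$ joined by a single arc $a_u$, with all arcs into $u$ redirected to $u_{in}$ and all arcs out of $u$ leaving from $u_{out}$. Since a trail uses $a_u$ at most once, it transits each original vertex at most once, so cyclic trails through $v$ in the constructed graph $G$ correspond exactly to simple cycles through $v$ in $H$; in particular a trail meeting all $n$ vertices corresponds to a Hamiltonian cycle. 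Second, the mean is scale\--invariant, so it must be made sensitive to the number of visited vertices: I give the bottleneck arc $a_v$ of the fixed vertex $v$ a large positive weight $W$ and every other arc weight $-1$. A trail realising a simple cycle on $j$ vertices then has length $2j$ and total weight $W-(2j-1)$, so its mean $\tfrac{W+1}{2j}-1$ is \emph{strictly decreasing} in $j$ and is minimised exactly at $j=n$.

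It then remains to choose the rational threshold $k$ strictly between the mean value at $j=n$ and at $j=n-1$. With this choice, a cyclic trail through $v$ of mean ${<}\,k$ exists iff some simple cycle through $v$ has length $n$, i.e. iff $H$ is Hamiltonian; when $H$ is non-Hamiltonian every cyclic trail through $v$ has $j\le n-1$ and hence mean $\ge k$, and when $v$ lies on no cycle the condition of $L_{\trail}$ holds vacuously. Thus $H$ is Hamiltonian iff $(G,k,v)\notin L_{\trail}$, giving NP-hardness of the complement and therefore coNP-hardness of $L_{\trail}$. The step I expect to be the main obstacle is exactly the interaction between the trail relaxation (vertices may repeat, edges may not) and the Hamiltonicity constraint: proving correctness of the bottleneck gadget---that every closed trail through $v$ genuinely collapses to a simple cycle and that no clever edge reuse lets a trail with few distinct vertices attain a smaller mean---together with the monotonicity and calibration of the weights so that a single threshold $k$ cleanly separates $j=n$ from $j<n$. (For undirected inputs one replaces the oriented bottleneck by an analogous single\--traversal gadget; the directed version suffices here, as the graphs underlying quatomata are directed.)
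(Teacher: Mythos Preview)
Your proposal is correct but takes a genuinely different route from the paper. The paper reduces from 3SAT: it builds, for a 3CNF with $n$ variables and $m$ clauses, a graph containing for each variable a ``true'' and a ``false'' chain and for each clause a long zero-weight cycle reachable via literal hooks; a satisfying assignment yields a cyclic trail through $v$ that detours through every clause cycle, pulling the mean strictly below the chosen threshold, while any unsatisfied clause forces the trail to skip one zero-weight detour and raises the mean above it. Your reduction is from Hamiltonian Cycle and hinges on the vertex-splitting bottleneck: because each $u_{in}$ has $a_u$ as its \emph{unique} out-arc and each $u_{out}$ has $a_u$ as its \emph{unique} in-arc, any closed trail through $v_{in}$ is forced to alternate bottleneck and original arcs and hence projects to a simple cycle in $H$; the single heavy arc $a_v$ then makes the mean strictly monotone in the number of visited vertices, so one rational threshold separates $j=n$ from $j\le n-1$. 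Your argument is shorter and more elementary, and in fact dovetails with the paper's own treatment of the related language $L_{\simple}$ (Prop.~6.3), which it also proves coNP-complete via Hamiltonian Cycle; your gadget effectively reduces $L_{\trail}$ to (a variant of) $L_{\simple}$ in one stroke. The paper's 3SAT construction, while heavier, has the minor advantage that it stays within simple directed graphs without needing to argue the trail-to-simple-cycle collapse. One small point to tighten in a write-up: make explicit that the designated vertex in the instance $(G,k,v)$ is $v_{in}$ (or $v_{out}$), and note that any $W>-1$ already suffices for the monotonicity---``large'' is not needed.
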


We prove this by a reduction from 3SAT, see \autoref{proof:prop:Ltrail-coNP-comp} and \autoref{fig:fine-trail-reduction}. The following propostition summarizes the results for \limavg.

\begin{restatable}[RC for \limavg]{proposition}{proprcforlimavg}\label{prop:rc-for-limavg}
    \begin{enumerate}[nosep]
        \item A strong $\leadsto$-RC for \limavg\ does not always exist.
        \item A weak $\leadsto$-RC for \limavg\ can be constructed in polynomial time, such that the running time of the constructed weak $\leadsto$-RC is constant.
        \item A strong $\circlearrowright$-RC for \limavg\ can be constructed in exponential time, such that the running time of the constructed strong $\circlearrowright$-RC is constant.
        \item  Computing the strong $\circlearrowright$-RC for $\limavg$ value function is coNP-complete.
    \end{enumerate}
\end{restatable}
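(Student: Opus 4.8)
The plan is to dispatch the four parts by routing each through results already established: parts (1)--(2) through the reachability machinery of \autoref{prop:no-best-rc-for-reach}, and parts (3)--(4) through the cyclic-trail reduction of \autoref{prop:fine-trial-to-Ltrail} and the hardness of $L_{\trail}$ in \autoref{prop:Ltrail-coNP-comp}. For part (1), the key observation is that for \limavg\ the spoke washes out, so $\mu(u(v)^\omega)$ is just the mean weight of $v$ and hence $\prefixvalue(u)$ equals the maximal mean value of any cycle reachable from $\aut{A}(u)$. A preferred any-extension is therefore any word that eventually enters and loops on a maximal-mean reachable cycle, and a \emph{strong} $\leadsto$-RC would recommend \emph{all} first letters that keep such a cycle reachable. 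This is exactly a strong $\leadsto$-RC for the reachability target ``a maximal-mean cycle.'' I would thus reuse the junction gadget from the proof of \autoref{prop:no-best-rc-for-reach}: a vertex $v$ with two mutually returning edges, both of which keep the optimal cycle reachable, so that obeying every recommendation permits looping at $v$ forever, never closing the optimal cycle and yielding a strictly smaller value --- contradicting strongness.

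For part (2), the weak $\leadsto$-RC is built by a polynomial preprocessing pass: (i) compute, per MSCC, its maximal mean-weight cycle value by Karp's minimum-mean-cycle algorithm (after a sign flip); (ii) propagate these values backward along the MSCC-DAG to label every vertex $v$ with $m(v)$, the best mean of any cycle reachable from $v$; (iii) by a reverse sweep, record at each vertex the distance to the nearest cycle attaining $m(v)$; and (iv) store at each vertex the set of outgoing letters that strictly decrease that distance. This stored set is nonempty until the optimal cycle is entered, so following it \emph{guarantees} the optimal cycle is actually reached, sidestepping the looping failure of part (1). At runtime the RC reads only $\aut{A}(u)$ and returns the stored letter set, i.e.\ in constant time.

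For part (3), \autoref{prop:fine-trial-to-Ltrail} reduces $\finerecom(u)$ to the maximal mean-weight \emph{cyclic trail} through $\aut{A}(u)$ (witnessed by the non-simple optimal loop $\sigma_3\sigma_0\sigma_4$ at $q_1$ in \autoref{ex:quantitative}), a quantity depending only on the current vertex. Since a cyclic trail repeats no edge, its length is at most $|E|$, so the optimal value and the admissible first letters are well defined; I would compute them by exhaustively exploring the (exponentially many, but individually polynomial) trails out of each vertex, hence exponential preprocessing, and record per vertex those first letters $\sigma$ beginning an optimal-mean cyclic trail that closes no sub-optimal sub-cycle first --- the \emph{preferred} requirement of \autoref{def:preferred-cyclic-extension}. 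Unlike \sup\ and \limsup, a \emph{strong} $\circlearrowright$-RC exists here because the value of a cyclic extension is a property of the whole trail rather than of a single reachable edge, so the preferred constraint forces every closed cycle to be optimal; together with \autoref{prop:mu-general-cnst-props}(\ref{prop:preferred-fine}) this yields that following the recommendations attains the maximum. Runtime is a constant-time lookup on $\aut{A}(u)$.

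For part (4), I would prove coNP-completeness by interreducing the computation with $L_{\trail}$ under the sign flip that turns maximal mean into minimal mean. Membership in coNP uses the polynomial length of trails: the complement question --- does there exist a cyclic trail through $v$ of mean strictly exceeding the candidate optimum? --- is witnessed by a single trail of size $\le|E|$, hence lies in NP. Hardness is inherited directly from \autoref{prop:Ltrail-coNP-comp}, reading its 3SAT-to-$L_{\trail}$ instances as strong-$\circlearrowright$-RC instances with the current vertex at the gadget's hub. The step I expect to be the main obstacle is precisely the interface between parts (3) and (4): handling the preferred prefix constraint correctly when selecting admissible first letters, and pinning the exact complexity, since the optimal cyclic extension is genuinely non-simple and its optimization is coNP-hard; by contrast parts (1) and (2) reduce cleanly to the already-proved reachability results.
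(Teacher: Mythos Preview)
Your proposal is correct and follows essentially the same route as the paper: part~(1) via the reduction to reachability and \autoref{prop:no-best-rc-for-reach}; part~(2) via Karp's algorithm per MSCC, backward propagation of optimal mean values, shortest-path computation to the optimal cycle, and a per-vertex stored letter set; parts~(3)--(4) via \autoref{prop:fine-trial-to-Ltrail} to reduce to maximal-mean cyclic trails, exhaustive enumeration (length $\le |E|$) for the exponential construction, and \autoref{prop:Ltrail-coNP-comp} for hardness. Your treatment is in fact slightly more careful than the paper's in two places: you explicitly address the ``preferred'' clause of \autoref{def:preferred-cyclic-extension} when selecting first letters in part~(3), and you articulate \emph{why} a strong $\circlearrowright$-RC exists for \limavg\ (the value is a property of the whole trail, not of a single reachable edge) whereas the paper simply asserts the construction.
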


\begin{proof}[Proof sketch]
$\leadsto$-RC reduces to 
reachability of cycles with maximal mean-weight. Thus the impossibility result for strong $\leadsto$-RC.

A weak $\leadsto$-RC can be constructed by associating with each vertex $v$ the set of letters $\Sigma_v$ that are on {a} shortest path to a cycle with the maximal mean-weight that is reachable from $v$. This association can be done in polynomial time by first finding the maximal mean-weight cycles using Karp's algorithm~\cite{Karp78}, and then propagating this information backwards.

For $\circlearrowright$-RC the problem reduces to finding the maximal mean-weight trails that include the current vertex (\autoref{prop:fine-trial-to-Ltrail}) and is coNP-complete by \autoref{prop:Ltrail-coNP-comp}. We can construct a $\circlearrowright$-RC in exponential time by considering all paths of length bounded by $|E|$ as candidates for trails. See the full proof in~\autoref{proof:prop:rc-for-limavg}.
\end{proof}

\section{Runtime consultants for $\omega$-regular properties }\label{sec:rc-omega-regular}
We turn to discuss $\omega$-regular properties, i.e. properties expressible using an $\omega$-automaton such as a deterministic parity automaton.

\subsection{RCs for the qualitative case}\label{sec:rc-chi-L}
Consider an $\omega$-regular property given by a DPA $\aut{A}$.
It induces a Boolean value function: $\true$ for accepted words and $\false$ otherwise.
Recall that a word is accepted iff the minimal rank visited infinitely often is even --- implying the existence of a cycle with minimal even rank, called an accepting cycle. 
It follows that the problem of constructing $\leadsto$- and $\circlearrowright$-RCs in this setting reduces to reachability and cycle-reachability, respectively, where the reachability targets are accepting cycles. Thus, by \autoref{prop:no-best-rc-for-reach} and \autoref{prop:no-fine-rc-for-reach}, we have to suffice with weak $\leadsto$- and $\circlearrowright$-RCs.
These can be constructed in polynomial time. The running time is constant as {desired}. The proposition follows (see proof in \autoref{proof:prop:rc-for-reg-omega-qualitative}).

\begin{restatable}[RC for an $\omega$-regular property -- qualitative case]{proposition}{proprcforregomegaqualitative}\label{prop:rc-for-reg-omega-qualitative}
Let $\aut{A}$ be a DPA inducing $\chi_{\aut{A}}:\Sigma^\omega\to\{\true,\false\}$.
Let $\arop\in\{\leadsto,\circlearrowright\}$.
    \begin{enumerate}[nosep]
        \item A strong $\arop$-RC for $\chi_{\aut{A}}$ does not always exist.
        \item A weak $\arop$-RC for $\chi_{\aut{A}}$ can be constructed in polynomial time, such that the running time of the constructed weak $\arop$-RC is constant. 
    \end{enumerate}
\end{restatable}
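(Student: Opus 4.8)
The plan is to follow the reduction to (cycle-)reachability sketched before the statement, proving the two items separately and invoking \autoref{prop:no-best-rc-for-reach} and \autoref{prop:no-fine-rc-for-reach} for the impossibility part. The structural fact that drives everything is that a word $w$ is accepted by $\aut{A}$ iff the run on $w$ eventually stays on a cycle whose minimal rank is even; call such a cycle \emph{accepting}. Consequently $\prefixvalue(u)=\true$ iff an accepting cycle is reachable from $\aut{A}(u)$, and $\lassovalue(u)=\true$ iff an accepting cycle passes through $\aut{A}(u)$. This makes the $\leadsto$- and $\circlearrowright$-RC problems in the Boolean setting literally the reachability and cycle-reachability problems with the set of accepting cycles as targets.

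\emph{Impossibility (item 1).} I would take the bad instances built in \autoref{prop:no-best-rc-for-reach} and \autoref{prop:no-fine-rc-for-reach} and view each as a DPA by assigning an even rank to the target cycles and an odd rank to every other state. Under this labeling, reaching (and being able to loop in) a target is exactly reaching an accepting cycle, so the counterexample run of those proofs---the run that forever oscillates between $v$ and $v_1$ (resp. the analogous cyclic-reachability run) while never reaching a target---is a run that complies with a purported strong RC yet is rejected. This contradicts strongness precisely as in the cited proofs, establishing that no strong $\arop$-RC exists in general.

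\emph{Construction (item 2).} For preprocessing I would first compute the MSCC decomposition and, inside each MSCC, the set of states lying on an accepting cycle by the standard parity recursion: let $d$ be the minimal rank in the (sub-)SCC; if $d$ is even, every state on a cycle through a $d$-ranked state lies on an accepting cycle, and if $d$ is odd, delete the $d$-ranked states, re-decompose into SCCs and recurse. This terminates after polynomially many steps and marks exactly the states through which an accepting cycle runs. For the $\leadsto$-RC I then perform a backward reachability from these marked states to find every state from which an accepting cycle is reachable, storing with each such state the outgoing letters lying on a shortest path toward the nearest accepting cycle (a nonempty subset of $\bestrecom(u)$, in the spirit of the distance-decreasing weak RC discussed in \autoref{subsec:prop-rc-general-value-function}); for a state from which no accepting cycle is reachable, $\prefixvalue=\false$ and I store $\Sigma$. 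For the $\circlearrowright$-RC I store, with each marked state $q$, the first letter of one accepting cyclic trail through $q$; with a state carrying only rejecting cycles, a first letter of some cycle through it; and with a state carrying no cycle at all, $\emptyset$, which is admissible since $\finerecom=\emptyset$ there. At runtime the consultant maintains $\aut{A}(u)$ by the usual single transition per input letter and returns the recommendation attached to the current state, so each query costs constant time. The main obstacle is getting the preprocessing exactly right: the parity recursion isolating states on accepting cycles must be argued to run in polynomial time and to characterize $\prefixvalue$ and $\lassovalue$ correctly in the Boolean interpretation, and the backward shortest-path propagation for the weak $\leadsto$-recommendation must be shown consistent with $\bestrecom$. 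The impossibility transfer and the constant-time lookup are then routine.
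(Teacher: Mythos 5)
Your proposal follows essentially the same route as the paper: item 1 is obtained by viewing the $\leadsto$- and $\circlearrowright$-RC problems as reachability and cycle-reachability with accepting cycles as targets and invoking \autoref{prop:no-best-rc-for-reach} and \autoref{prop:no-fine-rc-for-reach}, and item 2 by precomputing, per MSCC, accepting cycles via removal of odd-minimal-rank states, propagating shortest-path information backwards to attach a letter set to each state, and answering queries by a constant-time table lookup on $\aut{A}(u)$. The only notable difference is that you run the full recursive parity decomposition to mark all states on accepting cycles, whereas the paper performs a single removal step and extracts one accepting cycle per MSCC; your variant is a slightly more careful refinement of the same idea rather than a different argument.
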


\subsection{From qualitative to quantitative: $\omega$-regular robustness}\label{sec:rc-rbs-L}
To obtain strong RCs for $\omega$-regular properties, we can try strengthening the value function and turning it from qualitative into quantitative. 
Such a transformation was recently proposed in~\cite{FismanS25}. The idea there is to \emph{distill} from an $\omega$-regular language $L$, 
a value function $\valrbst_L:\Sigma^\omega\to\mathbb{T}$ that given a lasso word $w$, returns a value quantifying the \emph{robustness} of $w$ wrt $L$.\footnote{The question of robustness of a system has been studied a lot in formal verification~\cite{HenzingerR00,DeWulftDMR08,JaubertR11,BouyerMS15,FiliotMRST20,MascleNSTWZ21}.
Most works assume weights are part of the input. 
In contrast,~\cite{TabuadaN16,NayakNZ22,NeiderWZ22,MuranoNZ23} 
suggest to enhance two-valued semantics of temporal logic  into a five-valued semantics, where the values reflect the robustness of a word wrt a temporal logic formula. The work~\cite{FismanS25} generalizes this idea by providing infinitely many values, and providing a semantic notion agnostic to a particular representation.
}  The set $\mathbb{T}$ is a totally ordered set from which one can infer the robustness preference relation, denoted $\gtrbst_L$. That is,  $w_1{\gtrbst_L} w_2$ if $\valrbst_L(w_1) {>}\valrbst_L(w_2)$. For example, wrt the property "$a$ should occur infinitely often", and $i{<}j$, the robustness value function determines
$b^i(a)^\omega {\gtrbst_L} b^{j}(a)^\omega$, \ 
    $(b^ia)^\omega {\gtrbst_L} (b^ja)^\omega$ \ and 
    $(ba^j)^\omega {\gtrbst_L} (ba^i)^\omega$. That is, it prefers words with a higher frequency of $a$'s in the period. 
    Further, for words with the same $a$'s frequency in the period, it prefers words with more $a$'s in the transient part.

    For lack of space we do not provide further intuitions regarding the generalization of this idea to arbitrary $\omega$-regular properties and refer the reader to~\cite{FismanS25}. We continue with providing the necessary building blocks of the definition in order to discuss later RCs for the \valrbst\ value function. While $\valrbst_L$ is a semantic notion 
that is agonistic to a particular representation of $L$,
    it is easier to explain it on 
    the \emph{robustness automaton} developed in~\cite{FismanS25}.

The robustness automaton is a canonical parity automaton for an $\omega$-regular property.\footnote{More precisely, the robustness automaton is a dual DPA, as introduced in~\cite{FismanS25}.} It can be used to compute the robustness value of a lasso word, 
which is a  tuple of three components: an acceptance bit, a period value and a spoke value. 
The period and spoke values are pairs corresponding to a value computed from the respective edges of the period and the spoke, formally defined as follows.

\subsubsection*{The robustness value function}\label{subsub:rbst-val-func}
Given the robustness DPA, we 
color its edges using the ranks of the states the edge connects. 
Given an edge $e=(v,v')$ where $v$ and $v'$ have ranks $d$ and $d'$, resp., the color of $e$ is defined as follows:
\[
\begin{array}{@{\qquad}l@{\ }l@{\qquad}l@{\  }l@{\quad}l@{\ }l}
     \text{\term{white}} & \text{if } d'=-2 &
     \text{\term{green}} & \text{if } d' \text{ is even and } d \geq d' & 
     \text{\term{yellow}} & \text{if } d < d' \phantom{--} \\
     \text{\term{black}} & \text{if } d'=-1 & 
     \text{\term{red}}  & \text{if } d'  \text{ is odd and } d \geq d'. 
\end{array}
\]
We use $\mathterm{\lcolor_w(i)}$ to denote 
the color of the edge $e$ the robustness DPA goes through when reading the $i$-the letter of $w$. 
Let $w\in\Sigma^\omega$ and consider its infix $w[j..k]$. For $i\in[j..k]$ let $\mathterm{c_i}=\lcolor_w(i)$. 
Let $\mathterm{\white} = |\{i\in[j..k]\colon c_{i} = \col{white}\}|$.
The values $\mathterm{\green}$,$\mathterm{\yellow}$,$\mathterm{\red}$,$\mathterm{\black}$ are defined similarly, wrt colors $\col{green},\col{yellow},\col{red},\col{black}$, resp. 

We use 
$\mathterm{\icolors_w(j,k)}$, 
for the tuple $(\white,\green,\yellow,\red,\black)$ providing the number of letters of each color in the infix {$w[j..k]$}. 
 The \term{score of the infix} $w[j..k]$ wrt $L$, denoted $\mathterm{\score_w(j,k)}$, is the tuple $(\white\black,\green\red)$ where $\mathterm{\white\black}=\white{-}\black$, $\mathterm{\green\red}=\green{-}\red$. Its \term{averaged score}, denoted $\mathterm{\avgscore_w(j,k)}$ is $(\frac{\white\black}{l},\frac{\green\red}{l})$ where $l=|w[j..k]|$.

Finally, let $w$ be a lasso word, and assume $w=uv^\omega$ where $u$ is the shortest prefix of $w$ on which the robustness automaton loops at, and $v$ is the corresponding loop. Let $|u|=k$ and $|v|=l$, $\tau_u=\avgscore_w(1,k)$, 
$\tau_v=\avgscore_w(k{+}1,k{+}l)$, and $a_w$ is the acceptance bit (i.e. $a_w=\true$ iff $w\in L$).
The \term{robustness value of $w$ wrt $L$}, denoted $\mathterm{\valrbst_L(w)}$ is the tuple $\mathterm{(}a_w\mathterm{,}\ \tau_v\mathterm{,}\ k(\tau_u{-}\tau_v)\mathterm{)}$. The significance of the components is from left (most-significant) to right.

\subsection{RCs for the quantitative case}

We turn to discuss the construction of RCs for $\omega$-regular robustness, namely for the value function $\valrbst_L$ induced from an $\omega$-regular property $L$.

We use $\aut{P}^{\truerobustness}_L$ for the robustness automaton of an $\omega$-regular language $L$. 
We can 
assign weights to its edges based on their color, allowing us to derive back the number of edges of each color from the total path weight. Specifically, 
edge $e$ gets weight $C^2$, $C$, $0$, $-C$, $-C^2$, resp. if it is colored \col{white}, \col{green}, \col{yellow}, \col{red}, \col{black}, resp. where $C$ is the number of edges in $\aut{P}^{\truerobustness}_L$ plus one.\label{subsub:giving-weights}

\paragraph{$\leadsto$-RC for $\valrbst_L$}

Note that $\leadsto$-RC for $\valrbst_L$ looks for the most robust word. Such a word is one that is accepted and has the most robust period and spoke. Since the value of the period is more significant, we first look for accepting cycles with best period. Then we look for the most robust spoke reaching such period.

\begin{restatable}[]{proposition}{proprbstacccycletoLsimple}\label{prop:rbst-acc-cycle-to-Lsimple}
The problem of finding the most robust \emph{accepting cycle} in a parity automaton 
reduces to the question of finding the cycle with maximal mean-weight in a weighted graph that passes via a certain state.  
\end{restatable}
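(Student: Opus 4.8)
The plan is to reuse the color-based edge weights fixed above (\col{white}$\mapsto C^2$, \col{green}$\mapsto C$, \col{yellow}$\mapsto 0$, \col{red}$\mapsto -C$, \col{black}$\mapsto -C^2$ with $C=|E|+1$) and to show that on $\aut{P}^{\truerobustness}_L$ the \emph{mean weight} of a cycle is an order-faithful encoding of its period value $\tau_v$. A cycle of length $l$, with \white,\black,\green,\red now counting its edges of each color, has total weight $C^2(\white-\black)+C(\green-\red)$ and hence mean weight $C^2x+Cy$, where $x=(\white-\black)/l$ and $y=(\green-\red)/l$ are exactly the two coordinates of $\tau_v=\avgscore$. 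Since $|\white-\black|,|\green-\red|\le l\le|E|<C$, the base-$C$ digits recover $(\white-\black,\green-\red)$ from the total weight, so mean weight and $\tau_v$ carry the same information.

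The crux is to prove that maximizing mean weight coincides with maximizing $\tau_v$ lexicographically (the \col{white}-minus-\col{black} coordinate being the more significant). For arbitrary weighted graphs this would fail, because cycles of different lengths are normalized differently; it holds here only because of the sink structure of the robustness automaton. The rank $-2$ and rank $-1$ states (the accept- and reject-sinks, the sources of \col{white} and \col{black} edges) have no edges leaving them, so every cycle is either (i) the all-\col{white} self-loop on the rank $-2$ sink, (ii) the all-\col{black} self-loop on the rank $-1$ sink, or (iii) a cycle avoiding both sinks, which therefore contains no \col{white} and no \col{black} edge at all. Thus the first coordinate satisfies $x\in\{1,0,-1\}$; since $|Cy|\le C<C^2$, the term $C^2x$ cleanly separates these three tiers, while inside the middle tier ($x=0$) the mean weight equals $Cy$ and is monotone in $y$. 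This matches the lexicographic order on $\tau_v$ exactly, and I expect this sink observation to be the key (and the only delicate) step: without it the naive powers-of-$C$ encoding does not respect the order across cycles of differing length.

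It remains to impose acceptance --- that the cycle's minimal rank be even --- which is where ``a certain state'' enters. I would iterate over the even ranks $r$ (including $r=-2$): restrict $\aut{P}^{\truerobustness}_L$ to the subgraph $G_{\ge r}$ induced by the states of rank at least $r$, fix a state $q$ of rank exactly $r$, and compute the maximal mean-weight cycle through $q$ in $G_{\ge r}$, i.e.\ a single instance of ``maximal mean-weight cycle passing through a given vertex'' (solvable via Karp's algorithm~\cite{Karp78}). Any cycle through $q$ inside $G_{\ge r}$ has minimal rank exactly $r$, hence even and accepting; conversely every accepting cycle attains its even minimal rank $r$ at some state $q$ and lies entirely in $G_{\ge r}$, so it is produced for that pair $(r,q)$. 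Taking the best result over the polynomially many pairs $(r,q)$ yields the most robust accepting cycle, which by the previous paragraph is the one of lexicographically largest period value. This exhibits the claimed reduction.
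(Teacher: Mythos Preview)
Your reduction is the paper's: for each even rank $r$, delete all states of rank below $r$, fix a state $q$ of rank $r$, and ask for the maximal mean-weight cycle through $q$; ranging over the polynomially many pairs $(r,q)$ covers exactly the accepting cycles. You add a more careful justification (the tier-separation argument for why mean weight respects the lexicographic order on $\tau_v$) that the paper only asserts informally.

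Two factual slips to correct. First, the rank $-2$ and $-1$ states in the robustness automaton are \emph{not} sinks with self-loops: the automaton is a \emph{dual} DPA, and each such state carries a single $\varepsilon$-transition back to the state it was entered from (see \autoref{rmk:leadso-rc-in-forgetful}). Your trichotomy survives once ``self-loop'' is replaced by ``length-$2$ cycle with one \col{white}/\col{black} edge and one $\varepsilon$-edge'', and the separation argument then goes through unchanged. Second, drop the parenthetical ``solvable via Karp's algorithm'': Karp's algorithm finds the overall optimal mean-weight cycle, not one constrained to pass through a prescribed vertex, and indeed the very next proposition in the paper (\autoref{prop:Lsimple-coNP-comp}) shows that the target problem --- minimal mean-weight \emph{simple} cycle through a given vertex --- is coNP-complete. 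The reduction stands; it just does not deliver a polynomial algorithm.
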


The proof is  in \autoref{proof:prop:rbst-acc-cycle-to-Lsimple}.
Recall that computing the weak $\leadsto$-RC for $\limavg$ can be done in polynomial time (\autoref{prop:rc-for-limavg}). 
However, for the robustness value function we show that computing the {strong $\leadsto$-RC} is coNP-complete. Loosely, this is since for the robustness value we need not only maximal mean-weight cycle but also verifying that the cycle's minimal rank is even. 
The weights  described above for the maximum problem can be multiplied by $-1$ to make it a minimum problem. 
The decision version of the minimum problem can be stated as 
$$L_{\simple}= \left\{ (G,k,v)~\left|~\begin{array}{l} G=(V,E,\theta) \mbox{ is a weighted graph, } v\in V, k \in\mathbb{Q}\\
\mbox{and \textbf{every simple cycle }} c \mbox{ going through } v \\
\mbox{has mean weight} \geq k
\end{array}\right.
\right\}$$

\begin{restatable}[]{proposition}{propLsimplecoNPcomp}\label{prop:Lsimple-coNP-comp}
    $L_{\simple}$ is coNP-complete
\end{restatable}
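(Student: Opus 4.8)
The plan is to establish the two directions separately: membership in coNP and coNP-hardness. For membership, observe that the complement of $L_{\simple}$ asks whether there \emph{exists} a simple cycle through $v$ whose mean weight is strictly below $k$. Such a cycle is a certificate of size at most $|V|$, and given it one checks in polynomial time that it is indeed simple, that it passes through $v$, and that its total weight divided by its length is $<k$. Hence the complement lies in NP and $L_{\simple}\in\textrm{coNP}$. This is essentially the same certificate argument that works for $L_{\trail}$ (\autoref{prop:Ltrail-coNP-comp}); the only change is that the witness is now required to be vertex-disjoint rather than merely edge-disjoint.

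For hardness I would reduce from 3SAT to the complement of $L_{\simple}$, closely paralleling the construction behind \autoref{prop:Ltrail-coNP-comp}. Given a formula $\varphi$ over variables $x_1,\dots,x_n$ and clauses $C_1,\dots,C_m$, I would build a weighted graph with a designated home vertex $v$ from which every relevant cycle departs and to which it returns. The core is a chain of \emph{variable gadgets}, one per $x_i$, each a pair of internally vertex-disjoint lanes (a ``true'' lane and a ``false'' lane) sharing only their entry and exit vertices; because the witness must be a \emph{simple} cycle, passing from the entry to the exit of a gadget forces the cycle to commit to exactly one lane, so a traversal of the whole chain encodes a truth assignment and revisiting a variable is impossible. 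The lane weights, together with a clause-checking section, would be tuned so that the mean weight of the encoding cycle falls strictly below the threshold $k$ precisely when the encoded assignment satisfies every clause. Conversely, I would argue that any simple cycle through $v$ with mean weight $<k$ must traverse the chain in the intended manner and therefore read off a satisfying assignment, so that $\varphi$ is satisfiable iff $(G,k,v)\notin L_{\simple}$.

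The main obstacle is the design of the clause-checking weights under two simultaneous constraints. First, because mean weight normalizes by length, the reduction cannot rely on cycle length: a Hamiltonicity-style argument fails, since a uniform per-vertex or per-edge weight yields a constant mean, so the satisfaction of the conjunction of clauses must instead be encoded \emph{directly} in the edge weights, with an unsatisfied clause provably forcing the mean up to at least $k$. Second, and this is the genuinely new difficulty relative to $L_{\trail}$, I must exploit rather than merely tolerate the vertex-disjointness of simple cycles: the gadgets have to guarantee both that a satisfying assignment is realizable by a cycle repeating \emph{no vertex} (not just no edge), and that no ``cheating'' simple cycle---one that skips gadgets, takes shortcuts, or encodes no consistent assignment---can slip below the threshold. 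Verifying this soundness direction is where I expect the bulk of the technical work to lie.

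Finally, as an alternative I would consider reducing directly from $L_{\trail}$ (\autoref{prop:Ltrail-coNP-comp}) via an edge-subdivided line-graph transformation: a cyclic trail of $G$ (edge-disjoint) corresponds to a simple cycle (vertex-disjoint) of the line graph of $G$, and placing each original edge weight on a dedicated subdivision edge while weighting transition edges $0$ preserves the comparison to the threshold up to the constant factor introduced by subdivision. The caveat to resolve here is the ``through $v$'' requirement together with trails that visit $v$ more than once, which need not map to a single designated vertex of the line graph; handling this cleanly would likely require a small gadget forcing each witness to pass through a marked vertex exactly once, and I would fall back on the direct 3SAT reduction if that obstruction proves awkward.
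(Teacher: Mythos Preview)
Your membership argument is fine and matches the paper. The hardness part, however, misses the simple approach the paper actually takes: a direct reduction from Hamiltonian Cycle. You explicitly discard this route, writing that ``a Hamiltonicity-style argument fails, since a uniform per-vertex or per-edge weight yields a constant mean''. The paper's trick is precisely to use \emph{non}-uniform weights: give weight $1$ to the outgoing edges of the designated vertex $v$ and weight $0$ to every other edge. Then any simple cycle through $v$ contains exactly one weight-$1$ edge, so its mean weight is $1/\ell$ where $\ell$ is the cycle length. Setting $k$ strictly between $1/n$ and $1/(n{-}1)$ (with $n=|V|$) makes $(G',k,v)\notin L_{\simple}$ iff some simple cycle through $v$ has length $n$, i.e., iff $G$ is Hamiltonian. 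This is a three-line reduction.

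Your proposed 3SAT reduction is not wrong in spirit, but you yourself flag the hard parts (designing clause weights so that \emph{every} unsatisfying assignment pushes the mean to at least $k$, and ruling out ``cheating'' simple cycles) without resolving them, so as written it is a plan rather than a proof. The line-graph alternative has the same status: the ``through $v$'' issue you identify is real, since a cyclic trail may visit $v$ several times and then corresponds to a simple cycle in the line graph that touches several distinct $v$-incident edge-vertices, so you would still owe a gadget and a correctness argument. In short, the gap is not that your approaches are unsound, but that you overlooked the elementary Hamiltonian-cycle reduction that finishes the proof immediately.
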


The proof is by reduction from the Hamiltonian cycle problem, see \autoref{proof:prop:Lsimple-coNP-comp}.

\paragraph{$\circlearrowright$-RC for $\valrbst_L$\label{paragraph:RC-for-the-quantitative-case-circlearrowright}} \quad \\
Next we turn to the problem of computing $\circlearrowright$-RC. Here, the problem of finding $\circlearrowright$-recommendation reduces to finding accepting cyclic trails of maximal mean weight involving the current {state} $q_u$
and some {state} from a set of interest $U$ (\autoref{prop:rbst-acc-trial-to-Ltrail}). 
We show that this problem as well is coNP-complete, by a simple reduction from the respective problem of \limavg\ (\autoref{prop:Ltrail-coNP-comp-rbst}). 
The following proposition summarizes the {results} for the robustness value function.

\begin{restatable}[RC for an $\omega$-regular property -- quantitative case]{proposition}{proprcforregomegaquantitative}\label{prop:rc-for-reg-omega-quantitative}
Let $\aut{A}$ be the robustness DPA for language $L$ and $\valrbst_L:\Sigma^\omega\to\mathbb{T}$ the respective robusness value function.
Let $\arop\in\{\leadsto,\circlearrowright\}$.
    \begin{enumerate}[nosep]
        \item[] A strong $\arop$-RC for {$\valrbst_L$} can be constructed in exponential time, such that the running time of the constructed strong $\arop$-RC is constant. 
    \end{enumerate}
\end{restatable}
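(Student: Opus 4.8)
The plan is to treat the two arrow types uniformly through the reductions already established---\autoref{prop:rbst-acc-cycle-to-Lsimple} for $\leadsto$ and \autoref{prop:rbst-acc-trial-to-Ltrail} for $\circlearrowright$---and, in each case, to separate a (costly) preprocessing phase that performs the combinatorial search from a runtime phase that is a mere table lookup. The exponential construction time comes from the enumeration done at preprocessing, while the constant runtime comes from storing, per state, the set of recommended first letters. The coNP-hardness inherited from \autoref{prop:Lsimple-coNP-comp} and \autoref{prop:Ltrail-coNP-comp-rbst} is what forces the preprocessing to be exponential rather than polynomial.

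For the $\leadsto$-RC I would exploit the lexicographic structure $\valrbst_L(w)=(a_w,\tau_v,k(\tau_u-\tau_v))$ and build a preferred any-extension of $u$ in order of significance: it must (i) be accepting, (ii) loop on the most robust period reachable from $\aut{A}(u)$, and (iii) among those, use the most robust spoke. Steps (i)--(ii) are exactly the search for a most robust accepting cycle, which by \autoref{prop:rbst-acc-cycle-to-Lsimple} is a maximal mean-weight cycle through a distinguished state under the color weights $C^2,C,0,-C,-C^2$; although the associated decision problem $L_{\simple}$ is coNP-complete (\autoref{prop:Lsimple-coNP-comp}), it is decidable in exponential time by enumerating all simple cycles. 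Having fixed the optimal reachable period value $\tau_v^\ast$, step (iii) reduces to a path-optimization problem: writing $k(\tau_u-\tau_v^\ast)$ as (the total spoke score) minus $k\,\tau_v^\ast$, the contribution of the already-read prefix is a constant, so maximizing the spoke value amounts to finding an optimal path in which each edge contributes its color-score minus $\tau_v^\ast$. Crucially, this modified-weight optimum depends only on the current state and on $\tau_v^\ast$, not on the history accumulated in $u$; hence during preprocessing I can tabulate, for every state, its best reachable period value together with the set of outgoing edges lying on an optimal spoke toward such a period, and at runtime $\bestrecom(u)$ is read off in constant time.

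The $\circlearrowright$-RC follows the same two-phase template but over trails instead of cycles. By \autoref{prop:rbst-acc-trial-to-Ltrail}, computing $\finerecom(u)$ reduces to finding an accepting cyclic trail of maximal mean weight that passes through the current state $q_u$ and through some state of the interest set $U$, and the corresponding decision problem is coNP-complete by \autoref{prop:Ltrail-coNP-comp-rbst}. Exactly as in the $\circlearrowright$ case of \autoref{prop:rc-for-limavg}, decidability in exponential time is obtained by enumerating all candidate trails, i.e.\ all edge-simple paths of length at most $|E|$; during preprocessing I would store, for each state, the best robustness achievable by such a trail and the first edges that realize it, so that the stored letter set is returned in constant time.

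The delicate conceptual point---and the step I expect to be the main obstacle---is to argue that these recommendations constitute a \emph{strong} RC: that following them actually attains $\prefixvalue(u)$ (resp.\ $\lassovalue(u)$), and that no optimal first letter is ever dropped. This is precisely where the quantitative nature of $\valrbst_L$ is indispensable. Unlike the Boolean reachability value function of \autoref{prop:no-best-rc-for-reach}, the robustness value penalizes a longer or weaker transient through its spoke component $k(\tau_u-\tau_v)$, so an execution that keeps looping without committing to the optimal period no longer retains the maximal value; consequently the set of letters that begin a genuinely optimal extension is well defined, and recommending exactly this set is consistent with the compliance guarantees of \autoref{prop:mu-general-cnst-props}. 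The care required is twofold: first, verifying that the spoke optimization is well posed (the supremum is attained and cannot be improved indefinitely), which relies on the sign structure of the modified edge weights once the most-significant period is fixed; and second, ensuring that the per-state tabulation records \emph{all} optimal period--spoke combinations---including the ties produced by several maximal cycles or trails---so that the strong recommendation omits no optimal action.
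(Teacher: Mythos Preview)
Your proposal is correct and follows essentially the same approach as the paper: exponential-time enumeration of simple cycles (for $\leadsto$, via Johnson-style enumeration) and of cyclic trails of length at most $|E|$ (for $\circlearrowright$), followed by backward propagation of the optimal period and spoke information so that each state stores its recommended letter set for constant-time lookup. You are in fact more explicit than the paper's appendix proof about \emph{why} the resulting RC is strong---the paper simply constructs the per-state sets $\Sigma_v$ without spelling out that the spoke component $k(\tau_u-\tau_v)$ is what prevents the Boolean-style looping counterexamples of \autoref{prop:no-best-rc-for-reach}.
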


Using exponential time we can go over all simple cycles in the case of $\leadsto${-RC} and all cyclic trails in the case of $\circlearrowright${-RC} to find the best ones, and propagate the information backwards through the edges and {states}, so that we can associate with each {state} the set of letters corresponding to the $\leadsto$- and $\circlearrowright$-recommendations. See the full proof in \autoref{proof:prop:rc-for-reg-omega-quantitative}.

\paragraph{RC for Robustness - Example} \quad \\
We provide an example illustrating the robustness DPA for three properties and the resulting $\leadsto$- and $\circlearrowright$-recommendations of the respective strong RCs.

\begin{figure}[t]
\begin{center}
\scalebox{0.6}{
\begin{tikzpicture}[->,>=stealth',shorten >=1pt,auto,node distance=2.0cm,semithick,initial text=, initial below]


\node[state,initial]    (k0)           {$0$};
\node[state]    (k1)  [right of=k0]   {$-1$};
\node[label] (qkL) [above left of=k0, node distance=1.6cm] {$\aut{P}_{L_{\ltlG a}}^\truerobustness:$};
\node[label] (rl20) [below of=k0, node distance=4.2cm] {$\begin{array}{rl}
\Sigma^\leadsto(u){=}\{a\}\\
\Sigma^\circlearrowright(u){=}\{a\}\\
\end{array}$};

\node[state,initial]    (k00)    [below of=k0, node distance=2.5cm]       {$0$};
\node[state]    (k11)  [right of=k00]   {$-1$};

\path (k0) edge [pgreen, out=45, in=315, loop, looseness=8, rotate=180] 
           node   {$a$}
      (k0); 
\path (k0) edge [pblack, bend left] 
           node {$b$} 
      (k1);
\path (k1) edge [dotted, bend left] 
           node {$\varepsilon$} 
      (k0);

\path (k00) edge [bend left] 
           node {$b$} 
      (k11);
\path (k11) edge [dotted, bend left] 
           node {$\varepsilon$} 
      (k00);  
\path (k00) edge [|-|,side by side={blue!25}{orange!50},out=45, in=315, loop, looseness=8, rotate=180] 
           node  {}
      (k00);
\path (k00) edge [out=45, in=315, loop, looseness=8, rotate=180] 
           node  {$a$}
      (k00);      


\node[state,initial]    (q1)  [left of=k0, node distance=4cm]   {$1$};

\node[state]    (q0)     [left of=q1]     {$0$};
\node[label] (qL) [above left of=q0, node distance=1.6cm] {$\aut{P}_{L_{\infty a}}^\truerobustness:$};

\node[label] (rl10) [below of=q0, node distance=4.2cm] {$\begin{array}{rl}
\Sigma^\leadsto(u){=}\{a\}\\
\Sigma^\circlearrowright(u){=}\{a\}\\
\end{array}$};
\node[label] (rl11) [below of=q1, node distance=4.2cm] {$\begin{array}{rl}
\Sigma^\leadsto(u){=}\{a\}\\
\Sigma^\circlearrowright(u){=}\{a\}\\
\end{array}$};

\node[state,initial]    (q11)  [left of=k00, node distance=4cm]   {$1$};

\node[state]    (q00)     [left of=q11]     {$0$};

\path (q0) edge [pgreen, out=45, in=315, loop, looseness=8, rotate=180] 
           node   {$a$}
      (q0); 
\path (q0) edge [pyellow, bend left] 
           node {$b$} 
      (q1);
\path (q1) edge [pgreen, bend left] 
           node {$a$} 
      (q0);
\path (q1) edge [pred, out=225, in=135, loop, looseness=8, rotate=180] 
           node  {$b$}
      (q1);

\path (q00) edge [|-|,side by side={blue!25}{orange!50},out=45, in=315, loop, looseness=8, rotate=180] 
           node  {}
      (q00);
\path (q00) edge [out=45, in=315, loop, looseness=8, rotate=180] 
           node  {$a$}
      (q00);      
\path (q00) edge [bend left] 
           node {$b$} 
      (q11);
\path (q11) edge [|-|,side by side={blue!25}{orange!50}, bend left] 
           node  {}
      (q00);
\path (q11) edge [bend left] 
           node  {$a$}
      (q00);      
\path (q11) edge [out=225, in=135, loop, looseness=8, rotate=180] 
           node  {$b$}
      (q11);     


\node[state,initial]  [right of=k1, node distance=3cm] (p3)          {$3$};
\node[state]  (p2)  [right of=p3]   {$2$};
\node[state]  (p1)  [right of=p2]   {$1$};
\node[state]  (p0)  [right of=p1]   {$0$};
\node[label] (pL) [above left of=p3, node distance=1.6cm] {$\aut{P}_{L_{a-seq}}^\truerobustness:$};
\node[label] (rl30) [below of=p3, node distance=4.2cm] {$\begin{array}{rl}
\Sigma^\leadsto(u){=}\{a\}\\
\Sigma^\circlearrowright(u){=}\{a\}\\
\end{array}$};
\node[label] (rl31) [below of=p2, node distance=4.2cm] {$\begin{array}{rl}
\Sigma^\leadsto(u){=}\{a\}\\
\Sigma^\circlearrowright(u){=}\{b\}\\
\end{array}$};
\node[label] (rl32) [below of=p1, node distance=4.2cm] {$\begin{array}{rl}
\Sigma^\leadsto(u){=}\{a\}\\
\Sigma^\circlearrowright(u){=}\{a\}\\
\end{array}$};
\node[label] (rl33) [below of=p0, node distance=4.2cm] {$\begin{array}{rl}
\Sigma^\leadsto(u){=}\{a\}\\
\Sigma^\circlearrowright(u){=}\{a\}\\
\end{array}$};

\node[state,initial]  [right of=k11, node distance=3cm] (p33)          {$3$};
\node[state]  (p22)  [right of=p33]   {$2$};
\node[state]  (p11)  [right of=p22]   {$1$};
\node[state]  (p00)  [right of=p11]   {$0$};

\path (p3) edge  [pgreen] 
           node [below] {$a$}
      (p2); 
\path (p2) edge  [pred] 
           node [below] {$a$}
      (p1);
\path (p1) edge  [pgreen] 
           node [below] {$a$}
      (p0);
\path (p0) edge  [pgreen, out=225, in=135, loop, looseness=8, rotate=180] 
           node [right] {$a$}
      (p0);      
      
\path (p3) edge [pred, out=45, in=315, loop, looseness=8, rotate=180]  
           node  {$b$}
      (p3);   
\path (p2) edge [pyellow, bend right=25]  
           node [above, near start]  {$b$}
      (p3);   
\path (p1) edge [pyellow, bend right=35]  
           node [above, near start] {$b$}
      (p3);   
\path (p0) edge [pyellow, bend right=45]  
           node [above, near start] {$b$}
      (p3);

\path (p33) edge [|-|,side by side={blue!25}{orange!50}, bend right] 
           node  {}
      (p22);
\path (p33) edge [bend right] 
           node  {$a$}
      (p22);

\path (p22) edge  [pblue, bend right] 
           node  {$a$}
      (p11);
\path (p11) edge [|-|,side by side={blue!25}{orange!50}, bend right] 
           node  {}
      (p00);
\path (p11) edge [bend right] 
           node  {$a$}
      (p00);  
\path (p00) edge [|-|,side by side={blue!25}{orange!50}, out=225, in=135, loop, looseness=8, rotate=180] 
           node  {}
      (p00);
      
\path (p00) edge [out=225, in=135, loop, looseness=8, rotate=180] 
           node  {$a$}
      (p00);         
      
\path (p33) edge [out=45, in=315, loop, looseness=8, rotate=180]  
           node  {$b$}
      (p33);   
\path (p22) edge [porange, bend right=25]  
           node [above, near start]  {$b$}
      (p33);   
\path (p11) edge [bend right=35]  
           node [above, near start] {$b$}
      (p33);   
\path (p00) edge [bend right=45]  
           node [above, near start] {$b$}
      (p33);      
      
\end{tikzpicture}}
\end{center}
\vspace{-6mm}
\caption{Runtime recommendations for  $L_{\infty a}$, $L_{\ltlG a}$, $L_{a-seq}$, resp. from \autoref{ex:runtime-consultant-for-robustness}.
}\label{Fig:recommendations}
\end{figure}
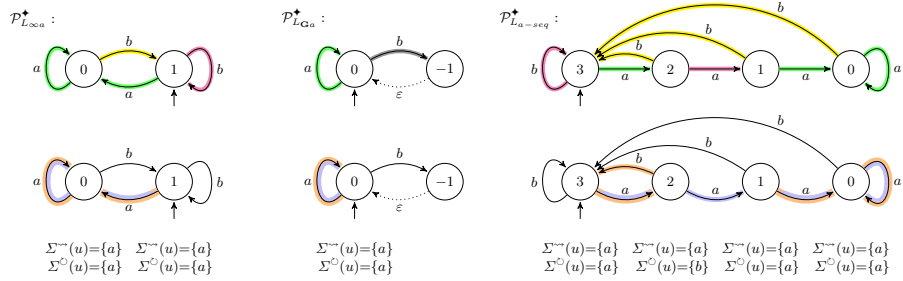

\begin{example}\label{ex:runtime-consultant-for-robustness}
Consider the following languages (i) $L_{\infty a}$ that accepts words where there are infinitely many $a$'s, (ii) $L_{\ltlG a}$ that accepts words consisting of the letter $a$ only and (iii) $L_{a-seq}=\infty a \wedge(\infty aa \to \infty aaa)$ that accepts words with infinitely many $a$'s but if there are infinitely many occurrences of $aa$ then there have to be also infinitely many occurrences of $aaa$ in the word. 
The top three automata in \autoref{Fig:recommendations} are the robustness automata for these languages. 
The bottom three automata are the same, 
annotated with $\Sigma^\leadsto(u)$ and $\Sigma^\circlearrowright(u)$
for each state. Note that in {$\aut{P}_{L_{a-seq}}^\truerobustness$} the $\leadsto$-recommendation for every state $q$ (i.e. for every finite word $u$ reaching $q$) is to read $a$ since the best period a word in $L_{a-seq}$ can have is $a^\omega$, so the {$\leadsto$-recommendation} is to reach the self-loop on the state ranked $0$. 
However, the $\circlearrowright$-recommendation is not the same for all prefixes. If the word $u\in\Sigma^*$ read so far is $a$ or has a suffix $ba$, that is, the automaton is currently in the state ranked $2$, then the best trail this state is part of is the simple cycle including it and the state ranked $3$. Note that other accepting trails from this state exist but all of them have a lower percentage of green edges and accordingly a lower value. Therefore, the $\circlearrowright$-recommendation is to read $b$ from this state. 
\end{example}

\section{Conclusions}\label{sec:discuss}
We have introduced the notion of runtime consultant (RC) that generalizes the notion of a runtime monitor. We distinguish between \emph{weak} and \emph{strong} runtime consultants, where the latter is guaranteed to make no arbitrary choices in its recommendations. We have shown that some value functions may not have a strong RC in the general case. We have provided the strongest RCs possible for the widely used operators \sup, \inf, \limsup, \liminf\, and \limavg. 

In the qualitative case we considered $\omega$-regular properties. We have shown that in general they do not have strong RCs. However, taking the recently proposed robustness value function $\valrbst_L$, that distills from a qualitative $\omega$-regular property $L$ a quantitative value function, it is possible to construct strong RCs.

The construction of most of the RCs is polynomial, except for \limavg, and $\valrbst_L$, for which we show the problem is coNP-complete. Since the construction of an RC is done at preprocessing this is not that bad.
The important measure is the running time of the RC. We show that for all cases but \sup\ and \inf\ the running time is constant. For these two it is logarithmic in the size of the heaviest weight which may also be acceptable in many settings.

\bibliographystyle{plainurl}
\bibliography{bib.bib}

\appendix

\section{Omitted proofs}

\subsection{Proof of \autoref{prop:no-fine-rc-for-reach}}\label{proof:prop:no-fine-rc-for-reach}
\propnofinercforreach*
\begin{proof}
    Assume towards contradiction  there exists  strong $\circlearrowright$-RC for the cycle-reachability problem.
    Consider a weighted graph with one MSCC and four vertices $v_0,...,v_3$ with edges $(v_i,v_{i\oplus 1})$ and $(v_{i\oplus 1},v_i)$ for $i=[0..3]$  where $\oplus$ stands for addition modulo $4$.
    Note that the edges between consecutive vertices imply a path between each pair of vertices. Let the reachability target be the edge  $(v_0,v_1)$. Note that for any vertex there is a cycle that includes the vertex and the reachability target. Thus the strong $\circlearrowright$-RC has to recommend all outgoing edges from the current vertex.
    Assume current vertex of the graph is $v_2$. Then it has paths $v_2,v_1,v_0$ and $v_2,v_3,v_0$ to the desired edge. Since both $v_1$ and $v_3$ lead to the reachability target, the runtime consultant has to recommend the letters corresponding to both $(v_2,v_1)$ and $(v_2,v_3)$. Assume wlog $(v_2,v_1)$ is chosen and the run moves to $v_1$. 
    Then $v_1$ has paths $v_1,v_0$ and $v_1,v_2,v_3,v_0$ to the desired edge. The recommendation of the RC from $v_1$ consists of letters corresponding to both $(v_1,v_0)$ and $(v_1,v_2)$, as they both lead to the reachability target. Thus, an edge that returns to $v_2$ can be chosen. 
    Note that both edges $(v_2,v_1)$ and $(v_1,v_2)$ are not a reachability target. That is, if the run cycles between them till infinity the obtained word would not reach the reachability target, contradicting that following the recommendation gives the maximal value.  
\end{proof}

\subsection{Proof of \autoref{prop:rc-for-limsup}}\label{proof:prop:rc-for-limsup}
\proprcforlimsup*
\begin{proof}
    The $\leadsto$-RC has to recommend letters on a path to an edge with the highest reachable weight that is a part of an MSCC. This ensures that this edge can be visited infinitely often. 
    Thus, this is a reachability problem. Hence, by \autoref{prop:no-best-rc-for-reach},  a strong $\leadsto$-RC is not always possible.

    The $\circlearrowright$-RC has to recommend letters that correspond to a cycle (not necessarily a simple one) with a heaviest possible edge that includes the current vertex. 
    This means reaching the heaviest weight-edge residing on a cycle that includes the current vertex. 
    Thus, this is a cycle-reachability problem. Hence, by \autoref{prop:no-fine-rc-for-reach},  a strong $\circlearrowright$-RC is not always possible.

    The  weak $\leadsto$-RC  is constructed as follows. 
    For each MSCC $S$ of $\aut{A}$ let $M_S$ be the set of all edges with maximal weight in $S$.
    Let $M$ be their union\footnote{Recall that we use MSCC for a set of vertices that have a non-empty path between each pair. Hence a singleton $\{v\}$ is an MSCC iff $v$ has a self-loop. Accordingly, there may be {vertices} that reside in no MSCC.}.
    Then we apply an algorithm that finds the shortest path from each vertex to the heaviest reachable edge of $M$. 
    With each vertex $v$ we associate a set $\Sigma_v$ that consists of letters on the first edge of such a path. 
    At runtime, when on vertex $v$, the $\leadsto$-RC recommends $\Sigma_v$.

    The weak $\circlearrowright$-RC  is constructed as follows. 
    For vertices that are not part of any SCC the set of letters is set to the empty set, as they cannot be part of any period of any word. 
    For each vertex within an MSCC we compute the shortest path to the heaviest edge in that MSCC, and record the letters on the first edge of such a path as $\Sigma_v$. At runtime, when on vertex $v$, the $\circlearrowright$-RC recommends $\Sigma_v$.
\end{proof}

\subsection{Proof of \autoref{prop:rc-for-sup}}\label{proof:prop:rc-for-sup}
\proprcforsup*
\begin{proof}
    The $\leadsto$-RC has to recommend letters that build a path to the edge with the highest reachable weight. 
    Thus, this is a reachability problem. It follows from~\autoref{prop:no-best-rc-for-reach} that strong $\leadsto$-RCs are not possible in the general case.

    The $\circlearrowright$-RC has to recommend letters that correspond to a cycle (not necessarily a simple one) with a heaviest possible edge that includes current vertex. 
    This means reaching the heaviest weight-edge residing on a cycle that includes the current vertex. 
    Thus, this is a cycle-reachability problem, and by \autoref{prop:no-fine-rc-for-reach}  strong $\circlearrowright$-RCs are not possible in general. 

    The  weak $\leadsto$-RC  is constructed as follows. 
    For each MSCC $S$ of $\aut{A}$ let $M_S$ be the set of all edges with maximal weight in $S$.
    Let $M$ be their union, to which we add all edges that are not a part of any MSCC. 
    Then we apply an algorithm that finds the shortest path from each vertex to the heaviest reachable edge of $M$. 
    With each vertex $v$ we associate a set $\Sigma_v$ that consists of letters on the first edge of such a path, and the highest weight reachable, denoted $\theta_v$. 
    At runtime the RC traverses the graph, and updates the highest weight visited so far, call it {$\theta_{cur}$}.
    When on vertex $v$, if {$\theta_{cur}<\theta_v$} the RC recommends $\Sigma_v$. Otherwise, it recommends $\Sigma$, since the maximal weight edge was already visited, and so the rest is immaterial.

    The weak $\circlearrowright$-RC  is constructed as follows. 
    For vertices that are not part of any SCC the set of letters is set to the empty set, as they cannot be part of any period of any word. 
    For each vertex within an MSCC we compute the shortest path to the heaviest edge in that MSCC, and record the letters on the first edge of such a path as $\Sigma_v$, as well the highest weight within the MSCC, denoted $\theta_v$. 
    At runtime the RC traverses the graph, and updates the highest weight visited so far, call it {$\theta_{cur}$}.
    When on vertex $v$, if {$\theta_{cur}<\theta_v$} the RC recommends $\Sigma_v$. Otherwise, it recommends $\Sigma'\subseteq \Sigma$ which includes all letters on edges from the vertex that are within the MSCC, from the same rational as in the $\leadsto$-case.
\end{proof}

\subsection{Proof of \autoref{prop:rc-for-liminf}}\label{proof:prop:rc-for-liminf}
\proprcforliminf*
\begin{proof}
    The $\leadsto$-RC has to recommend letters that lead to a cycle with a range of weights where the minimum weight is the maximal achievable (among other cycles) and then stay within it. Thus, this is a reachability problem, and by \autoref{prop:no-best-rc-for-reach}, strong $\leadsto$-RC may not exist.  

    The weak $\leadsto$-RC  is constructed as follows. 
    For each MSCC $S$ of $\aut{A}$ let $M_S$ be the set of all edges of a cycle in $S$ with the highest minimum-weight. 
     Let $\theta_S$ be the maximal minimum-weight of a cycle in $S$. For every edge $e$ in $M_S$ we record in $\theta_e$ the weight $\theta_S$. 
    Let $M$ be the union of the $M_S$ sets.
    Then we apply an algorithm that finds the shortest path from each vertex to a reachable edge $e$ of $M$ with highest $\theta_{e}$. 
    With each vertex $v$ we associate a set $\Sigma_v$ that consists of letters on the first edge of such a path.
    At runtime, when on vertex $v$, the $\leadsto$-RC recommends $\Sigma_v$.

    The $\circlearrowright$-RC  has to recommend all the letters that are part of the highest minimum-weight cycle that includes current vertex. That is, we aim to find a path from current vertex that returns to the current vertex and its minimal weight is the highest among all other such cyclic paths.
    
    The strong $\circlearrowright$-RC  is constructed as follows. For every vertex $v$ that is inside an MSCC we find the cycles with the maximal minimum-weight that include $v$ and record the letters of their first edges as $\Sigma_{v}$.
    For vertices outside any MSCC we set $\Sigma_{v}$ to be the empty set.
    At runtime, when on vertex $v$, the $\circlearrowright$-RC recommends $\Sigma_v$.
\end{proof}

\subsection{Proof of \autoref{prop:rc-for-inf}}\label{proof:prop:rc-for-inf}
\proprcforinf*
\begin{proof}
    The $\leadsto$-RC has to recommend letters that start a path whose minimum weight is the highest compared to other paths, and are above the minimum weight seen so far, if possible. That is, this is a safety problem, asking not to visit edges below the minimum seen so far (and if impossible, lowering it as little as possible).  

    The strong $\leadsto$-RC  is constructed as follows. 
    We iteratively update the values stored for each vertex and edge as follows.
    Initially the value $\theta_e$ of the edge $e$ is its weight and the value $\theta_v$ of a vertex $v$ is the maximum of $\theta_{e_i}$ such that $e_i$ is an outgoing edge of $v$.
    In the next iteration the value $\theta_e$ of edge $e=(u,v)$ is updated to the minium between $\theta_e$ and $\theta_v$. The value $\theta_v$ of vertex $v$ is updated again as described above. This process continues until stabilization (which is bounded by $|V|$ iterations). 
    After stabilization, for each vertex $v$ and value $\theta$ we store the letters corresponding to  outgoing edges $e$ with $\theta_e$ higher or equal to $\theta$ in $\Sigma_{v,\theta}^{max}$ and the letters corresponding to outgoing  edges with the highest value among the values that are lower than $\theta$ in $\Sigma_{v,\theta}^{min}$.
    At runtime the RC traverses the graph, and updates the lowest weight visited so far, call it $\theta_{cur}$. 
    When on vertex $v$, if $\Sigma_{v,\theta_{cur}}^{max}$ is non-empty it recommends it. Otherwise, it recommends $\Sigma_{v,\theta_{cur}}^{min}$.

    For the strong $\circlearrowright$-RC we also associate with each vertex $v$ and value $\theta$ the set of letters  $\Sigma_{v,\theta}^{max}$ and $\Sigma_{v,\theta}^{min}$ to be used by the RC at runtime as above.
    For their calculation, the preprocessing is different.
    For each edge $e$ we would like to find the maximal minimum-value on a cycle  that includes $e$. This can be done iteratively as follows. Suppose the weights of edges in the original graph are $\theta_1,\ldots,\theta_k$ where $\theta_1<\theta_2<\ldots<\theta_k$. 
    In the first iteration we keep in the graph only edges with weight $\theta_k$. We iteratively keep edges of lower weight. Formally, in iteration $i$ we keep in the graph only edges with weight at least $\theta_{k-i+1}$. In each iteration we look for a cycle containing $e$. 
    If $j$ is the first iteration on which a cycle containing $e$ was found, then the value of $e$ is $\theta_{k-j+1}$. 
    Once each edge has its value, we set $\Sigma_{v,\theta}^{max}$ and $\Sigma_{v,\theta}^{min}$ as above.
\end{proof}

\subsection{Proof of \autoref{prop:fine-trial-to-Ltrail}}\label{proof:prop:fine-trial-to-Ltrail}
\propfinetrialtoLtrail*

\begin{proof} 
    For $\circlearrowright$-recommendations from vertex $v$ and an already read prefix $u$, we need the recommendation to continue $u$ towards reaching a maximal mean weight cycle involving the vertex $v$. In each step of the process, if we can close a better loop on the current vertex, we prefer that. This means that we look for the best (in terms of {max} mean-weight) cycle closing a loop on {$v$}, even if it visits a vertex more than once. That is, we allow a trail, and do not restrict to a simple cycle. We do not allow the cycle to visit an edge more than once, since otherwise one can loop on a cycle over and over, making its weight the dominant part and the weight of the rest of the cycle negligible, and this will not reflect looping back to the original vertex $v$. Thus, in our terminology we look for a cyclic-trail.
\end{proof}

\subsection{Proof of \autoref{prop:Ltrail-coNP-comp}}\label{proof:prop:Ltrail-coNP-comp}
\propLtrailcoNPcomp*

\begin{proof} 
For the sake of this proof, let $L=L_{\trail}$. Then
$$\overline{L}= \left\{ (G,k,v)~\left|~\begin{array}{l} G=(V,E,\theta) \mbox{ is a weighted graph, } v\in V,\\
k \in\mathbb{Q}\mbox{ and \textbf{there exists a cyclic trail }} c \\ \mbox{going through } v \mbox{ that has mean weight} < k
\end{array}\right.
\right\}$$ 
    
We show that $\overline{L}$ is NP-complete. Membership in NP holds since a cyclic-trail $c$ answering the requirement can be provided as a witness. The length of $c$ is bounded by $|E|$, and verifying that it answers the requirement can be done in polynomial time. We can verify that its mean weight is smaller than $k$ in polynomial time by summing the weights of its edges and dividing by the length of the cyclic trail. 

To show that it is NP-complete we do a reduction from 3SAT.
Given a 3CNF formula $\varphi$ with $m$ clauses and $n$ variables, we construct a graph $G_\varphi=(V,E,\theta)$ as follows (see illustration in \autoref{fig:fine-trail-reduction}). 
The set of vertices $V$ is
$\{v\} \cup \{ x_i, t_{ij}, t'_{ij}, f_{ij}, f'_{ij} ~|~i\in[1..n],\ j\in[1..m] \} \cup \{ c_{j\ell}~|~j\in[1..m],\ \ell\in[1..d] \}$ where $d$ is {$nm{+}1$}. 
The set of vertices $E$ is 
$$\begin{array}{l}
\phantom{\cup \ }\{(v,x_1),(t'_{nm},v),(f'_{nm},v)\}\\ \cup\  
\{(x_i,t_{i1}),(x_i,f_{i1}),(t_{ij},t'_{ij}),(f_{ij},f'_{ij})~|~i\in[1..n],\ j\in[1..m]\}\\ \cup\  
\{(t'_{ij},t_{ij+1}),(f'_{ij},f_{ij+1}) ~|~i\in[1..n],\ j\in[1..m\!-\!1]\}\\\cup\ 
\{(t'_{im},x_{i+1}),(f'_{im},x_{i+1})~|~i\in[1..n\!-\!1],\ j\in[1..m]\}\\\cup\ 
\{(t_{ij},c_{j 1}),(c_{j 1},t'_{ij}),~|~i\in[1..n],\ j, \in[1..m], x_i \in C_j\}\\\cup\ 
\{(f_{ij},c_{j 1}), (c_{j 1}, f'_{ij})~|~i\in[1..n],\ j\in[1..m], \overline{x_i}\in C_j\}\\\cup\
\{(c_{j\ell},c_{j\ell+1})~|~j\in[1..m],\ \ell\in[1..d\!-\!1]\}\\\cup \
\{(c_{jd},c_{j1})~|~j\in[1..m]\}
\end{array}.$$
The weights $\theta$ are $0$ for edges in $\{(c_{j\ell},c_{j\ell+1})~|~j{\in}[1..m],\ \ell{\in}[1..d\!-\!1]\}$ or 
$\{(c_{jd},c_{j1})~|~\allowbreak j{\in}[1..m]\}$ and $1$ for the rest of the edges. 
We set $k$ to some number in the range $(\frac{x}{y},\frac{x-1}{y-1-d})$ where {$x=(2nm{+}n{+}1){+}m$} and
$y=x{+}md$.

Let $\eta\in\mathbb{B}^n$ be an assignment to the $n$ variables.
We can construct a cycle capturing this assignment as follows.
The cycle starts at $v$, then moves to $x_{1}$.
From $x_1$ if $\eta[1]=\true$ it goes to $t_{11}$, otherwise to $f_{11}$.
Once in $t_{ij}$ it goes to $t'_{ij}$ and from $t'_{ij}$ to $t_{i(j+1)}$ and so on until reaching $t'_{im}$, and similarly for $f_{ij}$.
From $t'_{im}$ or $f'_{im}$ it goes to $x_{i+1}$.
From $x_{i+1}$ to $t_{(i\!+\!1)1}$ or $f_{(i\!+\!1)1}$ depending on whether $\eta[i\!+\!1]$ is $\true$ or $\false$, and so on.
From $t'_{nm}$ or $f'_{nm}$ it goes back to $v$. The length of this cycle is {$2mn+n+1$}.

Next, we show that if $\varphi$ has a satisfying assignment $\eta\in\mathbb{B}^n$
we can extend the cycle so that it visits all the vertices $c_{j\ell}$ for $j\in[1..m]$ and $\ell\in[1..d]$.
If $\eta[i]$ is $\true$ (resp. $\false$) and $x_i$ (resp. $\bar{x_i}$) is in the $j$-th clause then the cycle instead of going from $t_{ij}$ to $t'_{ij}$ (resp. $f_{ij}$ to $f'_{ij}$) goes
from $t_{ij}$ (resp. $f_{ij}$) to $c_{j1}$ and then goes to $c_{j2}$ and so on until reaching $c_{jd}$ and looping back to $c_{j1}$ and from there to $t'_{ij}$ (resp. $f'_{ij}$).

The cycle described above is a trail (as no edge is visited more than once) and it visits all vertices $c_{j1},\ldots,c_{jd}$ for all $j\in[1..m]$. 
The length of the cycle is extended by $md+m$;  i.e., it is {$2mn+n+1+md+m=y$}. Since the weight of the edges in the part corresponding to the clauses is $0$, the weight of this cycle is {$(2mn+n+1)+m$}, namely, $x$. The mean-weight of the trail is thus $\frac{x}{y}$ which is smaller than $k$.

Note that there exists no trail in this graph going through $v$ with a better mean weight.
Consequently, if $\varphi$ has no satisfying assignment, then for every assignment there exists at least one clause that is not satisfied. Therefore, any cycle corresponding to this assignment does not go through the vertices and edges corresponding to the unsatisfied clause.  
The mean-weight of any such cycle is at least $\frac{x-1}{y-1-d}$. Therefore, it is greater than $k$. 
\end{proof}

\begin{figure}
\begin{center}
\scalebox{0.65}{
\begin{tikzpicture}[->,>=stealth',shorten >=1pt,auto,node distance=1.6cm,semithick,initial text=, initial above,
state/.style={circle, draw, minimum size=0.8cm}]

\node[state]  (x1)    {$x_1$};
\node[state]  (x2)  [right of=x1, node distance=2.35cm]   {$x_2$};
\node[label]  (dotsForX)  [right of=x2, node distance=2.35cm]   {$\circ\circ\circ$};
\node[state]  (xn)  [right of=dotsForX, node distance=2.35cm]   {$x_n$};
\node[state]  (v)  [right of=xn, node distance=2.35cm]   {$v$};

\node[state]  (t11)  [above of=x1]   {$t_{11}$};
\node[state]  (t11tag)  [above of=t11]   {$t'_{11}$};
\node[state]  (t12)  [above of=t11tag]   {$t_{12}$};
\node[label]  (dotsForT1)  [above of=t12]   {\rotatebox{90}{$\circ\circ\circ$}};
\node[state]  (t1mtag)  [above of=dotsForT1]   {$t'_{1m}$};

\node[state]  (f11)  [below of=x1]   {$f_{11}$};
\node[state]  (f11tag)  [below of=f11]   {$f'_{11}$};
\node[state]  (f12)  [below of=f11tag]   {$f_{12}$};
\node[label]  (dotsForF1)  [below of=f12]   {\rotatebox{90}{$\circ\circ\circ$}};
\node[state]  (f1mtag)  [below of=dotsForF1]   {$f'_{1m}$};

\node[state]  (t21)  [above of=x2]   {$t_{21}$};
\node[state]  (t21tag)  [above of=t21]   {$t'_{21}$};
\node[state]  (t22)  [above of=t21tag]   {$t_{22}$};
\node[label]  (dotsForT2)  [above of=t22]   {\rotatebox{90}{$\circ\circ\circ$}};
\node[state]  (t2mtag)  [above of=dotsForT2]   {$t'_{2m}$};

\node[state]  (f21)  [below of=x2]   {$f_{21}$};
\node[state]  (f21tag)  [below of=f21]   {$f'_{21}$};
\node[state]  (f22)  [below of=f21tag]   {$f_{22}$};
\node[label]  (dotsForF2)  [below of=f22]   {\rotatebox{90}{$\circ\circ\circ$}};
\node[state]  (f2mtag)  [below of=dotsForF2]   {$f'_{2m}$};

\node[state]  (tn1)  [above of=xn]   {$t_{n1}$};
\node[state]  (tn1tag)  [above of=tn1]   {$t'_{n1}$};
\node[state]  (tn2)  [above of=tn1tag]   {$t_{n2}$};
\node[label]  (dotsForTn)  [above of=tn2]   {\rotatebox{90}{$\circ\circ\circ$}};
\node[state]  (tnmtag)  [above of=dotsForTn]   {$t'_{nm}$};

\node[state]  (fn1)  [below of=xn]   {$f_{n1}$};
\node[state]  (fn1tag)  [below of=fn1]   {$f'_{n1}$};
\node[state]  (fn2)  [below of=fn1tag]   {$f_{n2}$};
\node[label]  (dotsForFn)  [below of=fn2]   {\rotatebox{90}{$\circ\circ\circ$}};
\node[state]  (fnmtag)  [below of=dotsForFn]   {$f'_{nm}$};

\node[label]  (space)  [right of=tn2, node distance=3.0cm]   {$ $};
\node[state]  (c11)  [right of=space]   {$c_{11}$};
\node[state]  (c12)  [right of=c11]   {$c_{12}$};
\node[label]  (dotsForC11)  [right of=c12]   {$\circ\circ\circ$};
\node[state]  (c1dnm)  [right of=dotsForC11]   {$c_{1d}$};

\node[state]  (c21)  [below of=c11, node distance=1.90cm]   {$c_{21}$};
\node[state]  (c22)  [right of=c21]   {$c_{22}$};
\node[label]  (dotsForC21)  [right of=c22]   {$\circ\circ\circ$};
\node[state]  (c2dnm)  [right of=dotsForC21]   {$c_{2d}$};
\node[label]  (gapForC)  [below of=c21, node distance=1.90cm]   {};
\node[label]  (dotsForC)  [right of=gapForC, node distance=2.5cm]   {\rotatebox{90}{$\circ\circ\circ$}};
\node[state]  (cm1)  [below of=gapForC, node distance=1.40cm]   {$c_{m1}$};
\node[state]  (cm2)  [right of=cm1]   {$c_{m2}$};
\node[label]  (dotsForCm1)  [right of=cm2]   {$\circ\circ\circ$};
\node[state]  (cmdnm)  [right of=dotsForCm1]   {$c_{md}$};

\path (v) edge   [bend left=15]
           node [above] {$1$}
      (x1); 

\path (x1) edge   
           node [right] {$1$}
      (t11); 
\path (t11) edge   
           node [right] {$1$}
      (t11tag); 
\path (t11tag) edge   
           node [right] {$1$}
      (t12);      
\path (t12) edge   
           node [right] {$1$}
      (dotsForT1); 
\path (dotsForT1) edge   
           node [right] {$1$}
      (t1mtag); 

\path (x1) edge   
           node [right] {$1$}
      (f11); 
\path (f11) edge   
           node [right] {$1$}
      (f11tag); 
\path (f11tag) edge   
           node [right] {$1$}
      (f12);      
\path (f12) edge   
           node [right] {$1$}
      (dotsForF1); 
\path (dotsForF1) edge   
           node [right] {$1$}
      (f1mtag); 

\path (x2) edge   
           node [right] {$1$}
      (t21); 
\path (t21) edge   
           node [right] {$1$}
      (t21tag); 
\path (t21tag) edge   
           node [right] {$1$}
      (t22);      
\path (t22) edge   
           node [right] {$1$}
      (dotsForT2); 
\path (dotsForT2) edge   
           node [right] {$1$}
      (t2mtag); 

\path (x2) edge   
           node [right] {$1$}
      (f21); 
\path (f21) edge   
           node [right] {$1$}
      (f21tag); 
\path (f21tag) edge   
           node [right] {$1$}
      (f22);      
\path (f22) edge   
           node [right] {$1$}
      (dotsForF2); 
\path (dotsForF2) edge   
           node [right] {$1$}
      (f2mtag); 

\path (xn) edge   
           node [right] {$1$}
      (tn1); 
\path (tn1) edge   
           node [right] {$1$}
      (tn1tag); 
\path (tn1tag) edge   
           node [right] {$1$}
      (tn2);      
\path (tn2) edge   
           node [right] {$1$}
      (dotsForTn); 
\path (dotsForTn) edge   
           node [right] {$1$}
      (tnmtag); 

\path (xn) edge   
           node [right] {$1$}
      (fn1); 
\path (fn1) edge   
           node [right] {$1$}
      (fn1tag); 
\path (fn1tag) edge   
           node [right] {$1$}
      (fn2);      
\path (fn2) edge   
           node [right] {$1$}
      (dotsForFn); 
\path (dotsForFn) edge   
           node [right] {$1$}
      (fnmtag);     

\path (t1mtag) edge  [out=75,in=115] 
           node [right, near start] {$1$}
      (x2); 
\path (f1mtag) edge  [out=285,in=245] 
           node [right, near start] {$1$}
      (x2); 
\path (t2mtag) edge  [out=75,in=115] 
           node [right, near start] {$1$}
      (dotsForX); 
\path (f2mtag) edge  [out=285,in=245] 
           node [right, near start] {$1$}
      (dotsForX); 
\path (tnmtag) edge  [out=75,in=115] 
           node [right, near start] {$1$}
      (v); 
\path (fnmtag) edge  [out=285,in=245] 
           node [right, near start] {$1$}
      (v); 

\path (c11) edge   
           node [above] {$0$}
      (c12); 
\path (c12) edge   
           node [above] {$0$}
      (dotsForC11); 
\path (dotsForC11) edge   
           node [above] {$0$}
      (c1dnm);      
\path (c1dnm) edge [bend left=25]  
           node [below, near start]  {$0$}
      (c11); 

\path (c21) edge   
           node [above] {$0$}
      (c22); 
\path (c22) edge   
           node [above] {$0$}
      (dotsForC21); 
\path (dotsForC21) edge   
           node [above] {$0$}
      (c2dnm);      
\path (c2dnm) edge [bend left=25]  
           node [below, near start]  {$0$}
      (c21); 

\path (cm1) edge   
           node [above] {$0$}
      (cm2); 
\path (cm2) edge   
           node [above] {$0$}
      (dotsForCm1); 
\path (dotsForCm1) edge   
           node [above] {$0$}
      (cmdnm);      
\path (cmdnm) edge [bend left=25]  
           node [below, near start]  {$0$}
      (cm1); 

\path (tn1) edge  [bend right=5] 
           node [below, near end] {$1$}
      (c11); 
\path (c11) edge [bend right=5]  
           node [above, near start] {$1$}
      (tn1tag);   
\path (dotsForT2) edge  [bend right=5]  
           node [below, near end] {$1$}
      (cm1); 
\path (cm1) edge   [bend right=5] 
           node [above, near start] {$1$}
      (t2mtag);  
\path (dotsForFn) edge [bend left=5] 
           node [above, near end] {$1$}
      (cm1); 
\path (cm1) edge  [bend left=5]
           node [above, near start] {$1$}
      (fnmtag);     

\end{tikzpicture}}
\end{center}
\vspace{-20mm}
\caption{Graph for the reduction from 3SAT to $L_{trail}$ applied to a formula of the form $(x_n\vee \ldots)  \wedge \cdots \wedge (x_2 \vee \overline{x_n} \vee \ldots )$}\label{fig:fine-trail-reduction}
\end{figure}
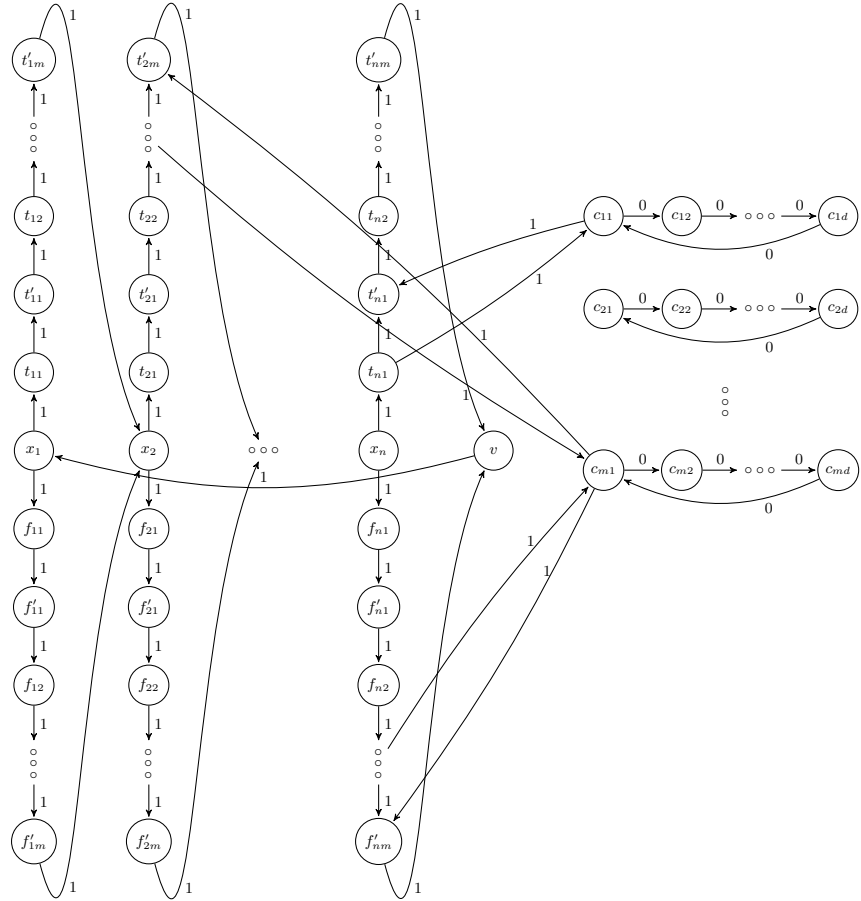

\subsection{Proof of \autoref{prop:rc-for-limavg}}\label{proof:prop:rc-for-limavg}
\proprcforlimavg*
\begin{proof}
    The $\leadsto$-RC has to recommend letters that build a path to the cycle with the highest reachable mean-weight. 
    Thus, this is a reachability problem. Hence, by~\autoref{prop:no-best-rc-for-reach}, no strong $\leadsto$-RC exists in the general case.

    The  weak $\leadsto$-RC  is constructed as follows. 
    For each MSCC $S$ of $\aut{A}$ let $M_S$ be the set of all edges of a cycle with maximal mean-weight in $S$. Such cycles can be found using Karp's algorithm~\cite{Karp78}. 
    Let $\theta_S$ be the maximal mean-weight of a cycle in $S$.
    We want to propagate these weights backwards to all edges and vertices from which $M_S$ is reachable. 
    Once this propagation is done, the weight associated with a vertex $v$ is the maximal mean-weight of a cycle reachable from $v$, call it $\theta_v$. 
    Then we apply an algorithm that finds the shortest path from each vertex $v$ to a cycle of mean-weight $\theta_v$. 
    With each vertex $v$ we associate a set $\Sigma_v$ that consists of letters on the first edge of such a path. 
    At runtime, when on vertex $v$, the $\leadsto$-RC recommends $\Sigma_v$.

    For the $\circlearrowright$-RC  we use the notion of a cyclic trail. The problem of finding the maximal mean-weight cycle that includes current vertex is reduced to the problem of finding cyclic trails of maximal mean-weight involving the current vertex, according to \autoref{prop:fine-trial-to-Ltrail}. 
    This problem is coNP-complete, as per \autoref{prop:Ltrail-coNP-comp}. 
    
    The strong $\circlearrowright$-RC  is constructed as follows. For each edge that is contained in an MSCC we compute the maximal mean-weight of a cyclic trail that the edge participates in. 
    With each vertex $v$ we associate a set $\Sigma_v$ that consists of letters on its outgoing edges with highest mean-weight.
    For vertices that are not part of any MSCC we set $\Sigma_v$ to the empty set.
    At runtime, when on vertex $v$, the $\circlearrowright$-RC recommends $\Sigma_v$.
\end{proof}

\subsection{Proof of \autoref{prop:rc-for-reg-omega-qualitative}}\label{proof:prop:rc-for-reg-omega-qualitative}

\proprcforregomegaqualitative*
\begin{proof}

    The $\leadsto$-RC has to recommend letters on a path to an accepting cycle. Thus, this is a reachability problem. Hence, by \autoref{prop:no-best-rc-for-reach},  a strong $\leadsto$-RC is not always possible.

    The $\circlearrowright$-RC has to recommend letters that correspond to an accepting cycle (not necessarily a simple one) that includes current state. Going through such cycle has to visit an even minimal rank. Thus, this is a cycle-reachability problem. Hence, by \autoref{prop:no-fine-rc-for-reach}, a strong $\circlearrowright$-RC is not always possible.

    For the weak $\leadsto$-RC for $\chi_{\aut{A}}$ the idea is to find in each MSCC some accepting cycle, and then direct the run  to an accepting cycle, by following the shortest path. 
    In an MSCC $S$ we can find an accepting cycle as follows. 
    If the minimum state-rank $d$ in $S$ is odd, then remove all states ranked $d$ and their adjacent edges. After this step, find a simple cycle that includes the minimal rank (now certainly even). This cycle is of course accepting. 
    Once an accepting simple cycle for each MSCC $S$ (if exists)  is found, we let $M$ be the set of edges on all such cycles.   
    Then we apply 
    an algorithm that finds the shortest path from each state to a reachable edge of $M$. 
    With each state $v$ we associate a set $\Sigma_v$ that consists of letters on the first edge of such a path. 
    If there are no reachable edges in $M$ from $v$, we set $\Sigma_v$ to be $\Sigma$. 
    At runtime, when on state $v$, the $\leadsto$-RC recommends $\Sigma_v$.

    The weak $\circlearrowright$-RC for $\chi_{\aut{A}}$ is constructed as follows. 
    For each MSCC $S$ of $\aut{A}$, if the minimal rank  $d$ in $S$ is odd, we temporarily remove from $S$ all states of rank $d$ and their adjacent edges.
    Then for each MSCC $S$, let $V_S$ be the set of states of the (revised) minimal rank. 
    Then we apply an algorithm that finds the shortest path from each state of $S$ to $V_S$. 
    With each state $v$ we associate a set $\Sigma_v$ that consists of letters on the first edge of such a path. 
    For states $v$ whose $\Sigma_v$ is empty after the above, we set $\Sigma_v$ to be $\Sigma$. 
    For states outside any MSCC we set the letters sets to be {the empty set}. 
    At runtime, when on state $v$, the $\circlearrowright$-RC recommends $\Sigma_v$.
\end{proof}

\subsection{Proof of \autoref{prop:rbst-acc-cycle-to-Lsimple}}\label{proof:prop:rbst-acc-cycle-to-Lsimple}
\proprbstacccycletoLsimple*
\begin{proof} 
Given the robustness parity automaton $\aut{P}^{\truerobustness}_L$  
for $L$, 
we consider the induced graph 
with weights as suggested in p.~\pageref{subsub:giving-weights}.
If a cycle is accepting then the lowest state-rank in it is even.
Therefore, for every state $s$ with even rank $d$, we can temporarily remove all states of lower rank and their incoming and outgoing edges from the graph. The state $s$ is the state the cycle should pass through. Then we find the cycle(s) with the maximal mean weight that includes the state $s$. Clearly the above weights reflect the robustness preferences. Lower weights are given to edges that should be avoided in an accepting cycle and higher weights are given to edges that relate to moves that end in a small even rank that are the preferred moves in an accepting cycle. See \autoref{rmk:leadso-rc-in-forgetful} regarding the robustness parity automaton.
\end{proof}

\begin{remark} \label{rmk:leadso-rc-in-forgetful}
    Note that 
    the above proof refers to a parity automaton, whereas the robustness automaton is a generalization of parity automaton (DPA) called a \emph{dual} parity automaton (\fdpa). The reduction holds for \fdpa\ as well. We elaborate why.
    
    The difference is that in a \fdpa\ the ranks are in $\{-2,-1,0,1,\ldots,k\}$ whereas in a DPA there are no negative ranks. Furthermore, from a state ranked $-2$ or $-1$ there {is an $\varepsilon$-transition} to the state it arrived from, and there are no further transitions.
    A run of a \fdpa\ is accepting if either (i) no-negative ranks occur, and the minimal rank visited infinitely often is even, or (ii) rank $-2$ is the first non-negative rank visited.  
    Note that by construction, all $-2$-ranked states are in a cycle with one white edge, and one transparent edge (the $\varepsilon$-edge). Furthermore, all $-2$-cycles are  best mean-weight {cycles}. Note also that such cycles are not accessible via a simple path that includes rank $-1$ since all $-1$-states are on a cycle with an edge back to the source state of the edge leading to $-1$. So a path to the $-2$-cycle will not take the $-1$-state. It follows that in spite the difference in definition of \fdpa\ and DPA the same reduction works for \fdpa\ as well.
\end{remark}

\subsection{Proof of \autoref{prop:Lsimple-coNP-comp}}\label{proof:prop:Lsimple-coNP-comp}
\propLsimplecoNPcomp*
\begin{proof}
For the sake of this proof, let $L=L_{\simple}$. Then
$$\overline{L}= \left\{ (G,k,v)~\left|~\begin{array}{l} G=(V,E,\theta) \mbox{ is a weighted graph, } v\in V, k \in\mathbb{Q}\\
\mbox{and \textbf{there exists a simple cycle }} c \mbox{ going through } v \\
\mbox{with mean weight} < k
\end{array}\right.
\right\}$$ 
    Clearly $\overline{L}$ is in NP.
    As a witness for membership of $(G,k,v)$ in $\overline{L}$ we can take a simple cycle $c$. The length of $c$ is bounded by $|V|$ and is thus polynomial. Furthermore, we can verify that its mean weight is smaller than $k$ in polynomial time by summing the weights of its edges and dividing by the length of the cycle. 
    
    Hardness of $\overline{L}$ is proven by reduction from the Hamiltonian Cycle problem. Given graph $G$, the reduction chooses one vertex $v$ arbitrarily, gives weight $1$ to its outgoing edges and weight $0$ to the rest of the edges. It sets $k$ to some number in the range $(\frac{1}{n},\frac{1}{n-1})$ where $n=|V|$. The obtained weighted graph is called $G'$.

    If there is a Hamiltonian cycle in $G$ then the corresponding cycle in the weighted graph $G'$ has weight $\frac{1}{n}$ since it must include one and only one outgoing edge from $v$. This weight is smaller than $k$.
    Any simple cycle of $G$ that includes $v$ and is not an Hamiltonian cycle, will have weight $\frac{1}{m}$ for some $m<n$.
    We get that $\frac{1}{m}\geq\frac{1}{n-1}>k$.  Therefore, if there is no Hamiltonian cycle in $G$ then the minimum mean weight of a cycle in $G'$ that includes $v$ is greater than $k$.
\end{proof}

\subsection{Proof of \autoref{prop:rbst-acc-trial-to-Ltrail}}\label{proof:prop:rbst-acc-trial-to-Ltrail}

\begin{restatable}[]{proposition}{proprbstacctrialtoLtrail}\label{prop:rbst-acc-trial-to-Ltrail}
The problem of finding $\circlearrowright$-recommendation of $\valrbst_L$ reduces to finding accepting cyclic trails of maximal mean-weight involving the current {state} $q_u$
and some {state} from a set of interest $U$. 
\end{restatable}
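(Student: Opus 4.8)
The plan is to layer the features specific to $\valrbst_L$ on top of two reductions already in hand: the \limavg\ reduction to cyclic trails (\autoref{prop:fine-trial-to-Ltrail}) and the $\leadsto$-side robustness reduction to maximal mean-weight accepting cycles (\autoref{prop:rbst-acc-cycle-to-Lsimple}). First I would recall that, by the definitions of $\lassovalue$ and $\finerecom$, the $\circlearrowright$-RC at the current state $q_u=\aut{A}(u)$ seeks the most robust \emph{cyclic}-extension, i.e.\ the best period $v$ of a lasso word $uv^\omega$ that loops back to $q_u$. Since the spoke $u$ is already fixed, so are $k=|u|$ and $\tau_u$; hence in the robustness tuple $(a_w,\tau_v,k(\tau_u-\tau_v))$ only the acceptance bit $a_w$ and the period score $\tau_v$ vary, and these are exactly the two most-significant components. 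The task therefore decomposes into (i) restricting to loops on $q_u$ with $a_w=\true$, and (ii) maximizing $\tau_v$ among those.

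For (ii) I would reuse the color weighting of p.~\pageref{subsub:giving-weights}: a loop of length $l$ has total weight $\white\black\cdot C^2+\green\red\cdot C$, so its mean weight is an order-faithful encoding of its averaged period score $\tau_v=(\tfrac{\white\black}{l},\tfrac{\green\red}{l})$ --- the very same encoding used on the $\leadsto$-side. Consequently a $\tau_v$-maximal accepting loop is a maximal mean-weight accepting loop, and step (ii) becomes a mean-weight maximization.

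For (i) I would recall that a lasso word is accepted iff its loop has an even minimal rank (or, for the dual parity automaton, reaches a $-2$-cycle, cf.\ \autoref{rmk:leadso-rc-in-forgetful}), and define the set of interest $U$ to collect these acceptance witnesses. As in the proof of \autoref{prop:rbst-acc-cycle-to-Lsimple}, I would enforce acceptance by fixing a candidate even minimal rank $d^*$, deleting every state of rank below $d^*$, and requiring the loop to pass through some $s\in U$ of rank $d^*$, which forces the minimal rank to equal $d^*$ and hence the loop to be accepting; ranging over all even $d^*$ and keeping the best gives the optimum. Crucially, unlike the $\leadsto$-case the loop is now forced through $q_u$, so (exactly as in \autoref{prop:fine-trial-to-Ltrail}) a simple cycle need not suffice: the optimal object is a cyclic \emph{trail} that splices the path from $q_u$ to a best accepting sub-cycle with the return path to $q_u$, with edge-repetition forbidden lest a single sub-cycle be pumped and the loop no longer represent a genuine return to $q_u$. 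This is exactly ``an accepting cyclic trail of maximal mean weight involving $q_u$ and some state of $U$''.

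The step I expect to be the main obstacle is reconciling the acceptance (rank) constraint with the mean-weight maximization over trails. The rank deletion that guarantees an even minimal rank must not discard edges that the optimal accepting trail actually traverses, while the requirement ``visit some $s\in U$ of rank $d^*$'' must certify that $d^*$ is genuinely the minimum rather than merely a rank that happens to be visited. I would resolve this by carrying out the construction separately for each candidate even $d^*$ in the subgraph tailored to that rank, and arguing that the globally optimal accepting trail through $q_u$ survives intact in exactly the subgraph matching its own minimal rank, so that ranging over $d^*$ and taking the best recovers it.
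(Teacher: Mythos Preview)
Your proposal is correct and follows essentially the same approach as the paper: both fix a candidate even rank $d$, delete lower-rank states so that intersecting a rank-$d$ state certifies acceptance, take $U$ to be the rank-$d$ states, argue that cyclic \emph{trails} (not simple cycles) are required because the loop is pinned at $q_u$, and range over all admissible even $d$ to recover the optimum. Your write-up is in fact more explicit than the paper's on two points---the decomposition of the robustness tuple showing that only $a_w$ and $\tau_v$ are in play once $u$ is fixed, and the order-faithfulness of the color-weight encoding for $\tau_v$---while the paper simply asserts the weights ``reflect the robustness preferences'' and moves on.
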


\begin{proof} 
    We can consider the same weights as in p.\pageref{subsub:giving-weights}.
    For $\circlearrowright$-recommendations from state $q_u$ and an already read prefix $u$, we need the recommendation to continue $u$ towards reaching a best cycle involving the state $q_u$ that contains a minimal even state-rank. In each step of the process, if we can close a better loop on the current state, we prefer that. This means that we look for the best (in terms of {max} mean-weight) accepting cycle closing a loop on {$q_u$}, even if it visits a state more than once. That is, we allow a trail, and do not restrict to a simple cycle. We do not allow the cycle to visit an edge more than once, since otherwise one can loop on a cycle over and over, making its weight the dominant part and the weight of the rest of the cycle negligible, and this will not reflect looping back to the original state $q_u$. Thus, in our terminology we look for a cyclic-trail.

    For each even rank $d$, we first remove  the {states} of rank smaller than $d$. This way intersecting a {state} ranked $d$ guarantees the resulting cycle is accepting. 
    We define $U$ to be the set of states of rank $d$. This ensures that a cycle that intersects $U$ is an accepting cycle. 
    Given $k$ is the range of ranks, we have to run the above procedure at most $k/2$ times for each {state} $q_u$ for all even ranks smaller or equal to the rank of $q_u$. From all accepting cycles discovered this way we choose those with the maximal mean-weight.
\end{proof}

\subsection{Proof of \autoref{prop:Ltrail-coNP-comp-rbst}}\label{proof:prop:Ltrail-coNP-comp-rbst}

\begin{restatable}[]{proposition}{propLtrailcoNPcomprbst}\label{prop:Ltrail-coNP-comp-rbst}
    The problem of computing $\circlearrowright$-RC for $\valrbst_L$ is coNP-complete.
\end{restatable}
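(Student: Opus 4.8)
The plan is to prove both bounds separately, leaning throughout on \autoref{prop:rbst-acc-trial-to-Ltrail}, which already reduces the computation of the $\circlearrowright$-recommendation of $\valrbst_L$ to finding accepting cyclic trails of maximal mean-weight through the current state $q_u$ and some state of the interest set $U$. I would phrase the decision version of the problem as: given the robustness DPA, the current state $q_u$, the set $U$, and a threshold $k$, does \emph{every} accepting cyclic trail through $q_u$ and some state of $U$ have mean-weight at least $k$? For membership in coNP I would argue that the complement is in NP: a witness is a single offending trail, namely an accepting cyclic trail through $q_u$ and $U$ whose mean-weight is below $k$. Its length is bounded by $|E|$ since no edge repeats, so it is a polynomial-size certificate, and one checks in polynomial time that it is cyclic, visits $q_u$ and $U$, has even minimal rank (hence is accepting), and has mean-weight $<k$ by summing edge weights and dividing by the length. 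Thus the complement lies in NP and the problem lies in coNP.

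For coNP-hardness I would reduce from $L_{\trail}$, which is coNP-complete by \autoref{prop:Ltrail-coNP-comp} and whose hardness instances use only weights in $\{0,1\}$. Given such an instance $(G,k,v)$, I would build a robustness DPA whose induced weighted graph mirrors $G$ but in which the two side-conditions of the robustness problem — that the cyclic trail be accepting and that it meet $U$ — are made vacuous. Concretely, I would assign the designated vertex $v=q_u$ the rank $0$ and make it the unique globally minimal rank, and set $U$ to be the set of rank-$0$ states. Then every cyclic trail through $q_u$ automatically has minimal rank $0$, which is even, so it is accepting, and it trivially meets $U$ (via $v$ itself). Consequently the accepting cyclic trails through $q_u$ and $U$ are \emph{exactly} the cyclic trails through $v$ in $G$, and the robustness threshold question becomes the $L_{\trail}$ question up to the rescaling of weights introduced below.

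The remaining task, and the main obstacle, is to realize the $\{0,1\}$-weights of $G$ through the color-derived weights $\{C^2,C,0,-C,-C^2\}$ of the robustness automaton (p.~\pageref{subsub:giving-weights}), since these are fixed by the rank assignment rather than chosen freely. I would encode each weight-$1$ edge of $G$ as a \emph{green} edge of weight $C$, by keeping both of its endpoints at rank $0$; and I would encode the cheap (weight-$0$) edges — which in the $L_{\trail}$ construction form the clause cycles $c_{j1}\cdots c_{jd}c_{j1}$ — as \emph{red} edges of weight $-C$, by placing all clause states at a common odd rank $1$. This is feasible precisely because a cycle all of whose states have rank $1$ consists entirely of red edges ($d'$ odd and $d\geq d'$), whereas a fully \emph{yellow} cycle would require a strict rank increase all the way around and is therefore impossible; the few entry/exit edges linking the rank-$0$ gadget to the rank-$1$ clause states pick up green or yellow colors, for which I would compensate by recomputing the separating threshold $k'$ from the shifted weights. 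Since the relative ordering ``clause edges cheaper than main-gadget edges'' is preserved (indeed strengthened, from $0<1$ to $-C<C$), the gap analysis of \autoref{prop:Ltrail-coNP-comp} — satisfiable formulas yielding a low-mean accepting cyclic trail and unsatisfiable ones forcing every such trail above the threshold — carries over verbatim once the new weights and $k'$ are substituted, which completes the reduction and hence the coNP-completeness.
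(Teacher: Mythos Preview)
Your membership argument is fine and matches the paper's implicit one. The divergence is in the hardness direction.

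The paper's hardness proof is almost trivial: it formalizes the decision problem as $L_{\trail,U}$ over arbitrary weighted graphs with a distinguished set $U$, and reduces from $L_{\trail}$ by the map $(G,k,v)\mapsto(G,k,v,\{v\})$. Since every cyclic trail through $v$ trivially intersects $\{v\}$, the $U$-constraint becomes vacuous and the two problems coincide. There is no color encoding, no rank assignment, no threshold recomputation.

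You are attempting something strictly stronger: showing that hard instances already arise as genuine robustness-DPA instances with their color-derived weights. That is a more honest reading of the proposition, but it runs into an obstacle you do not address: the robustness automaton $\aut{P}^{\truerobustness}_L$ is a \emph{canonical} object distilled from a language $L$, not an arbitrary parity automaton with freely chosen ranks. You assign rank $0$ to every main-gadget vertex and rank $1$ to every clause vertex, but you give no argument that this structure is the robustness automaton of any $\omega$-regular language. Without that, your ``robustness DPA'' is just a DPA, and the reduction does not land in the intended class of instances. The threshold recomputation you defer is also non-trivial: the entry edges become yellow (weight $0$) while the exit edges become green (weight $C$), so the weight of a trail depends on how many clause visits it makes, not just on whether it visits each clause; the original gap argument of \autoref{prop:Ltrail-coNP-comp} does not carry over ``verbatim'' and would need to be redone.

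In short: the paper sidesteps all of this by treating $L_{\trail,U}$ as the target problem and dispatching hardness in one line. Your route is more ambitious but, as written, has a real gap at the ``this is a robustness DPA'' step.
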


\begin{proof} 

The problem discussed in \autoref{paragraph:RC-for-the-quantitative-case-circlearrowright} is a maximum problem.
Multiplying by $-1$ the weights described  for the maximum problem we obtain a minimum problem. 
Put as a decision problem, we get:
$$L_{\trail,U}= \left\{ (G,k,v,U)\,\left|\,\begin{array}{l} G=(V,E,\theta) \mbox{ is a weighted graph, } v\in V, U\subseteq V, k \in\mathbb{Q}\\
\mbox{and \textbf{every cyclic trail }} c \mbox{ going through } v \\ \mbox{that intersects } U
\mbox{ has mean weight} \geq k
\end{array}\right.
\right\}$$

We can show that $L_{\trail,U}$ is coNP-complete by a simple reduction from $L_{\trail}$.
Given $(G,k,v)$ the reduction function returns $(G,k,v,\{v\})$.
\end{proof}

\subsection{Proof of \autoref{prop:rc-for-reg-omega-quantitative}}\label{proof:prop:rc-for-reg-omega-quantitative}

\proprcforregomegaquantitative*

\begin{proof}
    The strong $\leadsto$-RC for $\valrbst_L$ is constructed as follows. 
    For each MSCC $S$ of $\aut{A}$ we find the accepting cycle with maximal mean-weight in $S$ and let $M_S$ be the set of all edges of such a cycle in $S$. Such cycles can be found by analyzing all simple cycles of $S$. Finding all simple cycles can be done using Johnson’s algorithm \cite{johnson75} in time polynomial in the number of states, edges, and simple cycles.\footnote{Note that the number of simple cycles can be exponential in the number of the states.} To ensure the cycle is accepted we verify its minimal state-rank is even. 
    Let $\theta_S$ be the maximal mean-weight of an accepting cycle in $S$.
    We want to propagate these weights backwards to all edges and states from which $M_S$ is reachable. 
    During this backwards propagation edges and states record the maximal robustness value that can be achieved on a word read from that edge or state. The robustness value is based on the colors of the maximal mean-weight cycle reachable from this point and the colors of the highest value spoke to this cycle. 
    Once this propagation is done, the value associated with a state $v$ is the maximal robustness value of a lasso word from $v$, call it $\theta_v$. 
    With each state $v$ we associate the set of letters $\Sigma_v$ on the first letter of such a word.
    At runtime, when on state $v$, the $\leadsto$-RC recommends $\Sigma_v$.

    The strong $\circlearrowright$-RC  is constructed as follows. 
    For each edge that is contained in an MSCC we compute the maximal mean-weight of an accepting \emph{cyclic trail} that the edge participates in. Finding all trails and their mean-weight can be done in $O(2^{m \log m})$ where $m=|E|$, by examining all possible edges combinations of length bounded by $m$. To ensure the trail is accepted we check its minimal state-rank. 
    With each state $v$ we associate a set $\Sigma_v$ that consists of letters on its outgoing edges with highest mean-weight. 
    The algorithm examines all trails and not only the shortest ones since, according to the $\circlearrowright$-RC definition, it is impossible that a shorter cyclic trail that is an infix of some longer cyclic-path and that has a higher mean-weight is missed. In such case both trails include current state and once we examine the trails of this state we find the shorter one that has higher mean-weight and recommend it. 
    For states that are not part of any MSCC we set $\Sigma_v$ to the empty set.
    At runtime, when on state $v$, the $\circlearrowright$-RC recommends $\Sigma_v$.
\end{proof}

\end{document}